\definecolor{red}{RGB}{255,0,0}
\definecolor{blue}{rgb}{0.0, 0.4, 0.65}
\definecolor{orange}{RGB}{253,188,64}
\definecolor{green}{RGB}{154,205,50}
\def\darkgreen{green}
\def\lightgreen{green!30}
\def\darkorange{orange}
\def\neutralorange{orange!60}
\def\lightorange{orange!20}
\setlist[itemize]{itemsep=0pt}
\setlist[enumerate]{itemsep=0pt}
\newcommand*{\normaltextpluscolon}[1]{{#1:}}
\setlist[description]{itemsep=0pt, font=\normalfont\normaltextpluscolon}
\newacronym{oi}{OI}{order-independent}
\newacronym{pands}{P\&S}{pass-and-swap}
\newacronym{fcfs}{FCFS}{first-come-first-served}
\newacronym{alis}{ALIS}{assign-to-the-longest-idle-slot}
\newacronym{fcfs-alis}{FCFS-ALIS}{first-come-first-served and assign-to-the-longest-idle-slot}
\newcommand\N{\mathbb{N}}
\newcommand\R{\mathbb{R}}
\newcommand\I{{\mathcal{I}}}
\newcommand\Se{{\mathcal{S}}}
\newcommand\K{{\mathcal{K}}}
\newcommand\T{{\mathcal{T}}}
\newcommand\A{{\mathcal{A}}}
\newcommand\limmu{{\bar\mu}}
\newcommand\X{{\mathcal{X}}}
\newcommand\Y{{\mathcal{Y}}}
\newcommand\C{{\mathcal{C}}}
\newcommand\D{{\mathcal{D}}}
\newtheorem{theorem}{Theorem}
\newtheorem{lemma}{Lemma}
\newtheorem{proposition}{Proposition}
\theoremstyle{definition}
\theoremstyle{remark}
\newtheorem{remark}{Remark}
\newtheorem{example}{Example}
\tikzset{
	class/.style={draw, minimum size=.7cm},
	server/.style={draw, circle, minimum size=.85cm},
	token/.style={draw, minimum size=.4cm},
	fcfs/.style={
		draw,
		rectangle split,
		rectangle split parts=#1,
		rectangle split horizontal,
		rectangle split empty part width=-.17cm,
		rectangle split empty part height=.65cm,
		inner ysep=.27cm,
	},
	darkgreen/.style={fill=\darkgreen},
	lightgreen/.style={fill=\lightgreen},
	darkorange/.style={fill=\darkorange},
	neutralorange/.style={fill=\neutralorange},
	lightorange/.style={fill=\lightorange},
}
\begin{document}

\title{Pass-and-Swap Queues\footnote{To appear in \textit{Queueing Systems: Theory and Applications}.}}
\author[1]{C\'eline Comte}
\author[2]{Jan-Pieter Dorsman\thanks{Corresponding author:
		Jan-Pieter Dorsman (\href{mailto:j.l.dorsman@uva.nl}{j.l.dorsman@uva.nl}).}}
\affil[1]{Eindhoven University of Technology,
	The Netherlands}
\affil[2]{University of Amsterdam,
	The Netherlands}
\maketitle

\begin{abstract}
	Order-independent (OI) queues, introduced by Berezner, Kriel, and Krzesinski in 1995, expanded the family of multi-class queues that are known to have a product-form stationary distribution by allowing for intricate class-dependent service rates. This paper further broadens this family by introducing pass-and-swap (P\&S) queues, an extension of OI queues where, upon a service completion, the customer that completes service is not necessarily the one that leaves the system. More precisely, we supplement the OI queue model with an undirected graph on the customer classes, which we call a swapping graph, such that there is an edge between two classes if customers of these classes can be \emph{swapped} with one another. When a customer completes service, it passes over customers in the remainder of the queue until it finds a customer it can swap positions with, that is, a customer whose class is a neighbor in the graph. In its turn, the customer that is ejected from its position takes the position of the next customer it can be swapped with, and so on. This is repeated until a customer can no longer find another customer to be swapped with; this customer is the one that leaves the queue. After proving that P\&S queues have a product-form stationary distribution, we derive a necessary and sufficient stability condition for (open networks of) P\&S queues that also applies to OI queues. We then study irreducibility properties of closed networks of P\&S queues and derive the corresponding product-form stationary distribution. Lastly, we demonstrate that closed networks of P\&S queues can be applied to describe the dynamics of new and existing load-distribution and scheduling protocols in clusters of machines in which jobs have assignment constraints. \\[.1cm]
	\textbf{Keywords:} Order-independent queue, product-form stationary distribution, network of queues, quasi-reversibility, first-come-first-served, assign-to-the-longest-idle-server.
\end{abstract}

\section{Introduction} \label{sec:intro}

Since the pioneering work of Jackson~\cite{Jackson1} in the 1950s, queueing networks with a product-form stationary distribution have played
a central role in the development of queueing theory~\cite{BD11,S99}.
In general,
the stationary distribution of a network
is said to have a \emph{product form} if it
can be written as the product of the stationary distributions of the queues
that compose this network.
Further examination of queueing networks with this property led to several breakthroughs, such as the discovery of BCMP~\cite{bcmp} and Kelly~\cite{KW82} networks, which demonstrated the broad applicability of these models.
In addition to implying statistical independence between queues, this product-form property is appealing for its potential for further performance analysis.

In a product-form queueing network, the notion of product form is also relevant at the level of an individual queue in the sense that, aside from the normalization constant, the stationary distribution of each queue is a product of factors, each of which corresponds to a customer in the queue~\cite{D11}.
Dedicated study of this type of product form has gained momentum recently, mainly because of the rising interest in queueing models with arbitrary customer-server compatibilities, in which not every server is able to fulfill the service requirement of any customer.
Such compatibility constraints
are often described by a bipartite graph
between customer classes and servers,
like that of
\figurename~\ref{fig:msq-graph}.
These models arise naturally in many timely applications, such as redundancy scheduling~\cite{BC17,G16} and load balancing~\cite{C19-1,C19-2}
in computer systems, resource management
in manufacturing systems and call centers~\cite{AW12,AW14}, and multiple instances of stochastic matching models~\cite{AKRW18,MBM20}.
Although the dynamics of these queues are rather intricate, their stationary distributions all have a product form, which facilitates exact derivation
of performance measures. A more complete overview of these results can be found in~\cite{GR20}.

\begin{figure}[ht]
	\centering
	\begin{tikzpicture}
		\def\width{1.8cm}
		\def\height{1.4cm}
		
		\node[class, fill=green!60] (c1) {1};
		\node[class, fill=orange!60] (c2)
		at ($(c1)+(\width,0)$) {2};
		
		\node[server] (s1)
		at ($(c1)-(.5*\width,0)-(0,\height)$) {1};
		\node[server] (s3)
		at ($(c1)!.5!(c2)-(0,\height)$) {3};
		\node[server] (s2)
		at ($(c2)+(.5*\width,0)-(0,\height)$) {2};
		
		\draw (s1) -- (c1) -- (s3) -- (c2) -- (s2);
		
		\node[align=left, anchor=west]
		at ($(s2.east)+(.8cm,0)+(0,\height)$)
		{Customer (or job) classes};
		\node[anchor=west]
		at ($(s2.east)+(.8cm,0)$)
		{Servers (or machines)};
		\phantom{
			\node[align=right, anchor=east]
			at ($(s1.west)-(.8cm,0)+(0,\height)$)
			{Customer or job classes};
			\node[anchor=east]
			at ($(s1.west)-(.8cm,0)$)
			{Servers or machines};
		}
	\end{tikzpicture}
	\caption{A compatibility graph between
		two customer classes
		and three servers.}
	\label{fig:msq-graph}
\end{figure}
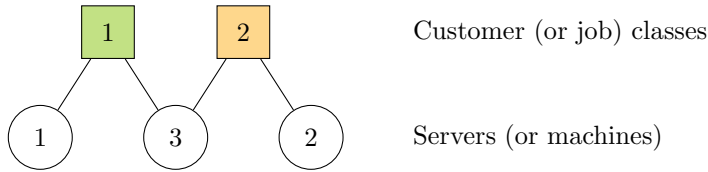

\paragraph{Order-independent queues and pass-and-swap queues}

A remarkable class of queues
that exhibit such a product form
is the class of \gls{oi} queues~\cite{BK95,BK96,K11}.
These are multi-class queues in which, at any point in time, the rate at which any customer completes service may depend on its own class
and the classes of the customers that arrived earlier. \gls{oi} queues owe their name to the fact that the overall service rate of all present customers, although it may depend on their classes, cannot depend on their arrival order.
It was shown in~\cite{BK95} that \gls{oi} queues have a product-form stationary distribution.
Furthermore, as observed in \cite{GR20},
many product-form results found
for the above-mentioned applications
resemble analogous results for \gls{oi} queues.
Several other studies
pointed out connections between
the above-mentioned applications
and placed them in a more general context
of product-form queueing models;
cf.\ \cite{AKRW18, ABV18}.

In this paper, we show that this class of queueing models can be extended in yet another direction, namely the routing of customers within the queue.
We do so by introducing pass-and-swap (P\&S) queues, which preserve the product-form stationary distribution of \gls{oi} queues while covering a wider range of applications.
The distinguishing feature of P\&S queues is a so-called \emph{swapping graph} on customer classes, such that there is an edge
between two classes if customers of these classes can be \emph{swapped} with one another.
Whenever a customer completes service, it scans the remainder of the queue, passes over subsequent customers that are of a non-swappable class, and swaps roles with the first customer of a swappable class, in the sense that it takes the place of this customer, to start another round of service. The ejected customer, in turn, scans the rest of the queue, possibly swapping with yet another customer, and so on. This is repeated until a customer can no longer find a customer to be swapped with. This is the customer that leaves the queue.

\paragraph{Applications to machine clusters}

We shall also see that,
in addition to their theoretical appeal,
P\&S queues can
be used to describe the dynamics of load-distribution and scheduling protocols in
a cluster of machines in which jobs have assignment constraints.
This cluster model, which
can represent various computer clusters
or manufacturing systems
in which not every machine is able
to fulfill the service requirement
of any job,
has played a central role in
several studies of product-form queueing models
over the past decade; see e.g.\ \cite{AW12,AW14,ABDV19,ABV18,BC17,C19-1,C19-2,GR20,G16}.
Roughly speaking, this cluster model can be interpreted as a P\&S queue where customers represent jobs and servers represent machines. To illustrate the diversity of protocols that
can be modeled this way,
we consider a machine cluster
described by the graph
of \figurename~\ref{fig:msq-graph}, with
two job classes
and three machines.
Machines~1 and~2 are dedicated
to classes~1 and~2, respectively,
while machine~3 is compatible with both classes.
These compatibility constraints, which
may for instance result from
data locality in data centers, lead to
different product-form queueing models
depending on the load-distribution
or scheduling protocol.
We now mention two protocols that
were studied in the literature and
can be described using P\&S queues.

With the first protocol,
each incoming job is immediately added
to the buffers of all its compatible machines,
and each machine applies
the \gls{fcfs} service discipline
to the jobs in its buffer.
A job may therefore be in service
on its two compatible machines
at the same time,
in which case its departure rate
is the sum of the service rates of
these two machines.
If the jobs of each class arrive
according to an independent Poisson process
and have independent
and exponentially distributed service requirements,
the corresponding queueing model
is an~\gls{oi} queue
(and therefore also a P\&S queue).
With the second protocol,
each incoming job
enters service on
its compatible machine
that has been idle the longest, if any,
otherwise the job is kept waiting in a central queue
until one of its compatible
machines becomes idle.
This protocol is called
first-come-first-served
and assign-to-the-longest-idle-server (FCFS-ALIS).
With similar assumptions on the arrival process
and job size distribution as before,
it was shown in~\cite{AW14} that,
although the corresponding queueing model
is \emph{not} an \gls{oi} queue,
we again obtain a product-form stationary distribution
that is similar to that
obtained under the first variant.

Although the two protocols
that we have just described
have already been analyzed,
we will shed new light on these protocols
by showing that they can be described
using P\&S queues. Even stronger, we will see that the framework of P\&S queues allows for extensions in different directions
that, as far as the authors are aware, have not been analyzed in the literature. For instance, going back to the toy example
of \figurename~\ref{fig:msq-graph}, it may happen that each incoming job is \emph{a priori} compatible with all machines,
but that it can only be assigned to two neighboring machines due to limited parallelism or other operational constraints.
This observation leads us to introduce a new
load-distribution protocol
in which an incoming job is only assigned to \emph{some} of its compatible machines and is subsequently processed in parallel on any subset of these machines.
This extension falls within the framework
of P\&S queues and, in fact,
it is the P\&S mechanism
that guided the design of this protocol.

In a similar vein, P\&S queues can be used to model redundancy scheduling \cite{G16,ABV18,ABDV19} in clusters of machines, where replicas of a job may be routed to (the buffers of) multiple machines, and redundant replicas are canceled
whenever a replica completes service at a machine (cancel-on-completion) or enters service at a machine (cancel-on-start). One can verify that the dynamics of the cancel-on-completion protocol are tantamount to the first aforementioned protocol, and, as such, can be modeled by an OI queue (and thus also by a P\&S queue). The cancel-on-start protocol is also covered by P\&S queues, as its dynamics are tantamount to those of the second aforementioned protocol~\cite{ABDV19}. Once again, in addition to covering these two existing protocols, P\&S queues can also be used to model new redundancy scheduling protocols. As an example, we introduce the \emph{cancel-on-commit} protocol which generalizes the cancel-on-start protocol. In this protocol, whenever a replica of a job becomes one of the $\ell_s$ oldest replicas in the buffer of a machine~$s$, this replica commits to machine~$s$ and all other replicas of the job are canceled. The cancel-on-start protocol corresponds to the special case where $\ell_s = 1$ for every machine.

\paragraph{Contributions}

Our contributions are as follows. We introduce P\&S queues and establish that, although these queues are a non-trivial generalization of \gls{oi} queues, the product form of the stationary distribution is preserved; this result is proved by careful inspection of the partial balance equations of the underlying Markov chain. This result paves the way for the performance analysis of several applications, such as those we described above, without resorting to scaling regimes. We also provide an easily verifiable necessary and sufficient stability condition for P\&S queues that also holds for \gls{oi} queues.
In addition, we study networks of P\&S queues. By establishing that P\&S queues are quasi-reversible~\cite{kelly,M72}, we show that open networks of P\&S queues exhibit a product-form stationary distribution under mild conditions on the routing process. We also study irreducibility properties of closed networks of P\&S queues and demonstrate that,
under particular assumptions, the stationary distribution of such closed networks also has a product form.
These closed networks form a class of independent interest, since we show later that they can be used to model finite-capacity queues
with token-based structures, akin to those of~\cite{ABDV19} and~\cite{C19-1,C19-2}.

\paragraph{Structure of the paper}

The remainder of the paper is organized as follows. Section~\ref{sec:oi} recalls results on \gls{oi} queues that were derived in~\cite{BK95,K11}. The P\&S queue is introduced in Section~\ref{sec:pands}, where the product form of its stationary distribution is also established.
After deriving complementary results on open (networks of) P\&S queues in Section~\ref{sec:results}, we turn to the analysis of
closed networks of P\&S queues in Section~\ref{sec:closed}.
We demonstrate the applicability of these models
to the modeling of resource-management protocols in Section~\ref{sec:app}.
Section~\ref{sec:ccl} concludes the paper.

\section{Order-independent queues} \label{sec:oi}

This section gives an overview of \gls{oi} queues, introduced in \cite{BK95} and later studied in \cite{K11}.
The results of this section were derived in these two seminal papers and act as a basis for extension in the remainder of the paper.

\subsection{Definition} \label{subsec:oi-def}

We consider a multi-class queue with
a finite set $\I = \{1, \ldots, I\}$ of customer classes.
For each $i \in \I$, class-$i$ customers enter the queue according to
an independent Poisson process with intensity $\lambda_i > 0$.
Customers are queued in their arrival order,
with the oldest customer at the head of the queue,
and are identified by their class.
For now, we assume that the queue has an infinite capacity
and that each customer leaves
the queue immediately upon service completion.

\paragraph{State descriptors}

We consider two state descriptors of this multi-class queue.
The queue \emph{state} represents the classes of customers in the queue in their arrival order.
More specifically, we consider the sequence
$c = (c_1,\ldots,c_n)$,
where $n$ is the total number of customers in the queue
and $c_p$ is the class of the $p$-th oldest customer,
for each $p \in \{1, \ldots, n\}$.
In particular, $c_1$ is the class of the oldest customer,
at the head of the queue. 
The empty state, with $n = 0$, is denoted by $\emptyset$.
The corresponding state space is
the Kleene closure~$\I^*$ of the set $\I$, that is, the set of sequences of finite length made up of elements of $\I$.
To each state $c \in \I^*$,
we associate a \emph{macrostate}
$|c| = (|c|_1, \ldots, |c|_I) \in \N^I$
that only retains the \emph{number} of present customers of each class, and does not keep track of their order in the queue.
As a result, for each $c \in \I^*$ and each $i \in \I$,
the integer $|c|_i$ gives the number of class-$i$ customers in state~$c$.
For each $x, y \in \N^I$,
we write $x \le y$
if $x_i \le y_i$ for each $i \in \I$.

\paragraph{Service rates}
We now explain the way in which service is provided to customers in an OI queue. This is done in such a way that the evolution of the state of the queue over time exhibits a memoryless property (and thus represents a Markov process). 
The overall service rate in state~$c$ is denoted by $\mu(c)$, for each $c \in \I^*$. This function $\mu$, defined on $\I^*$, is called the \emph{rate function} of the queue. Along with the individual rates of service provided to the customers in the queue, it satisfies the following two conditions.
First, the overall rate of service $\mu(c)$ provided
when the queue is in state~$c$ depends only
on the number of customers of each class that are present
and not on their arrival order.
In other words, for each $c, d \in \I^*$,
we have $\mu(c) = \mu(d)$ whenever $|c| = |d|$. 
For this reason, we shall also refer to $\mu(c)$ as $\mu(x)$
when $x$ is the macrostate corresponding to state~$c$.
Second, the service rate of each customer
is independent of (the number and classes of the) customers that are 
behind this customer in the queue.
In particular, for each $c = (c_1,\ldots,c_n) \in \I^*$ and $p \in \{1, \ldots, n\}$, the service rate of the customer in position~$p$ in state~$c$, of class $c_p$, is equal to the increment of the overall service rate induced by the arrival of this customer, denoted by
$$
\Delta\mu(c_1,\ldots,c_p) = \mu(c_1,\ldots,c_p) - \mu(c_1,\ldots,c_{p-1}),
$$
where we use the convention that
$(c_1, \ldots, c_{p-1}) = \emptyset$
when $p = 0$.
This implies in particular
that the function~$\mu$
is non-decreasing, in the sense that
$$
\mu(c_1,\ldots,c_n,i) \ge \mu(c_1,\ldots,c_n),
\quad \forall c = (c_1,\ldots,c_n) \in \I^*,
\quad \forall i \in \I.
$$
The service rate of the first $p$ customers in the queue, given by $\mu(c_1, \ldots, c_p) = \sum_{q=1}^p \Delta\mu(c_1, \ldots, c_q)$, depends neither on the classes of the customers in positions $p+1$ to $n$ nor even on the total number $n$ of customers in the queue.

We set $\mu(\emptyset) = 0$ since the queue exhibits
a zero departure rate when there are no customers in the queue.
We additionally assume that $\mu(c) > 0$ for each $c \neq \emptyset$.
In other words, we assume that the oldest customer
always receives a positive service rate, to ensure irreducibility of the Markov process describing the evolution of the state over time.

\begin{remark}
	The definition of \gls{oi} queues that we presented above is slightly more restrictive than that of \cite{BK95,K11}. Indeed, in these two papers, the overall service rate function $\mu$ is scaled by a factor that depends on the total number of customers in the queue.
	We omit this scaling factor for simplicity of notation and assume it to equal one. However, unless stated otherwise, the results in the sequel of this paper can be straightforwardly generalized to account for this factor.
\end{remark}

\paragraph{Examples}

As observed in \cite{BK95,K11},
the framework of \gls{oi} queues
encompasses several classical queueing models, such as the \gls{fcfs} and infinite-server queues of BCMP networks~\cite{bcmp}, as well as multiple queues with class-based compatibilities (see \cite{msccc, mshcc} for example).
In this paper, we will be especially interested in the following multi-server queues, introduced in \cite{G16} and identified as \gls{oi} queues in~\cite{BC17}.

\begin{example}[Multi-server queue] \label{ex:oi}
	Consider an infinite-capacity queue with
	a set $\I = \{1, \ldots, I\}$ of customer classes
	and a set $\Se = \{1, \ldots, S\}$ of servers.
	All customers have an exponentially-distributed
	size with unit mean and,
	for each $i \in \I$,
	class-$i$ customers enter the queue according to
	a Poisson process with rate $\lambda_i > 0$
	and can be processed
	by the servers of the set $\Se_i \subseteq \Se$.
	This defines a bipartite compatibility graph
	between customer classes and servers,
	in which there is an edge between a class and a server
	if this server can process customers of this class.
	In the example of \figurename~\ref{fig:msq-graph},
	servers~1 and~2 are dedicated
	to classes~1 and~2, respectively,
	while server~3
	is compatible with both classes.
	In this way, we have
	$\Se_1 = \{1,3\}$ and $\Se_2 = \{2,3\}$.
	
	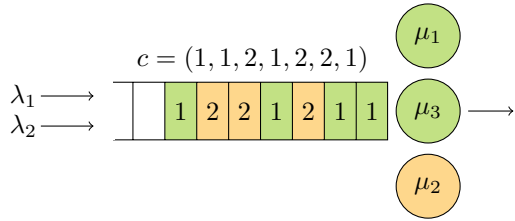
\begin{figure}[b]
		\centering
		\begin{tikzpicture}
			\def\width{1.2cm}
			\def\height{1cm}
			
			\node[fcfs=9,
			rectangle split part fill={
				white, white, green!60, orange!60, orange!60,
				green!60, orange!60, green!60, green!60
			}] (queue) {
				\nodepart{one}{\phantom{1}}
				\nodepart{two}{\phantom{1}}
				\nodepart{three}{1}
				\nodepart{four}{2}
				\nodepart{five}{2}
				\nodepart{six}{1}
				\nodepart{seven}{2}
				\nodepart{eight}{1}
				\nodepart{nine}{1}
			};
			\fill[white] ([xshift=-\pgflinewidth-1pt,
			yshift=-\pgflinewidth+1pt]queue.north west)
			rectangle ([xshift=\pgflinewidth+4pt,
			yshift=\pgflinewidth-1pt]queue.south west);
			\fill[white] ([xshift=-\pgflinewidth-1pt,
			yshift=-\pgflinewidth-.01pt]queue.north west)
			rectangle ([xshift=\pgflinewidth+5pt,
			yshift=\pgflinewidth+.01pt]queue.south west);
			
			\node[server, anchor=west, fill=green!60]
			(mu1)
			at ($(queue.east)+(.1cm,0)+(0,\height)$)
			{$\mu_1$};
			\node[server, anchor=west, fill=green!60]
			(mu2)
			at ($(queue.east)+(.1cm,0)$)
			{$\mu_3$};
			\node[server, anchor=west, fill=orange!60]
			(mu2)
			at ($(queue.east)+(.1cm,0)-(0,\height)$)
			{$\mu_2$};
			
			\node[anchor=south]
			at ($(queue.north)+(.1cm,0)$)
			{$c = (1,1,2,1,2,2,1)$};
			
			\draw[->] ($(queue.west)-(.8cm,0)+(0,.2cm)$)
			-- node[near start, anchor=east, xshift=-.1cm] (lambda1) {$\lambda_1$} ($(queue.west)-(.1cm,0)+(0,.2cm)$);
			\draw[->] ($(queue.west)-(.8cm,0)-(0,.2cm)$)
			-- node[near start, anchor=east, xshift=-.1cm] {$\lambda_2$}
			($(queue.west)-(.1cm,0)-(0,.2cm)$);
			
			\draw[->]
			($(mu1.east)!.5!(mu2.east)+(.1cm,0)$)
			-- ($(mu1.east)!.5!(mu2.east)+(.7cm,0)$);
			
			\node at ($(lambda1.west)-(.21cm,0)$) {};
			\node at ($(mu1.east)+(.7cm,0)+(.21cm,0)$) {};
			\phantom{%
				\node[class]
				at ($(mu2)-(0,.2*\height)$) {};
			}
		\end{tikzpicture}
		\caption{A queue state.
		The color of a server is a visual aid that
		indicates the class of the customer
		currently in service on this server.}
		\label{fig:msq-state}
	\end{figure}
	
	Each server applies the \gls{fcfs}
	discipline to the customers it can serve,
	so that each customer is in service
	on all the servers that can
	process this customer
	but not the customers that
	arrived earlier in the queue.
	For each $s \in \Se$,
	the service rate of server~$s$ is denoted by $\mu_s > 0$.
	When a class-$i$ customer is
	in service on a subset $\T \subseteq \Se_i$
	of its compatible servers,
	its service rate is $\sum_{s \in \T} \mu_s$.
	The overall service rate,
	equal to the sum of the service rates
	of the servers that can process at least one customer
	in the queue, is given by
	\begin{equation} \label{eq:mu}
		\mu(c_1, \ldots, c_n)
		= \sum_{s \in \bigcup_{p=1}^n \Se_{c_p}} \mu_s,
		\quad \forall (c_1, \ldots, c_n) \in \C.
	\end{equation}
	For each $p \in \{1, \ldots, n\}$,
	the service rate of the customer
	in position~$p$ is given by
	\begin{equation} \label{eq:delta-mu}
		\Delta\mu(c_1, \ldots, c_p)
		= \mu(c_1, \ldots, c_p) - \mu(c_1, \ldots, c_{p-1})
		= \sum_{s \in \Se_{c_p} \setminus \bigcup_{q=1}^{p-1}
			\Se_{c_q}} \mu_s.
	\end{equation}
	One can verify that
	this multi-server queue is an \gls{oi} queue.
	In the example
	of \figurename~\ref{fig:msq-state},
	the queue state
	is $c = (1,1,2,1,2,2,1)$.
	The oldest customer, of class~1,
	is in service on servers~1 and~3,
	at rate
	$\Delta\mu(1)
	= \mu(1) - \mu(\emptyset)
	= (\mu_1 + \mu_3) - 0
	= \mu_1 + \mu_3$.
	The second oldest customer, of class~1,
		is not in service on any server, and indeed we have $\Delta\mu(1,1) = \mu(1,1) - \mu(1) = 0.$
	The third oldest customer, of class~2,
	is in service on server~2, at rate
	$\Delta\mu(1,1,2)
	= \mu(1,1,2) - \mu(1,1)
	= (\mu_1 + \mu_2 + \mu_3) - (\mu_1 + \mu_3)
	= \mu_2$.
	The other customers
	have a zero service rate.
\end{example}

\subsection{Stationary analysis}
\label{subsec:oi-steady}

The evolution of the queue state leads to a Markov process with state space $\I^*$, and this Markov process is irreducible. Indeed, for any two states $c = (c_1, \ldots, c_n) \in \I^*$ and $d = (d_1, \ldots, d_m) \in \I^*$, the Markov process can first jump from state~$c$ to state $\emptyset$ as a result of $n$ transitions corresponding to departures of the customer at the head of the queue, and then from state $\emptyset$ to state~$d$ as a result of $m$ transitions corresponding to customer arrivals.

Theorem~\ref{theo:oi} below recalls that this Markov process has a product-form stationary distribution and that the \gls{oi} queue satisfies the quasi-reversibility property (cf.\ \cite[Section 3.2]{kelly}). This property implies that, when the Markov process associated with the queue state is stationary, the departure instants of the customers of each class form independent and stationary Poisson processes and that, at every instant, the current queue state is independent of the departure instants of customers prior to that instant.
Quasi-reversibility also implies that an open network of \gls{oi} queues connected by a Markovian routing policy has a product-form stationary distribution \cite[Theorem 3.7]{kelly} under mild conditions on the routing process
(namely, each customer can become part of any given class with a positive probability and eventually leaves the network with probability one). A similar result holds for closed networks of \gls{oi} queues under some irreducibility assumptions
\cite[Section 3.4]{kelly}. The interested reader is referred to \cite[Sections 3.2 and 3.4]{kelly} and \cite[Chapter 8]{S99} for a more complete account on quasi-reversibility.
The proof below can be found in \cite{BK95, K11} but we present it here for ease of later reference when we introduce the P\&S queue.

\begin{theorem} \label{theo:oi}
	Consider an \gls{oi} queue
	with a set $\I = \{1, \ldots, I\}$ of customer classes,
	per-class arrival rates $\lambda_1, \ldots, \lambda_I$,
	and a rate function $\mu$.
	A stationary measure of the Markov process
	associated with the state of this \gls{oi} queue is of the form
	\begin{equation} \label{eq:oi-pic}
		\pi(c_1, \ldots, c_n)
		= \pi(\emptyset)
		\prod_{p = 1}^{n}
		\frac{\lambda_{c_p}}{\mu(c_1, \ldots, c_p)}
		= \pi(\emptyset)
		\Phi(c) \prod_{i \in \I} {\lambda_i}^{|c|_i},
		\quad \forall (c_1, \ldots, c_n) \in \I^*,
	\end{equation}
	where $\Phi$ is the \emph{balance function} of the \gls{oi} queue,
	defined on $\I^*$ by
	\begin{equation} \label{eq:oi-Phic}
		\Phi(c_1, \ldots, c_n) = \prod_{p=1}^n \frac1{\mu(c_1, \ldots, c_p)},
		\quad \forall (c_1, \ldots, c_n) \in \I^*,
	\end{equation}
	and $\pi(\emptyset)$ is an arbitrary positive constant.
	The queue is stable if and only if
	\begin{equation} \label{eq:oi-stability}
		\sum_{c \in \I^*}
		\Phi(c) \prod_{i \in \I} {\lambda_i}^{|c|_i}
		< +\infty,
	\end{equation}
	in which case
	the queue is quasi-reversible
	and the stationary distribution
	of the Markov process
	associated with its state
	is given by~\eqref{eq:oi-pic}
	with
	\begin{equation} \label{eq:oi-empty}
		\pi(\emptyset) =
		\left(\sum_{c \in \I^*}		\Phi(c) \prod_{i \in \I} {\lambda_i}^{|c|_i}\right)^{-1}.
	\end{equation}
\end{theorem}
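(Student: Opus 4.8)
To prove the theorem I would verify that the measure $\pi$ defined in~\eqref{eq:oi-pic}--\eqref{eq:oi-Phic} --- which is positive and finite on all of $\I^*$ because $\lambda_i>0$ and $\mu(c)>0$ for $c\neq\emptyset$ --- satisfies the balance equations of the Markov process associated with the queue state, and in fact a stronger system of partial balance equations that simultaneously yields quasi-reversibility. The first observation is that~\eqref{eq:oi-pic} is equivalent to the recursion
\begin{equation} \label{eq:oi-rec-sk}
  \pi(c_1,\ldots,c_n)\,\mu(c_1,\ldots,c_n)=\lambda_{c_n}\,\pi(c_1,\ldots,c_{n-1}),\qquad \forall\,(c_1,\ldots,c_n)\in\I^*\text{ with }n\ge1,
\end{equation}
and that, since the total departure rate out of $c=(c_1,\ldots,c_n)$ equals $\sum_{p=1}^n\Delta\mu(c_1,\ldots,c_p)=\mu(c)$, this equation equates the probability flux leaving $c$ through a departure with the flux entering $c$ through the arrival from its unique predecessor $(c_1,\ldots,c_{n-1})$.

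The heart of the proof is the identity, valid for every $c=(c_1,\ldots,c_n)\in\I^*$ and every $i\in\I$,
\begin{equation} \label{eq:oi-qr-sk}
  \lambda_i\,\pi(c_1,\ldots,c_n)=\sum_{p=1}^{n+1}\pi(c_1,\ldots,c_{p-1},i,c_p,\ldots,c_n)\,\Delta\mu(c_1,\ldots,c_{p-1},i),
\end{equation}
whose right-hand side is precisely the probability flux entering $c$ through a class-$i$ departure: such a departure originates from one of the $n+1$ states obtained by inserting a class-$i$ customer into $c$, and by the order-independence of $\mu$ that customer's service rate in the larger state is $\Delta\mu(c_1,\ldots,c_{p-1},i)$. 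To prove~\eqref{eq:oi-qr-sk}, I would divide through by $\pi(c)$, use~\eqref{eq:oi-pic} and order-independence, and set $a_q:=\mu(c_1,\ldots,c_q)$ and $b_q:=\mu(c_1,\ldots,c_q,i)$ for $0\le q\le n$, so that $a_0=\mu(\emptyset)=0$ and $b_0=\mu(i)>0$; one then checks that
$$
  \frac{\pi(c_1,\ldots,c_{p-1},i,c_p,\ldots,c_n)}{\pi(c_1,\ldots,c_n)}=\frac{\lambda_i}{b_{p-1}}\prod_{q=p}^{n}\frac{a_q}{b_q}, \qquad \Delta\mu(c_1,\ldots,c_{p-1},i)=b_{p-1}-a_{p-1},
$$
so that, writing $r_q:=a_q/b_q$, the right-hand side of~\eqref{eq:oi-qr-sk} divided by $\pi(c)$ equals $\lambda_i\sum_{k=0}^{n}(1-r_k)\prod_{q=k+1}^{n}r_q$. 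Since $(1-r_k)\prod_{q=k+1}^{n}r_q=\prod_{q=k+1}^{n}r_q-\prod_{q=k}^{n}r_q$, this sum telescopes to $1-\prod_{q=0}^{n}r_q=1$, using $r_0=0$. I expect this telescoping to be the main obstacle: one must keep careful track of how $\mu$ changes under a single-customer insertion before the collapsing structure becomes apparent, and this is exactly where the hypotheses that $\mu$ is order-independent, non-decreasing, and equal to zero only at the empty state come into play.

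Granting~\eqref{eq:oi-rec-sk} and~\eqref{eq:oi-qr-sk}, the remaining steps are routine. Summing~\eqref{eq:oi-qr-sk} over $i\in\I$ and adding~\eqref{eq:oi-rec-sk} reproduces the full balance equation at every state $c$ (for $c=\emptyset$ the $\mu$-term vanishes and~\eqref{eq:oi-qr-sk} alone suffices), so $\pi$ is a stationary measure and~\eqref{eq:oi-pic} holds. Equation~\eqref{eq:oi-qr-sk} moreover states that the probability flux into any state $c$ caused by class-$i$ departures equals $\lambda_i\pi(c)$, which---together with the fact that class-$i$ arrivals occur at the state-independent rate $\lambda_i$---is precisely the partial balance condition characterizing quasi-reversibility (cf.\ \cite[Section~3.2]{kelly}), with each class-$i$ departure process Poisson of rate $\lambda_i$. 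Finally, the Markov process being irreducible (as noted just before the statement) and admitting the stationary measure $\pi$, it is positive recurrent---that is, stable---if and only if $\pi$ has finite total mass, i.e.\ if and only if~\eqref{eq:oi-stability} holds; in that case the stationary measure is unique up to a multiplicative constant, which forces $\pi(\emptyset)$ to be given by~\eqref{eq:oi-empty}, while if~\eqref{eq:oi-stability} fails no stationary distribution exists and the queue is unstable.
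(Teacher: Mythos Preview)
Your proof is correct and follows the same overall architecture as the paper: you verify the same pair of partial balance equations (your~\eqref{eq:oi-rec-sk} and~\eqref{eq:oi-qr-sk} are exactly the paper's~\eqref{eq:oi-partial-balance-1} and~\eqref{eq:oi-partial-balance-2}), and you draw the same conclusions about stationarity, quasi-reversibility, and stability. The genuine difference lies in how you establish the key identity~\eqref{eq:oi-qr-sk}. The paper proceeds by induction on the queue length~$n$: it peels off the last customer, applies the induction hypothesis to $(c_1,\ldots,c_{n-1})$, and recombines. You instead compute the ratio $\pi(c_1,\ldots,c_{p-1},i,c_p,\ldots,c_n)/\pi(c)$ explicitly, introduce $r_q=\mu(c_1,\ldots,c_q)/\mu(c_1,\ldots,c_q,i)$, and observe that the resulting sum $\sum_{k=0}^n(1-r_k)\prod_{q>k}r_q$ telescopes to $1$ because $r_0=0$. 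Your route is more direct and makes the role of the hypotheses (order-independence gives the product structure in the ratio; $\mu(\emptyset)=0$ gives $r_0=0$) completely transparent, with no inductive bookkeeping. The paper's inductive formulation, on the other hand, is deliberately set up to be reused: the induction on~$n$ for~\eqref{eq:oi-partial-balance-3} becomes the base case of a second induction (on the number of customers involved in a swap) when the authors extend the result to pass-and-swap queues in Appendix~\ref{app:pands}, whereas your telescoping identity does not generalize as readily to that setting.
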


\begin{proof}
	We will first verify that
	any measure $\pi$
	of the form~\eqref{eq:oi-pic}
	satisfies the following
	partial balance equations
	in each state
	$c = (c_1,\ldots,c_n) \in \I^*$:
	\begin{itemize}
		\item Equalize the flow
		out of state~$c$
		due to a departure
		with the flow
		into state~$c$
		due to an arrival (if $c \neq \emptyset$):
		\begin{equation} \label{eq:oi-partial-balance-1}
			\pi(c) \mu(c) = \pi(c_1,\ldots,c_{n-1}) \lambda_{c_n}.
		\end{equation}
		\item Equalize, for each $i \in \I$,
		the flow out of state~$c$
		due to the arrival of a class-$i$ customer
		with the flow into state~$c$
		due to the departure of a customer of this class:
		\begin{align} \label{eq:oi-partial-balance-2}
			\pi(c) \lambda_i
			= \sum_{p=1}^{n+1} \pi(c_1, \ldots, c_{p-1}, i, c_p, \ldots, c_n)
			\, \Delta\mu(c_1, \ldots, c_{p-1}, i).
		\end{align}
	\end{itemize}
	This verification will readily imply
	that the stationary measures
	of the Markov process associated with
	the queue state
	are of the form~\eqref{eq:oi-pic}
	and that the queue, when stable,
	is quasi-reversible.
	Indeed, the global balance equations
	of the Markov process
	associated with the queue state
	follow from the partial balance equations~\eqref{eq:oi-partial-balance-1}
	and~\eqref{eq:oi-partial-balance-2}
	by summation.
	Moreover,
	to prove that the queue is quasi-reversible,
	it suffices to verify
	that the stationary measures of
	the Markov process
	associated with the queue state
	satisfy the partial balance equations~\eqref{eq:oi-partial-balance-2}.
	This is a consequence of
	\cite[Equations (3.8) to (3.11)]{kelly}
	and of the fact that
	the customers of each class
	enter the queue according to an
	independent and stationary Poisson process.
	
	We now verify that
	the measures of the form~\eqref{eq:oi-pic}
	satisfy the partial balance
	equations~\eqref{eq:oi-partial-balance-1}
	and~\eqref{eq:oi-partial-balance-2}.
	Equation \eqref{eq:oi-partial-balance-1} follows immediately from \eqref{eq:oi-pic}	and~\eqref{eq:oi-Phic}.
	The case
	of~\eqref{eq:oi-partial-balance-2}
	is more intricate.
	First observe that
	the measures~$\pi$ of the form~\eqref{eq:oi-pic}
	satisfy~\eqref{eq:oi-partial-balance-2}
	if and only if
	the balance function~$\Phi$
	given by~\eqref{eq:oi-Phic}
	satisfies the following equation
	in each state
	$c = (c_1, \ldots, c_n) \in \I^*$:
	\begin{align} \label{eq:oi-partial-balance-3}
		\Phi(c)
		= \sum_{p=1}^{n+1}
		\Phi(c_1, \ldots, c_{p-1}, i, c_p, \ldots, c_n)
		\, \Delta\mu(c_1, \ldots, c_{p-1}, i),
		\quad \forall i \in \I.
	\end{align}
	We show that $\Phi$ satisfies this equation
	by induction over the queue length~$n$.
	For the base step, with $n = 0$,
	it suffices to observe that,
	for each $i \in \I$,
	the right-hand side of~\eqref{eq:oi-partial-balance-3}
	simplifies to $\Phi(i) \mu(i)$,
	which is equal to $\Phi(\emptyset)$
	by~\eqref{eq:oi-Phic}.
	Now let $n \ge 1$ and assume that~\eqref{eq:oi-partial-balance-3}
	is satisfied
	for each state~$c$ of length $n-1$.
	Consider a state~$c$ of length $n$
	and let $i \in \I$.
	The first $n$ terms in the sum
	on the right-hand side of~\eqref{eq:oi-partial-balance-3}
	can be rewritten as follows:
	\begin{align}
		\nonumber
		&\sum_{p=1}^{n}
		\Phi(c_1, \ldots, c_{p-1}, i, c_p, \ldots, c_n)
		\, \Delta\mu(c_1, \ldots, c_{p-1}, i) \\
		\nonumber
		&= \frac1{\mu(c_1, \ldots, c_n, i)}
		\sum_{p=1}^{n}
		\Phi(c_1, \ldots, c_{p-1}, i, c_p, \ldots, c_{n-1})
		\, \Delta\mu(c_1, \ldots, c_{p-1}, i), \\
		\nonumber
		&= \frac1{\mu(c_1, \ldots, c_n, i)}
		\Phi(c_1, \ldots, c_{n-1}), \\
		\label{eq:oi-partial-balance-4}
		&= \Phi(c_1, \ldots, c_n)
		\frac{\mu(c_1, \ldots, c_n)}{\mu(c_1, \ldots, c_n, i)},
	\end{align}
	where the first and last equalities follow
	from~\eqref{eq:oi-Phic}
	and the order independence of $\mu$,
	while the second equality is obtained by
	applying the induction assumption
	to state $(c_1, \ldots, c_{n-1})$
	and class~$i$.
	By~\eqref{eq:oi-Phic}, we also have that
	\begin{align} \label{eq:oi-partial-balance-5}
		\Phi(c_1, \ldots, c_n, i)
		\, \Delta\mu(c_1, \ldots, c_n, i)
		= \Phi(c_1, \ldots, c_n)
		\, \frac{\mu(c_1, \ldots, c_n, i) 
			- \mu(c_1, \ldots, c_n)}
		{\mu(c_1, \ldots, c_n, i)},
	\end{align}
	so that summing~\eqref{eq:oi-partial-balance-4} and~\eqref{eq:oi-partial-balance-5} yields \eqref{eq:oi-partial-balance-3}.
	This concludes the proof by induction.
	
	The stability condition~\eqref{eq:oi-stability} is equivalent to the statement that $\sum_{c\in\I^*} \pi(c) / \pi(\emptyset)$ is finite for any stationary measure $\pi$, which is indeed necessary
	and sufficient for ergodicity.
	Equation~\eqref{eq:oi-empty} guarantees that the stationary distribution sums to unity.
\end{proof}

\section{Pass-and-swap queues} \label{sec:pands}

This section contains our first main contribution.
Pass-and-swap (P\&S) queues,
obtained by supplementing
\gls{oi} queues with an
additional mechanism
when customers complete service,
are defined
in Section~\ref{subsec:pands-def}.
In contrast to signals
and negative customers
considered for quasi-reversible
queues~\cite{C11},
the P\&S mechanism
occurs upon a service completion
and can move several customers
at the same time within the queue.
Section~\ref{subsec:pands-steady}
shows that both
the product-form nature of the
stationary measure of the Markov process
associated with the state
and the quasi-reversibility property
of the queue
are preserved by this mechanism.

\subsection{Definition} \label{subsec:pands-def}

As before, the set of customer classes is denoted by $\I = \{1, \ldots, I\}$
and we adhere to the state descriptors:
the state $c = (c_1, \ldots, c_n) \in \I^*$ gives
the classes of customers
as they are ordered in the queue
and the macrostate
$|c| = (|c|_1, \ldots, |c|_I) \in \N^I$
gives the numbers of customers of each class.
Likewise, the customer arrival processes
and completion times
are as defined in Section~\ref{subsec:oi-def}.
We however part with the assumption that a customer
that completes service leaves the queue directly
and that all customers behind move forward one position.
With the mechanism that we will now define,
each service completion will potentially trigger
a chain reaction within the queue.
More precisely, a customer that completes service
may take another customer's position further down the queue
and require a new round of service in this position.
The customer ejected from this position may in turn take the position of another customer further down the queue, and so on.
The decision of which customer
replaces which other customer
is driven by the pass-and-swap mechanism
described below.

\paragraph{Pass-and-swap mechanism}

We supplement the \gls{oi} queue
with an undirected graph that
will be called the \emph{swapping graph} of the queue.
The vertices of this graph represent the customer classes. For each $i, j \in \I$, when there is an edge between classes~$i$ and~$j$ in the graph, a class-$i$ customer can take the position of
a class-$j$ customer upon service completion. In that case, we say that (the customers of) classes~$i$ and $j$ are mutually \emph{swappable}.
Observe that the graph is undirected, meaning that the swapping relation is symmetric: if class~$i$ can be swapped with class~$j$, then class~$j$ can also be swapped with class~$i$.
Also observe that the graph
may contain a loop,
that is, an edge that connects a node to itself,
in which case two customers
of the corresponding class
can also be swapped with one another.
For each $i \in \I$, we let $\I_i \subseteq \I$ denote the set of neighbors of class~$i$ in the graph,
that is, the set of classes that are swappable with class~$i$.

Based on this graph,
the pass-and-swap mechanism
is defined as follows.
A customer whose service is complete scans the rest of the queue, passes over subsequent customers that it cannot swap positions with, and replaces the first swappable customer, if any.
This ejected customer, in turn, scans the rest of the queue and replaces the first swappable customer afterwards. This is repeated until an ejected customer finds no customers in the remainder of the queue it can swap with.
In this case, the customer leaves the queue.

More formally, let
$c = (c_1, \ldots, c_n) \in \I^*$
denote a queue state.
Assume that the service of the customer
in some position $p_1 \in \{1, \ldots, n\}$ completes
and let $i_1 = c_{p_1}$ denote the class of this customer.
If there is at least one position
$q \in \{p_1 + 1, \ldots, n\}$
such that $c_q \in \I_{i_1}$,
we let $p_2$ denote the smallest of these positions
and $i_2 = c_{p_2}$ the class of the corresponding customer.
The class-$i_1$ customer
that was originally in position~$p_1$
replaces the class-$i_2$ customer
in position~$p_2$,
and this class-$i_2$ customer is ejected.
If there is at least one position
$q \in \{p_2 + 1, \ldots, n\}$
such that $c_q \in \I_{i_2}$,
we let $p_3$ denote
the smallest of these positions
and $i_3 = c_{p_3}$ the class
of the corresponding customer.
The class-$i_2$ customer
that was originally in position~$p_2$
replaces the class-$i_3$ customer
in position~$p_3$,
and this class-$i_3$ customer is ejected.
Going on like this, we define recursively
$p_{v+1} = \min\{ q \ge p_v + 1:
c_q \in \I_{c_{p_v}} \}$
for each $v \in \{1, \ldots, u-1\}$,
where $p_u \in \{p, \ldots, n\}$
is the position of
the first ejected customer
that can no longer replace another
subsequent customer in the queue,
that is, for which there is
no $q \in \{p_u + 1, \ldots, n\}$
such that $c_q \in \I_{c_{p_u}}$.
This customer is the one
that leaves the queue.
The integer $u \in \{1, \ldots, n - p + 1\}$ gives
the total number of customers
that are involved in the transition,
and the state reached
after this transition is
$$
(c_1,\ldots,c_{p_1-1},c_{p_1+1},\ldots,c_{p_2-1},i_1,
c_{p_2+1},\ldots,c_{p_3-1},i_2,c_{p_3+1},\ldots,
c_{p_u-1},i_{u-1},c_{p_u+1},\ldots,c_n).
$$
Observe that the transition is recorded
as a departure of a class-$i_u$ customer
and not as a departure of
a class-$i_1$ customer in general.
In the special case where $u = 1$,
the customer that completes service
cannot replace any subsequent customer,
so that this customer leaves the queue.
An \gls{oi} queue
supplemented with the pass-and-swap mechanism is called a \gls{pands} queue. The stochastic process keeping track of the state over time in the \gls{pands} queue has the Markov property, just like the \gls{oi} queue. For both the \gls{oi} and the \gls{pands} queue, however, the stochastic process that describes the macrostate over time does not yield any such Markov property in general.

\begin{example}[Multi-server queue] \label{ex:pands}
	We give a toy example
	that illustrates the \gls{pands} mechanism.
	More concrete applications
	will be described in Section~\ref{sec:app}.
	Consider a multi-server queue
	with the compatibility graph
	shown in \figurename~\ref{subfig:pands-compatibility} and
	the swapping graph
	shown in \figurename~\ref{subfig:pands-toy-graph}.
	Customers of
		classes~1 and~2 can be processed
	only by servers~1 and~2, respectively,
	while class-3 customers
	can be processed by both servers.
	Class-$2$ customers
	can be swapped with
	customers of classes $1$ and $3$
	but customers of
	classes $1$ and $3$ cannot
	be swapped with one another.
	Assume that the queue is
	in the state $c = (1,3,3,2,2,3,1,2)$
	depicted in \figurename~\ref{subfig:pands-toy-state},
	and that the customer in first position,
	of class~$1$, completes service.
	The corresponding chain reaction
	is depicted on the same figure by arrows.
	Class~1 can only be swapped with class~2
	and the first subsequent
	class-2 customer is in the fourth position.
	Therefore, the class-1 customer
	that completes service
	is passed along the queue
	up until the fourth position
	and is swapped
	with the class-2 customer
	at this position.
	The ejected customer is of class~2,
	and class~2 can be swapped with
	classes~1 and 3.
	Therefore, the ejected customer
	is passed
	along the queue up until the sixth position
	and is swapped with
	the class-3 customer
	at this position.
	We repeat this with
	the ejected class-3 customer,
	which replaces the last class-2 customer,
	which leaves the queue.
	The transition is labeled as
	a departure of a class-$2$ customer
	and the new queue state
	is $d = (3,3,1,2,2,1,3)$.
	
	\begin{figure}[ht]
		\centering
		\subfloat[Compatibility graph.
		\label{subfig:pands-compatibility}]{%
			\begin{tikzpicture}
				\def\width{1.8cm}
				\def\height{1.4cm}
				
				\node[class, fill=green!60] (c1) {1};
				\node[class, fill=yellow!60] (c3)
				at ($(c1)+(\width,0)$) {3};
				\node[class, fill=orange!60] (c2)
				at ($(c3)+(\width,0)$) {2};
				
				\node[server] (s1)
				at ($(c1)!.5!(c3)-(0,\height)$) {1};
				\node[server] (s2)
				at ($(c2)!.5!(c3)-(0,\height)$) {2};
				
				\draw (c1) -- (s1) -- (c3) -- (s2) -- (c2);
			\end{tikzpicture}
		}
		\hfill
		\subfloat[Swapping graph.
		\label{subfig:pands-toy-graph}]{
			\begin{tikzpicture}
				\def\height{1cm}
				
				\node[class, fill=green!60] (1) {1};
				\node[class, fill=orange!60] (2)
				at ($(1)-(0,\height)$) {2};
				\node[class, fill=yellow!60] (3)
				at ($(2)-(0,\height)$) {3};
				
				\draw[-] (1) -- (2) -- (3);
				
				\node at ($(1.west)-(.75cm,0)$) {};
				\node at ($(1.east)+(.75cm,0)$) {};
			\end{tikzpicture}
		}
		\hfill
		\subfloat[A queue state and the transition that occurs upon service completion of the first class-1 customer. \label{subfig:pands-toy-state}]{
			\begin{tikzpicture}
				\def\width{1.2cm}
				\def\height{1cm}
				
				\node[fcfs=10,
				rectangle split part fill
				={white, white, orange!60, green!60, yellow!60, orange!60, orange!60, yellow!60, yellow!60, green!60},
				] (queue) {
					\nodepart{one}{\phantom{1}}
					\nodepart{two}{\phantom{1}}
					\nodepart{three}{2}
					\nodepart{four}{1}
					\nodepart{five}{3}
					\nodepart{six}{2}
					\nodepart{seven}{2}
					\nodepart{eight}{3}
					\nodepart{nine}{3}
					\nodepart{ten}{1}
				};
				\fill[white] ([xshift=-\pgflinewidth-1pt,yshift=-\pgflinewidth+1pt]queue.north west)
				rectangle ([xshift=\pgflinewidth+4pt,yshift=\pgflinewidth-1pt]queue.south west);
				\fill[white] ([xshift=-\pgflinewidth-1pt,yshift=-\pgflinewidth-.01pt]queue.north west)
				rectangle ([xshift=\pgflinewidth+5pt,yshift=\pgflinewidth+.01pt]queue.south west);
				
				\node[server, anchor=west, fill=green!60]
				(mu1)
				at ($(queue.east)+(.1cm,0)+(0,.5*\height)$)
				{$\mu_1$};
				\node[server, anchor=west, fill=yellow!60]
				(mu2)
				at ($(queue.east)+(.1cm,0)-(0,.5*\height)$)
				{$\mu_2$};
				
				\draw[->] ($(queue.west)-(.8cm,0)+(0,.25cm)$) --
				node[near start, anchor=east, xshift=-.1cm] (lambda1)
				{$\lambda_1$} ($(queue.west)-(.1cm,0)+(0,.25cm)$);
				\draw[->] ($(queue.west)-(.8cm,0)-(0,.25cm)$) --
				node[near start, anchor=east, xshift=-.1cm]
				{$\lambda_2$} ($(queue.west)-(.1cm,0)-(0,.25cm)$);
				\draw[->]
				($(mu1.east)!.5!(mu2.east)+(.1cm,0)$)
				-- node[pos=1] (departure) {}
				($(mu1.east)!.5!(mu2.east)+(.7cm,0)$);
				
				\node[anchor=south, token, fill=green!60] (t1)
				at ($(queue.ten north)+(0,.6cm)$)
				{\scriptsize 1};
				\node[anchor=south, token] (t2)
				at ($(queue.seven north)+(0,.6cm)$) {};
				\node[anchor=south, token] (t3)
				at ($(queue.five north)+(0,.6cm)$) {};
				\node[anchor=south, token] (t4)
				at ($(queue.three north)+(0,.6cm)$) {};
				\node[anchor=south, token, fill=orange!60] (t5)
				at ($(queue.one north)+(0,.6cm)$)
				{\scriptsize 2};
				
				\draw[->] (queue.ten north) -- (t1);
				
				\draw[->] (t1)
				-- node[midway, fill=white, inner sep=.04cm]
				{\scriptsize pass}
				(t2);
				\draw ($(t2.south)+(0.08cm,0)$)
				edge[bend left, ->]
				($(queue.seven north)+(0.08cm,0)$);
				\draw ($(queue.seven north)-(0.08cm,0)$)
				edge[bend left, ->]
				node[fill=white, text=white,
				midway, xshift=.14cm, yshift=.03cm,
				inner sep=.044cm]
				{\scriptsize swaa}
				node[midway, xshift=.14cm, inner sep=.04cm]
				{\scriptsize swap}
				($(t2.south)-(0.08cm,0)$);
				
				\draw[->] (t2) -- (t3);
				\draw ($(t3.south)+(0.08cm,0)$)
				edge[bend left, ->]
				($(queue.five north)+(0.08cm,0)$);
				\draw ($(queue.five north)-(0.08cm,0)$)
				edge[bend left, ->]
				($(t3.south)-(0.08cm,0)$);
				
				\draw[->] (t3) -- (t4);
				\draw ($(t4.south)+(0.08cm,0)$)
				edge[bend left, ->]
				($(queue.three north)+(0.08cm,0)$);
				\draw ($(queue.three north)-(0.08cm,0)$)
				edge[bend left, ->]
				($(t4.south)-(0.08cm,0)$);
				
				\draw[->] (t4) -- (t5);
				
				\node at ($(lambda1.west)-(.0cm,0)$) {};
				\node at ($(departure.east)+(.0cm,0)$) {};
			\end{tikzpicture}
		}
		\caption{Toy example of a \gls{pands} queue.}
		\label{fig:pands-toy}
	\end{figure}
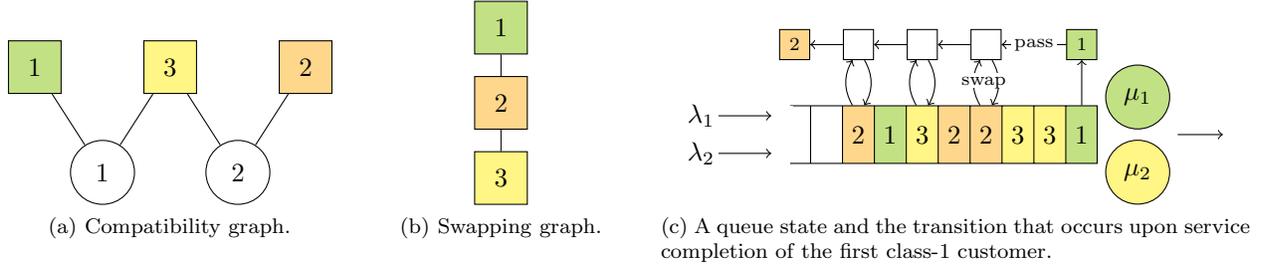
\end{example}

\begin{example}[Single-server
	last-come-first-served queue]
	The following example shows that
	the \gls{pands} mechanism
	can also be used to emulate
	a single-server queue with the
	last-come-first-served preemptive-resume policy.
	Consider a \gls{pands} queue
	with a set $\I = \{1, \ldots, I\}$
	of customer classes.
	Assume that the service rate
	is a positive constant $\mu$,
	independent of the numbers
	of customers of each class in the queue
	(provided that the queue is not empty),
	so that only the customer
	at the head of the queue
	has a positive service rate.
	If the swapping graph is empty,
	then the customer at the head of the queue
	leaves immediately upon service completion,
	and the queue behaves
	like a multi-class single-server queue
	with the first-come-first-served policy.
	On the contrary,
	if the swapping graph is complete
	(with in particular loops at all nodes),
	then the service completion
	of the customer at the head of the queue
	triggers a cascading effect whereby
	every customer swaps position
	with its successor in the queue,
	so that the customer that leaves
	is the one at the tail of the queue.
	Therefore, the queue behaves
	like a multi-class single-server queue
	with the \emph{last}-come-first-served
	preemptive-resume policy.
	Note that, unfortunately, this example
	only works because the service time
	of each customer is exponentially distributed
	with a mean that is independent of its class;
	in particular, this example does not explain
	the insensitivity of the last-come-first-served
	preemptive-resume policy
	to the distribution of the service times.
\end{example}

\paragraph{Additional comments}

The \gls{pands} mechanism is
applied instantaneously
upon a service completion.
Replacements are performed
from the front to the back of the queue,
so that,
when a customer is ejected
from some position $p \in \{1,\ldots,n\}$,
this customer never replaces a customer
at a position $q \in \{1, \ldots, p-1\}$,
even if $c_q \in \I_{c_p}$.
Also note that customer classes
now have a dual role:
they determine not only the service rate
received by each customer
through the rate function $\mu$,
but also the chain reaction
that happens upon each service completion
through the swapping graph.
One could also associate two classes
with each customer,
one that determines its service rate
and another that determines its swapping relations.
Finally, the original \gls{oi} queue,
in which the customer that completes service
is the one that leaves the queue,
is obtained by applying
the \gls{pands} mechanism based on a swapping graph without edges.
Therefore, an \gls{oi} queue is
also a \gls{pands} queue, so that the results that
we will derive for \gls{pands} queues in the sequel
also apply to \gls{oi} queues.

The following notation will be useful.
We write
$\delta_p(c) = (d, i)$
if the service completion of the customer
in position~$p$ in state~$c$
leads to state~$d$
and triggers the departure
of a class-$i$ customer,
for each $n \ge 1$,
$c = (c_1, \ldots, c_n) \in \I^*$,
	$d = (d_1, \ldots, d_{n-1}) \in \I^*$,
$p \in \{1, \ldots, n\}$,
and $i \in \I$.
In Example~\ref{ex:pands} for instance,
we have
$\delta_1(c) = ((3,3,1,2,2,1,3), 2)$.
With a slight abuse of notation,
we also write $\delta_p(c) = i$
if we only want to specify
that the departing customer
is of class~$i$.

\subsection{Stationary analysis}
\label{subsec:pands-steady}

The Markov process associated
with the queue state
on the state space $\I^*$ is irreducible.
Indeed, given any states
$c = (c_1, \ldots, c_n) \in \I^*$
and $d = (d_1, \ldots, d_m) \in \I^*$,
the Markov process can again jump
from state~$c$ to state~$\emptyset$
thanks to $n$ transitions corresponding to departures
(for instance, triggered by
the service completion
of the customer at the head of the queue),
and then from state~$\emptyset$ to state~$d$
thanks to $m$ transitions corresponding to arrivals.

Theorem~\ref{theo:pands} below
shows that introducing
the \gls{pands} mechanism
does actually not change
the stationary distribution
of this Markov process
compared to the original \gls{oi} queue.
In particular,
this stationary distribution
is independent of the swapping graph.
Intuitively,
this can be thought of
as a consequence of
the symmetric property
of the swapping relation.
Another implication of
Theorem~\ref{theo:pands} is that
the consequences of
the quasi-reversibility property
stated in Section~\ref{subsec:oi-steady}
for \gls{oi} queues
also apply to \gls{pands} queues.
As a result,
an open network of \gls{pands} queues
connected by a random routing process
has a product-form stationary distribution.
The only peculiarity is that,
when a customer completes service
in a \gls{pands} queue,
this customer is not necessarily the one
that leaves this queue.
The case of closed networks
is more complicated and
will be considered
in Section~\ref{sec:closed}.
The sketch of the proof given below
is completed in Appendix~\ref{app:pands}.

\begin{theorem} \label{theo:pands}
	The results of Theorem~\ref{theo:oi} remain valid
	if we replace ``\gls{oi} queue'' with ``\gls{pands} queue''.
\end{theorem}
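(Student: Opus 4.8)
The plan is to mimic the structure of the proof of Theorem~\ref{theo:oi}, reducing everything to the verification of two families of partial balance equations for a measure~$\pi$ of the form~\eqref{eq:oi-pic}. The first equation, equalizing the departure flow out of state~$c$ with the arrival flow into~$c$, must now be written as $\pi(c)\mu(c) = \sum \pi(c')\lambda_{c_n'}$ summed over all states $c'$ of length $n+1$ together with a position $p$ such that $\delta_p(c') = (c, \cdot)$ — but in fact, as in the \gls{oi} case, it is cleaner to keep the \emph{per-position} version: for each $p \in \{1,\dots,n\}$, the flow into~$c$ by an arrival that is later relabeled, balanced against a departure. The second equation is the quasi-reversibility equation~\eqref{eq:oi-partial-balance-2} adapted to the new dynamics: for each class~$i$, the flow out of~$c$ due to a class-$i$ arrival equals the flow into~$c$ due to a transition that is \emph{recorded} as a class-$i$ departure. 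Since the departing class is $i_u$, not the class of the customer whose service completes, the right-hand side of~\eqref{eq:oi-partial-balance-2} becomes a sum over all states~$d$ of length $n+1$ and positions~$p$ with $\delta_p(d) = (c,i)$, weighted by $\Delta\mu$ evaluated at the prefix of~$d$ up to and including the customer in position~$p$. As before, once both families hold, global balance follows by summation, and quasi-reversibility follows from the second family together with the Poisson arrival assumption, via \cite[Equations (3.8)--(3.11)]{kelly}; the stability statement and the normalization are then identical to the \gls{oi} case because the candidate measure~\eqref{eq:oi-pic} is literally unchanged.

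The first family of partial balance equations should still be essentially immediate: given a target state~$c$ and a position~$p$, the pass-and-swap transitions landing in~$c$ via a service completion in position~$p'$ of some longer state~$d$ can be enumerated by ``running the chain reaction backwards'', and the ratio $\pi(d)/\pi(c)$ telescopes against the $\lambda$ and $\mu$ factors in~\eqref{eq:oi-pic}; the key point is that the set of customers in $d$ is $c$ plus one extra customer, and the balance-function factors rearrange correctly because of order independence of~$\mu$. So I expect the first family to reduce, after translating through~\eqref{eq:oi-pic} and~\eqref{eq:oi-Phic}, to a statement purely about the balance function~$\Phi$ and the combinatorics of reverse chain reactions.

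The real work is the second family. The natural approach is to phrase it, exactly as in Theorem~\ref{theo:oi}, as an identity for~$\Phi$ alone — a pass-and-swap analogue of~\eqref{eq:oi-partial-balance-3} — and prove it by induction on the queue length~$n$, where the inductive step peels off the last customer $c_n$ of~$c$ and uses order independence of~$\mu$ to pull the factor $1/\mu(c_1,\dots,c_n,i)$ (or the appropriate full-length overall rate) out of the sum. The subtlety is bookkeeping: a transition recorded as a class-$i$ departure reaching~$c$ may insert the ``new'' customer anywhere, and the chain of swaps rearranges an interval of the queue, so the sum on the right no longer has the clean $n+1$ terms of~\eqref{eq:oi-partial-balance-2}. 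One must group the preimage transitions by where the chain reaction begins and ends and show that, thanks to the symmetry of the swapping graph ($j \in \I_i \iff i \in \I_j$), the contributions pair up and telescope to $\Phi(c)$. Concretely, I would set up a bijection between (i) states-plus-positions that are recorded as class-$i$ departures into~$c$ and (ii) some combinatorial gadget — essentially a choice of an insertion point and a valid alternating swap-path in the swapping graph — under which the $\Delta\mu$ weights and $\Phi$ ratios match up; the symmetry of the graph is what makes this gadget reversible.

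The main obstacle, then, is the combinatorial identity for~$\Phi$ underlying the second family of partial balance equations: correctly enumerating all pass-and-swap transitions that are \emph{recorded} as a class-$i$ departure and land in a given state~$c$, and verifying that their weighted sum collapses to $\Phi(c)$. This is exactly the step the authors defer to Appendix~\ref{app:pands}, and it is where the symmetry of the swapping graph — rather than any property of the rate function — does the essential work; everything else transfers from the \gls{oi} proof with only notational changes.
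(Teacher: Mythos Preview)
Your overall architecture is right --- two families of partial balance equations, the second (quasi-reversibility) one doing all the work, reduction to an identity for $\Phi$, and the symmetry of the swapping graph as the essential ingredient --- and it matches the paper's sketch. But two points deserve correction or sharpening.

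First, you have overcomplicated the first family. Arrivals in a \gls{pands} queue are exactly as in an \gls{oi} queue: an incoming customer goes to the back, no chain reaction, no relabeling. Hence the only way to reach $c=(c_1,\dots,c_n)$ by an arrival is from $(c_1,\dots,c_{n-1})$ via a class-$c_n$ arrival, and the first partial balance equation is literally~\eqref{eq:oi-partial-balance-1} again: $\pi(c)\mu(c)=\pi(c_1,\dots,c_{n-1})\lambda_{c_n}$. There is no enumeration of preimages, no ``running the chain reaction backwards'' here; that work belongs entirely to the second family. The paragraph you wrote about the first family (``transitions landing in $c$ via a service completion in position $p'$ of some longer state $d$\ldots'') is in fact describing the second family.

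Second, your plan to prove the $\Phi$-identity by induction on the queue length $n$, peeling off $c_n$, is not the route the paper takes and would be awkward to execute: the set of preimages $\{(d,p):\delta_p(d)=(c,i)\}$ does not restrict nicely when you drop $c_n$, because whether the chain reaction reaches the tail depends on $c_n$ itself. The paper instead first makes the preimage enumeration completely explicit --- working backwards from the departing class $i_0=i$, one builds a deterministic decreasing sequence of positions $q_0=n{+}1>q_1>\cdots>q_{u-1}>q_u=0$ in $c$ and classes $i_v=c_{q_v}$, so that every preimage corresponds to a choice of $v\in\{0,\dots,u{-}1\}$ and an insertion point $p\in\{q_{v+1}{+}1,\dots,q_v\}$ for the class-$i_v$ customer whose service completed. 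This turns~\eqref{eq:pands-partial-balance-3} into a concrete double sum, and the paper then proves the resulting identity (stated as a separate lemma) by induction on $u$, the length of the chain, not on $n$. The base case $u=1$ is exactly the \gls{oi} identity~\eqref{eq:oi-partial-balance-3}; the induction step applies~\eqref{eq:oi-partial-balance-3} to peel off the segment $p\in\{1,\dots,q_1\}$, uses order independence to factor out the tail, and then invokes the induction hypothesis on the prefix $c_{1\ldots q_1-1}$ with the shorter chain. So the \gls{oi} identity is used as a black box at every step, and the vague ``bijection with a combinatorial gadget'' you gesture at is replaced by this explicit recursive decomposition.
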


\begin{proof}[Sketch of proof]
	The proof resembles that
	of Theorem~\ref{theo:oi},
	except that the second set
	of partial balance equations \eqref{eq:oi-partial-balance-2} has a different form.
	More specifically, we will verify that
	any measure $\pi$
	of the form~\eqref{eq:oi-pic}
	satisfies the following
	partial balance equations
	in each state
	$c = (c_1,\ldots,c_n) \in \I^*$:
	\begin{itemize}
		\item Equalize the flow
		out of state~$c$ due to a departure
		with the flow
		into state~$c$ due to an arrival (if $c \neq \emptyset$):
		\begin{equation} \label{eq:pands-partial-balance-1}
			\pi(c) \mu(c) = \pi(c_1,\ldots,c_{n-1}) \lambda_{c_n}.
		\end{equation}
		\item Equalize, for each $i \in \I$,
		the flow out of state~$c$
		due to the arrival of a class-$i$ customer
		with the flow into state~$c$
		due to the departure of
		a customer of this class:
		\begin{align} \label{eq:pands-partial-balance-2}
			\pi(c) \lambda_i
			= \sum_{d \in \I^*}
			\sum_{\substack{
					p = 1 \\
					\delta_p(d) = (c,i)
			}}^{n+1}
			\pi(d) \, \Delta\mu(d_1,\ldots,d_p).
		\end{align}
	\end{itemize}
	Since \eqref{eq:pands-partial-balance-1}
	represents the same set of equations as \eqref{eq:oi-partial-balance-1}, it is already known that \eqref{eq:oi-pic} satisfies \eqref{eq:pands-partial-balance-1}.
	Showing that \eqref{eq:oi-pic} satisfies \eqref{eq:pands-partial-balance-2} is equivalent to showing that the balance function~$\Phi$ in \eqref{eq:oi-Phic}
	satisfies the following equation
	in each state
	$c = (c_1, \ldots, c_n) \in \I^*$:
	\begin{align} \label{eq:pands-partial-balance-3}
		\Phi(c)
		= \sum_{d \in \I^*}
		\sum_{\substack{
				p = 1 \\
				\delta_p(d) = (c, i)
		}}^{n+1}
		\Phi(d) \, \Delta\mu(d_1,\ldots,d_p),
		\quad \forall i \in \I.
	\end{align}
	This is shown by induction
	over the queue length~$n$
	in Appendix~\ref{app:pands}.
	The rest of the proof follows
	along the same lines as Theorem~\ref{theo:oi}.
\end{proof}

\begin{remark}
	In this section, we have seen that the introduction of the \gls{pands} mechanism to the \gls{oi} queue induces a different set of partial balance equations, while product-form properties are retained. Even stronger, Theorem~\ref{theo:pands} shows that replacing~\eqref{eq:oi-partial-balance-2} with~\eqref{eq:pands-partial-balance-2} does not alter the stationary distribution of the queue at all. This begs the question of whether there exist other intra-queue routing mechanisms that also lead to this product-form stationary distribution. A partial answer can be found in the actual proof of Theorem~\ref{theo:pands}, given in Appendix~\ref{app:pands}. A careful analysis of this proof suggests that other routing mechanisms, resulting in a different completion order $q_1, q_2, \ldots, q_u$ as defined in Appendix~\ref{app:pands}, could also lead to the stationary measure given in Theorem~\ref{theo:oi}, as long as this completion order adheres to an equation of the same form as~\eqref{eq:proof-partial-balance-2}. However, identifying such routing mechanisms does not seem straightforward.
\end{remark}

\section{Complementary results on pass-and-swap queues}
\label{sec:results}

We now use Theorem~\ref{theo:pands}
to derive further results
on the stationary behavior
of \gls{pands} queues.
Section~\ref{subsec:results-stability}
gives an alternative stability condition
that is simpler to verify
than~\eqref{eq:oi-stability}.
The result of this section
extends that obtained for \gls{oi} queues
in the Ph.D.\ thesis \cite{C19-2}
and before that for multi-server queues in~\cite{BC17}.
In Section~\ref{subsec:results-rates},
we use the quasi-reversibility property
to prove that
the average service and departures rates
of each class are equal to each other
and independent of the swapping graph.
Recall that, since an \gls{oi} queue
is also a \gls{pands} queue,
the results of this section readily
apply to \gls{oi} queues.

\subsection{Stability condition}
\label{subsec:results-stability}

Theorem~\ref{theo:stability} below
gives a necessary and sufficient condition
for the stability of \gls{pands} queues.
This condition is simpler
than~\eqref{eq:oi-stability}
as it only compares
the per-class arrival rates
$\lambda_1, \ldots, \lambda_I$
to the rate function~$\mu$.
A first version of this theorem
was stated in the Ph.D.\ thesis \cite[Theorem~3.4]{C19-2}
for \gls{oi} queues.
The proof that we give in Appendix~\ref{app:stability} is different
in that it does not involve Whittle networks~\cite{S99}.

To state the stability condition,
it is worth recalling that,
for each $c \in \I^*$,
the macrostate associated with state~$c$
is the vector
$|c| = (|c|_1, \ldots, |c|_I) \in \N^I$
that counts the number of customers
of each class present in the queue.
Also recall that
the service rate $\mu(c)$ depends
on the number of customers
of each class that are contained
in state~$c$
but not on their order.
This means that,
once the macrostate corresponding
to a state is given,
the service rate function is not sensitive
to the state itself anymore.
Therefore, in the sequel, we also refer to
$\mu(c)$ as $\mu(|c|)$
for simplicity of notation.

For each~$i \in \I$,
we let $e_i$ denote the $I$-dimensional vector
with one in component~$i$ and zero elsewhere.
We define the function $\limmu$
on the power set of $\I$ by
\begin{equation} \label{eq:limmu}
	\limmu(\A) = \lim_{m \to +\infty} \mu(m e_\A),
	\quad \forall \A \subseteq \I,
\end{equation}
where $e_\A = \sum_{i \in \A} e_i$
for each $\A \subseteq \I$.
The monotonicity of $\mu$
ensures that $\limmu$ is well defined,
with values in $\R_+\cup \{+\infty\}$,
and is itself a non-decreasing set function.
If the overall service rate only depends
on the set of active classes,
as is the case in the multi-server queue
of Examples~\ref{ex:oi} and \ref{ex:pands},
we have $\bar\mu(\A) = \mu(x)$
for each $x \in \N^I$
such that $\A = \{i \in \I: x_i > 0\}$,
but in general,
we may have $\bar\mu(\A) > \mu(x)$
for each such $x$.

\begin{theorem} \label{theo:stability}
	Consider a \gls{pands} queue
	with a set $\I = \{1, \ldots, I\}$ of customer classes,
	per-class arrival rates $\lambda_1, \ldots, \lambda_I$,
	and a rate function $\mu$.
	This \gls{pands} queue is stable if and only if
	\begin{equation} \label{eq:stability}
		\sum_{i \in \A} \lambda_i < \limmu(\A),
		\quad \forall \A \subseteq \I: \A \neq \emptyset.
	\end{equation}
\end{theorem}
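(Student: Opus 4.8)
By Theorem~\ref{theo:pands}, the queue is stable if and only if $G:=\sum_{c\in\I^*}\Phi(c)\prod_{i\in\I}\lambda_i^{|c|_i}$ is finite, so the plan is to show that $G<+\infty$ precisely when~\eqref{eq:stability} holds. For the necessity of~\eqref{eq:stability} I would argue by contraposition: suppose $\sum_{i\in\A}\lambda_i\ge\limmu(\A)$ for some non-empty $\A\subseteq\I$, which forces $0<\limmu(\A)<+\infty$. Restricting the sum defining $G$ to states $c\in\A^*$ and using that monotonicity of $\mu$ gives $\mu(c_1,\ldots,c_p)\le\limmu(\A)$ for every prefix of such a $c$, one obtains $\Phi(c)\ge\limmu(\A)^{-n}$ for every $c\in\A^*$ of length $n$; summing over $\A^*$ and applying the multinomial theorem then yields
\[
G\ \ge\ \sum_{c\in\A^*}\limmu(\A)^{-n}\prod_{i\in\A}\lambda_i^{|c|_i}\ =\ \sum_{n\ge 0}\Bigl(\tfrac{\sum_{i\in\A}\lambda_i}{\limmu(\A)}\Bigr)^{n}\ =\ +\infty ,
\]
so the queue is unstable.

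For sufficiency I would prove, by induction on $|\I|$, the statement ``if~\eqref{eq:stability} holds then $G<+\infty$'', for every choice of arrival rates and rate function obeying~\eqref{eq:stability}; the case $|\I|=0$ is trivial. For the inductive step, fix $\I$ and assume~\eqref{eq:stability}. Since $\mu(m e_\I)\uparrow\limmu(\I)>\Lambda:=\sum_{i\in\I}\lambda_i$, I can fix $M\in\N$ with $\mu^\ast:=\mu(M e_\I)>\Lambda$. Split $\I^*=\D\sqcup\mathcal{B}$ with $\D=\{c:|c|\ge M e_\I\}$ and $\mathcal{B}=\{c:|c|\not\ge M e_\I\}$, and accordingly $G=G_\D+G_\mathcal{B}$. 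The key step is to control $G_\D$ by $G_\mathcal{B}$: since macrostate coordinates are non-decreasing along prefixes, every $c\in\D$ can be written as $c=(b,i,a)$ with $b$ the (possibly empty) longest prefix of $c$ lying in $\mathcal{B}$, $i\in\I$ the next class, and $a$ an arbitrary suffix; writing $c=(c_1,\ldots,c_n)$ and $\Phi(c)\prod_{j\in\I}\lambda_j^{|c|_j}=\prod_{p=1}^n\lambda_{c_p}/\mu(c_1,\ldots,c_p)$, and noting that every prefix of length $\ge|b|+1$ has macrostate $\ge M e_\I$ and hence service rate $\ge\mu^\ast$, one obtains (after summing the convergent geometric series over all suffixes $a$ and over $i$, and enlarging the sum to all pairs $(b,i)$ with $b\in\mathcal{B}$)
\[
G_\D\ \le\ G_\mathcal{B}\cdot\frac{\Lambda}{\mu^\ast}\cdot\frac{1}{1-\Lambda/\mu^\ast}\ =\ G_\mathcal{B}\cdot\frac{\Lambda}{\mu^\ast-\Lambda},
\]
so that $G\le G_\mathcal{B}\,\mu^\ast/(\mu^\ast-\Lambda)$ and it remains only to prove $G_\mathcal{B}<+\infty$.

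Since $\mathcal{B}=\bigcup_{i\in\I}\{c:|c|_i\le M-1\}$, a union bound reduces this to showing, for each $i\in\I$ and each $v\in\{0,\ldots,M-1\}$, that $\sum_{c:\,|c|_i=v}\Phi(c)\prod_{j\in\I}\lambda_j^{|c|_j}<+\infty$. For such a $c$, let $c'\in(\I\setminus\{i\})^*$ be obtained by deleting the $v$ class-$i$ customers. Using that $\mu$ is non-decreasing and that $\mu(d)\ge\mu_0:=\min_{j\in\I}\mu(e_j)>0$ for every non-empty state $d$, one checks that $\Phi(c)\le\mu_0^{-v}\,\Phi(c')$, while $\prod_{j\in\I}\lambda_j^{|c|_j}=\lambda_i^{v}\prod_{j\ne i}\lambda_j^{|c'|_j}$; since exactly $\binom{|c'|+v}{v}$ states $c$ with $|c|_i=v$ reduce to a given $c'$, this gives
\[
\sum_{c:\,|c|_i=v}\Phi(c)\prod_{j\in\I}\lambda_j^{|c|_j}\ \le\ \mu_0^{-v}\lambda_i^{v}\sum_{c'\in(\I\setminus\{i\})^*}\binom{|c'|+v}{v}\Phi(c')\prod_{j\ne i}\lambda_j^{|c'|_j}.
\]
Finally, for any $z>1$ one has $\binom{|c'|+v}{v}\le C_{v,z}\,z^{|c'|}$ for some constant $C_{v,z}$, and by the slack in~\eqref{eq:stability} one may choose $z>1$ so small that the arrival rates $(z\lambda_j)_{j\ne i}$ still satisfy~\eqref{eq:stability} for the class set $\I\setminus\{i\}$ (the finitely many ratios $\limmu(\B)/\sum_{j\in\B}\lambda_j$, $\emptyset\neq\B\subseteq\I\setminus\{i\}$, all exceed $1$). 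The induction hypothesis applied to the queue with classes $\I\setminus\{i\}$ then gives $\sum_{c'}z^{|c'|}\Phi(c')\prod_{j\ne i}\lambda_j^{|c'|_j}=\sum_{c'}\Phi(c')\prod_{j\ne i}(z\lambda_j)^{|c'|_j}<+\infty$, which closes the induction.

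I expect the main obstacle to be the behaviour on the boundary set $\mathcal{B}$: there the service rate $\mu(c)$ may lie far below $\limmu(\I)$, so the crude geometric domination used for $\D$ breaks down. The decomposition above is designed to exploit the fact that on $\mathcal{B}$ at least one macrostate coordinate stays bounded by $M-1$, which permits projecting onto a queue with strictly fewer classes and invoking the induction hypothesis — at the mild price of needing an exponential, rather than a bare, moment bound for the smaller queue, which is available precisely because the induction hypothesis is quantified over all arrival-rate vectors satisfying~\eqref{eq:stability}.
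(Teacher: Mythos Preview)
Your proof is correct. The necessity argument coincides with the paper's: restrict to $\A^*$, bound $\mu$ from above by $\limmu(\A)$, and apply the multinomial identity to obtain a divergent geometric series.

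For sufficiency, however, you take a genuinely different route. The paper avoids induction altogether: it constructs an explicit \emph{degenerate} comparison queue with an additively separable rate function $\hat\mu(x)=\sum_{i\in\I}\hat\mu_i(x_i)$, where each $\hat\mu_i$ is a two-step function (a small value $\min(\delta,\hat\lambda_i)$ for $1\le x_i<m$ and $\hat\lambda_i>\lambda_i$ for $x_i\ge m$), with $m$, $\hat\lambda_i$, and $\delta$ tuned so that $\hat\mu\le\mu$ pointwise. Because $\hat\mu$ is additive, the normalizing sum for the degenerate queue factorizes at the macrostate level into a finite part plus products of convergent geometric series $\sum_{y_i\ge 0}(\lambda_i/\hat\lambda_i)^{y_i}$. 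Your argument instead peels off one class at a time: the split $\D\cup\mathcal{B}$ lets you dominate $G_\D$ by a geometric tail times $G_\mathcal{B}$, and the union bound on $\mathcal{B}$ together with the deletion estimate $\Phi(c)\le\mu_0^{-v}\Phi(c')$ reduces matters to the class set $\I\setminus\{i\}$ with inflated rates $(z\lambda_j)_{j\ne i}$. Your approach is arguably more elementary---no comparison queue needs to be engineered---but it buys this simplicity with the need for an exponential margin (the factor $z>1$), which you correctly secure by quantifying the induction hypothesis over all arrival-rate vectors satisfying~\eqref{eq:stability}. The paper's construction, by contrast, yields a single explicit dominating object and no recursion, at the cost of a more delicate choice of constants.
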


\begin{proof}
	See Appendix~\ref{app:stability}.
\end{proof}

\noindent This result is the only one
in this paper
that cannot be straightforwardly extended
to \gls{pands} queues
with an arbitrary scaling factor
as considered in \cite{BK95,K11}.
However, it can be extended
to \gls{pands} queues
with a non-decreasing scaling factor
by including this scaling rate
into the definition of $\limmu$.

\subsection{Departure and service rates}
\label{subsec:results-rates}

We now consider the relation between
the service rates and
departure rates of customers
in the \gls{pands} queue.
Consider a stable \gls{pands} queue,
as defined in Section~\ref{subsec:pands-def},
and let $\pi$ denote
the stationary distribution
of the Markov process tracking
the state over time.
For each $c = (c_1, \ldots, c_n) \in \I^*$
and $i \in \I$,
we also define
the overall departure rate of class~$i$
in state~$c$ to be
\begin{equation} \label{eq:thetac}
	\phi^d_i(c)
	= \sum_{\substack{p=1 \\ \delta_p(c) = i}}^n
	\Delta\mu(c_1,\ldots,c_p),
\end{equation}
while the overall service rate
of class~$i$ in this state is defined as
\begin{equation} \label{eq:phic}
	\phi^s_i(c) =
	\sum_{\substack{p=1 \\ c_p = i}}^n \Delta\mu(c_1,\ldots,c_p).
\end{equation}
While it is not necessarily true that
$\phi^d_i(c) = \phi^s_i(c)$,
the next proposition states that,
for each $x \in \N^I$ and $i \in \I$,
the overall probability flow
out of macrostate~$x$
due to a departure
of a class-$i$ customer
is equal to the overall probability
flow out of macrostate~$x$
due to a service completion
of a class-$i$ customer.

\begin{proposition} \label{prop:pands-oi}
	For each $x \in \N^I$ and $i \in \I$, we have
	\begin{equation} \label{eq:pands-oi}
		\sum_{\substack{c \in \I^*: |c|=x}} \pi(c) \phi^d_i(c)
		= \sum_{\substack{c \in \I^*: |c|=x}} \pi(c) \phi^s_i(c).
	\end{equation}
\end{proposition}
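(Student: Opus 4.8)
The plan is to deduce Proposition~\ref{prop:pands-oi} from the two families of partial balance equations that the stationary distribution $\pi$ satisfies. Being of the product form~\eqref{eq:oi-pic} by Theorem~\ref{theo:pands}, the measure $\pi$ satisfies both the P\&S partial balance equations~\eqref{eq:pands-partial-balance-2} and the original OI partial balance equations~\eqref{eq:oi-partial-balance-2}; the latter holds because the proof of Theorem~\ref{theo:oi} shows that \emph{every} measure of the form~\eqref{eq:oi-pic} satisfies~\eqref{eq:oi-partial-balance-2}, irrespective of the swapping graph. I would first dispose of the case $x_i = 0$, in which both sides of~\eqref{eq:pands-oi} vanish: for any $c$ with $|c| = x$ there is no position $p$ with $c_p = i$, so $\phi^s_i(c) = 0$, and since the customer that leaves in any P\&S transition is one of the customers originally present, no transition from $c$ ejects a class-$i$ customer, so $\phi^d_i(c) = 0$.

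For a macrostate with $x_i \ge 1$, I would prove the equivalent identity obtained by writing $x = y + e_i$ with $y = x - e_i \in \N^I$. Summing~\eqref{eq:pands-partial-balance-2} over all states $c$ with $|c| = y$ and interchanging the order of summation, the right-hand side becomes $\sum_{d \in \I^*} \pi(d) \sum_{(c,p)} \Delta\mu(d_1,\ldots,d_p)$, the inner sum being over pairs $(c,p)$ with $|c| = y$ and $\delta_p(d) = (c,i)$. For a fixed pre-transition state $d$, such a pair exists only if $|d| = y + e_i$ (a departure of a class-$i$ customer from $d$ lands in a state of macrostate $|d| - e_i$), and then the inner sum runs precisely over the positions $p$ at which the P\&S transition from $d$ ejects a class-$i$ customer, so by the definition~\eqref{eq:thetac} it equals $\phi^d_i(d)$. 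This yields
\[
  \lambda_i \sum_{c \in \I^*:\, |c| = y} \pi(c)
  = \sum_{d \in \I^*:\, |d| = y + e_i} \pi(d)\, \phi^d_i(d).
\]
A mirror-image computation starting from~\eqref{eq:oi-partial-balance-2}, using that inserting a class-$i$ customer into a state of macrostate $y$ is a bijection between such states with a marked insertion slot and states $d$ of macrostate $y + e_i$ with a marked class-$i$ position, and that $\Delta\mu(c_1,\ldots,c_{p-1},i) = \Delta\mu(d_1,\ldots,d_p)$ for the corresponding $d$, gives likewise
\[
  \lambda_i \sum_{c \in \I^*:\, |c| = y} \pi(c)
  = \sum_{d \in \I^*:\, |d| = y + e_i} \pi(d)\, \phi^s_i(d)
\]
by the definition~\eqref{eq:phic}. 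Equating the two right-hand sides and recalling $x = y + e_i$ proves~\eqref{eq:pands-oi}.

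The only delicate point is the bookkeeping in the two sums: one must verify carefully that, after exchanging the order of summation, the pairs $(c,p)$ with $|c| = y$ and $\delta_p(d) = (c,i)$ are in bijection with the positions $p$ at which the transition from $d$ ejects a class-$i$ customer --- and in particular that this forces $|d| = y + e_i$ --- so that the reindexed sum collapses exactly to $\phi^d_i(d)$; and similarly for the insertion map underlying the OI equations. Once this is set up, no induction or estimates are needed: the proposition is a direct consequence of the two partial balance identities already at hand.
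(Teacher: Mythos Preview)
Your proof is correct and follows essentially the same approach as the paper: both exploit that $\pi$ satisfies the two partial balance equations~\eqref{eq:oi-partial-balance-2} and~\eqref{eq:pands-partial-balance-2}, which share the common left-hand side $\pi(c)\lambda_i$, and sum over all states $c$ with $|c|=x-e_i$ to obtain the result after reindexing. The paper equates the two right-hand sides directly before summing, whereas you compute the common left-hand side explicitly, but the argument is the same.
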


\begin{proof}
	If $x_i=0$, the result is immediate
	since $\phi^d_i(c) = \phi^s_i(c) = 0$
	for any state~$c$
	for which $|c| = x$.
	For the case $x_i>0$, note that
	the stationary distribution of
	the \gls{pands} queue satisfies the partial balance equations~\eqref{eq:oi-partial-balance-2}
	by Theorems~\ref{theo:oi} and~\ref{theo:pands}
	and~\eqref{eq:pands-partial-balance-2}
	by Theorem~\ref{theo:pands}.
	Therefore, the right-hand sides
	of~\eqref{eq:oi-partial-balance-2}
	and~\eqref{eq:pands-partial-balance-2} are equal.
	Equating these two sides and summing
	over all states~$c = (c_1, \ldots, c_n)$
	for which $|c| = x-e_i$, we obtain
	\begin{equation*}
		\sum_{\substack{c \in \I^*: \\ |c|=x-e_i}}
		\, \sum_{\substack{d \in \I^*: \\ |d|=x}}
		\, \sum_{\substack{
				p = 1 \\
				\delta_p(d) = (c,i)
		}}^{n+1}
		\pi(d) \, \Delta\mu(d_1,\ldots,d_p)
		=
		\sum_{\substack{c \in \I^*: \\ |c| = x - e_i}} \sum_{p=1}^{n+1}
		\pi(c_1,\ldots,c_{p-1},i,c_p,\ldots,c_{n})
		\Delta\mu(c_1,\ldots,c_{p-1},i).
	\end{equation*}
	Rewriting both sides leads to
	\begin{equation*}
		\sum_{\substack{c \in \I^*: |c| = x}}
		\, \sum_{\substack{p=1 \\ \delta_p(c)=i}}^n
		\pi(c) \, \Delta\mu(c_1,\ldots,c_p)
		=
		\sum_{\substack{c \in \I^*: |c| = x}}
		\, \sum_{\substack{p=1 \\ c_p = i}}^{n}
		\pi(c_1,\ldots,c_{n})
		\Delta\mu(c_1,\ldots, c_p).
	\end{equation*}
	Combining this equation with
	\eqref{eq:thetac} and \eqref{eq:phic}
	finalizes the proof.
\end{proof}

Since the overall probability
flow out of macrostate~$x$
due to a \emph{service completion}
of a class-$i$ customer
does not depend on the swapping graph,
this equality implies that,
for each $x \in \N^I$ and $i \in \I$,
the overall probability flow out of macrostate~$x$
due to a \emph{departure} of a class-$i$ customer
does not depend on the swapping graph.
Upon dividing~\eqref{eq:pands-oi}
by $\sum_{c \in \I^*: |c| = x} \pi(c)$,
we also obtain that
the conditional expected
departure and service rates of a class
given the macrostate
are equal to each other. Finally, summing both sides of \eqref{eq:pands-oi} over all $i \in \mathcal{I}$ shows that the aggregate departure rate of customers in any macrostate $x$ equals the aggregate service rate of customers in $x$. This can alternatively be seen to hold true by noting that a service completion in a macrostate $x$ also induces a departure from macrostate $x$, and departures from the system only occur because of service completions.

\begin{remark}
	The result of Proposition~\ref{prop:pands-oi}
	is intuitively not very surprising.
	If, in a state $c = (c_1, \ldots, c_n)$,
	the completion of customer $c_p$ will trigger
	a departure of customer $c_q$,
	then, in state $d = (c_n, \ldots, c_1)$,
	the completion of customer $c_q$ will trigger
	a departure of customer $c_p$,
	as the swapping graph is undirected.
	Moreover, both states lead to the same macrostate~$x$.
	The fact that $\pi(c) \neq \pi(d)$ is offset by
	the nature of the order-independent service rates,
	as formalized in the proof
	of Proposition~\ref{prop:pands-oi}.
\end{remark}

\section{Closed models} \label{sec:closed}

We saw that \gls{pands} queues are quasi-reversible, so that stable open networks of \gls{pands} queues have a product-form stationary distribution under mild conditions on the routing process.
In this section, we consider \emph{closed} networks of \gls{pands} queues in more detail and conclude that, also for closed networks, the stationary distribution has a product form. In contrast to open networks, this does not follow directly from quasi-reversibility since, in general, the obtained Markov process does not meet the irreducibility assumptions posed in~\cite[Section 3.4]{kelly}. In Section~\ref{subsec:one}, we first consider a closed \gls{pands} queue in which the number of customers of each class is fixed and departing customers are appended back to the end of the queue instead of leaving. 
These results are extended to a closed tandem network of two \gls{pands} queues in Section~\ref{subsec:two}. This tandem network turns out to have rich applications, as we will see in Section~\ref{sec:app}.

\subsection{A closed pass-and-swap queue}
\label{subsec:one}

We first consider a closed network
that consists of a single \gls{pands} queue.
In Section~\ref{subsubsec:one-example},
we give an example of such
a closed \gls{pands} queue
to illustrate its dynamics. Section~\ref{subsubsec:one-model} then
gives a more formal description of this model,
including necessary notation.
This section also
studies the structure of the Markov process
underlying this closed \gls{pands} queue
and establishes sufficient conditions
for this Markov process
to be irreducible.
Provided that the Markov process is indeed irreducible, Section~\ref{subsubsec:one-steady} provides the stationary distribution
of the closed \gls{pands} queue
and establishes its product-form nature.

\subsubsection{Introductory example}
\label{subsubsec:one-example}

Consider a closed \gls{pands} queue with six customer classes ($\I = \{1,2,\ldots,6\}$) and the swapping graph shown in \figurename~\ref{fig:toy-hasse}. When a class-$i$ customer departs the queue, this customer does not leave the system. Instead, it is appended back to the queue as a class-$i$ customer. For simplicity, we assume that there is a single customer of each class in the queue. More precisely, we assume that the queue starts in state $c = (1,2,3,4,5,6)$, as depicted in \figurename~\ref{subfig:toy-one-1}.
A possible sequence of transitions is shown in \figurename s~\ref{subfig:toy-one-2} and \ref{subfig:toy-one-3}. Each transition is triggered by the service completion of the customer that is currently at the head of the queue. In particular, in the transition from \figurename~\ref{subfig:toy-one-1} to \figurename~\ref{subfig:toy-one-2}, customer~1 completes service, and this customer replaces customer~3, which replaces customer~6 in accordance with the swapping graph in \figurename~\ref{fig:toy-hasse}. In the transition from \figurename~\ref{subfig:toy-one-2} to \figurename~\ref{subfig:toy-one-3}, customer~2 completes service, and this customer replaces customer~4, which replaces customer~6. In both cases, customer~6 is appended back to the end of the queue, in the last position, so that this customer's position remains unchanged by the transition.

\begin{figure}[ht]
	\begin{minipage}{.45\linewidth}
		\centering
		\subfloat[Swapping graph. \label{fig:toy-hasse}]{%
			\begin{tikzpicture}
				\def\width{2cm}
				\def\height{1.6cm}
				
				\node[class] (1) {1};
				\node[class] (2)
				at ($(1)+(\width,0)$) {2};
				\node[class] (3)
				at ($(1)-(0.5*\width,0)+(0,\height)$)
				{3};
				\node[class] (4)
				at ($(1)!.5!(2)+(0,\height)$) {4};
				\node[class] (5)
				at ($(2)+(0.5*\width,0)+(0,\height)$)
				{5};
				\node[class] (6)
				at ($(4)+(0,\height)$) {6};
				
				\draw (6) -- (3) -- (1) -- (4) -- (2) -- (5) -- (6) -- (4);
			\end{tikzpicture}
		}
	\end{minipage}
	\begin{minipage}{.50\linewidth}
		\centering
		\subfloat[Initial state. \label{subfig:toy-one-1}]{%
			\begin{tikzpicture}
				\def\width{1.2cm}
				\def\height{1.8cm}
				
				\node[fcfs=6] (queue) {
					\nodepart{one}{6}
					\nodepart{two}{5}
					\nodepart{three}{4}
					\nodepart{four}{3}
					\nodepart{five}{2}
					\nodepart{six}{1}
				};
				
				\node[server, anchor=west] (mu) at ($(queue.east)+(.1cm,0)$) {};
				
				\draw[->] ($(mu.east)+(.1cm,0)$) -- ($(mu.east)+(.3cm,0)$)
				|- ($(queue.south west)-(.3cm,.2cm)$) -- ($(queue.west)-(.3cm,0)$)
				-- ($(queue.west)-(.1cm,0)$);
				
				\node at ($(queue.west)-(2cm,0)$) {};
				\node at ($(mu.east)+(2cm,0)$) {};
			\end{tikzpicture}
		}
		\hfill
		\subfloat[State reached after the service completion of customer~$1$. \label{subfig:toy-one-2}]{%
			\begin{tikzpicture}
				\def\width{1.2cm}
				\def\height{1.8cm}
				
				\node[fcfs=6] (queue) {
					\nodepart{one}{6}
					\nodepart{two}{3}
					\nodepart{three}{5}
					\nodepart{four}{4}
					\nodepart{five}{1}
					\nodepart{six}{2}
				};
				
				\node[server, anchor=west] (mu) at ($(queue.east)+(.1cm,0)$) {};
				
				\draw[->] ($(mu.east)+(.1cm,0)$) -- ($(mu.east)+(.3cm,0)$)
				|- ($(queue.south west)-(.3cm,.2cm)$) -- ($(queue.west)-(.3cm,0)$) -- ($(queue.west)-(.1cm,0)$);
				
				\node at ($(queue.west)-(2cm,0)$) {};
				\node at ($(mu.east)+(2cm,0)$) {};
			\end{tikzpicture}
		}
		\hfill
		\subfloat[State reached after the service completion of customer $2$. \label{subfig:toy-one-3}]{%
			\begin{tikzpicture}
				\def\width{1.2cm}
				\def\height{1.8cm}
				
				\node[fcfs=6] (queue) {
					\nodepart{one}{6}
					\nodepart{two}{4}
					\nodepart{three}{3}
					\nodepart{four}{5}
					\nodepart{five}{2}
					\nodepart{six}{1}
				};
				
				\node[server, anchor=west] (mu) at ($(queue.east)+(.1cm,0)$) {};
				
				\draw[->] ($(mu.east)+(.1cm,0)$) -- ($(mu.east)+(.3cm,0)$)
				|- ($(queue.south west)-(.3cm,.2cm)$) -- ($(queue.west)-(.3cm,0)$) -- ($(queue.west)-(.1cm,0)$);
				
				\node at ($(queue.west)-(2cm,0)$) {};
				\node at ($(mu.east)+(2cm,0)$) {};
			\end{tikzpicture}
		}
	\end{minipage}
	\caption{A closed \gls{pands} queue.
		The rate function need not be specified,
		as throughout this example we only
		consider the service completion
		of the customer at the head of the queue.}
	\label{fig:one}
\end{figure}
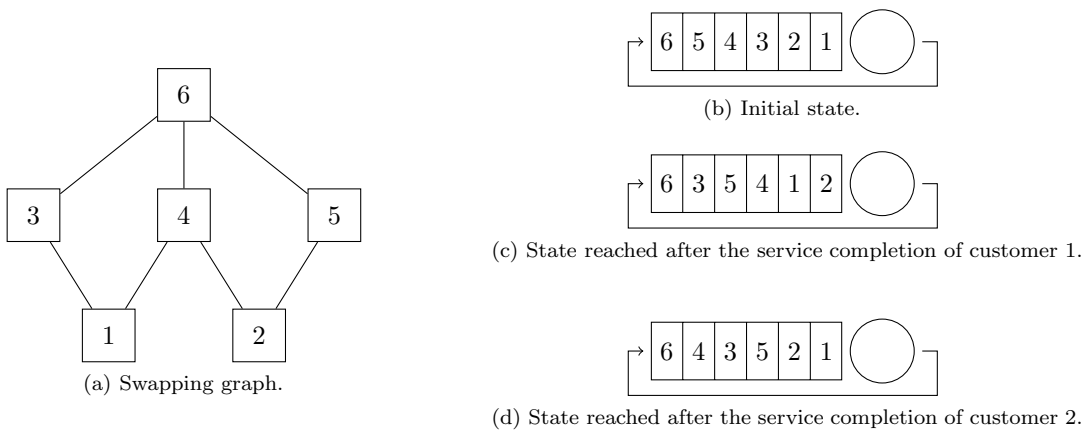

It is worth noting that, in the three states shown in \figurename~\ref{fig:one}, customer~$1$ precedes customers~$3$ and $4$, which precede customer~$6$. This order will be conserved by the \gls{pands} mechanism, as the service completion of customer~$1$ systematically triggers the movement of either customer~$3$ or customer~$4$,
that in turn will replace customer~$6$. Similarly, customer $2$ will always precede customers~$4$ and $5$ that, in their turn, will always precede customer~$6$.
In the sequel, we will formalize this phenomenon
and characterize the communicating classes of the Markov process
associated with the state of a closed \gls{pands} queue.

\subsubsection{Queueing model}\label{subsubsec:one-model}

The closed \gls{pands} queue inherits virtually
all properties and notation from Section~\ref{subsec:pands-def}.
As mentioned before, the only difference is that, upon a service completion, the customer that would have left if the queue were open is, instead, appended back to the end of the queue as a customer of the same class.
There is also no external arrival process, so that the macrostate of the queue is determined by its initial state and does not change over time. We therefore let $\I = \{1, \ldots, I\}$ denote the set of classes of the customers in the initial state of the queue. The (fixed) macrostate of the queue is denoted by $\ell = (\ell_1, \ldots, \ell_I) \in \N^I$ and the total number of customers by $n = \ell_1 + \ldots + \ell_I$. Note that $\ell_i>0$ for each $i \in \I$, as we only consider classes of customers present in the queue.

In Section~\ref{subsubsec:one-example}, we found that it was possible for an ordering of customers to be preserved by the \gls{pands} mechanism. To help formalize this phenomenon, we introduce the notion of a \emph{placement order}.
We first define a
\emph{placement graph} of the queue
as an acyclic orientation
of its swapping graph, that is,
a directed acyclic graph obtained
by assigning an orientation
to each edge of the swapping graph.
This is only possible if the swapping graph contains no loop, which we assume in the remainder of Sections~\ref{sec:closed} and \ref{sec:app}.
A \emph{placement order} of the queue
is then defined as
(the strict partial order associated with)
the reachability relationship
of one of its placement graphs.
In other words,
a strict partial order $\prec$ on $\I$
is said to be a placement order
if there exists a placement graph
such that, for each $i, j \in \I$ with $i \neq j$,
$i \prec j$ if and only if
there is a directed path
from class~$i$ to class~$j$
in the placement graph.
It will be useful later to observe that,
for each classes $i, j \in \I$
that are neighbors in the swapping graph,
we have either $i \prec j$ or $j \prec i$.

We say that a state
$c=(c_1, \ldots, c_n)\in\mathcal{I}^n$
\emph{adheres} to the placement order
if $c_q \nprec c_p$
for each $p, q \in \{1, \ldots, n\}$
such that $p < q$.
Since the placement order is only partial,
there may be pairs of classes
for which neither $c_p \prec c_q$
nor $c_q \prec c_p$ hold.
Adherence is therefore
a weaker property than
having $c_p \prec c_q$
for each $p, q \in \{1, \ldots, n\}$
such that $p < q$.
As a special case,
adherence allows that $c_p = c_q$ when $p < q$.

\begin{example} \label{ex:one}
	We consider the closed \gls{pands} queue
	of Section~\ref{subsubsec:one-example}.
	The placement graph in \figurename~\ref{fig:placementGraphInitialExample}
	is obtained by orienting
	the edges of the swapping graph
	of \figurename~\ref{fig:toy-hasse}
	from bottom to top.
	All states in \figurename~\ref{fig:one}
	adhere to the corresponding placement order.
	For example, the placement graph implies that
	$1 \prec j$ for $j \in \{3,4,6\}$
	and $2 \prec j$ for $j \in \{4,5,6\}$,
	which in turn implies that the customer at the front
	of the queue is either customer~1 or customer~2.
	All states in \figurename~\ref{fig:one}
	indeed satisfy this property.
	Customers~1 and 2 can alternate positions,
	as neither $1 \prec 2$ nor $1 \succ 2$.
\end{example}

\begin{figure}[ht]
	\centering
	\begin{tikzpicture}
		\def\width{2cm}
		\def\height{1.6cm}
		
		\node[class] (1) {1};
		\node[class] (2)
		at ($(1)+(\width,0)$) {2};
		\node[class] (3)
		at ($(1)-(0.5*\width,0)+(0,\height)$)
		{3};
		\node[class] (4)
		at ($(1)!.5!(2)+(0,\height)$) {4};
		\node[class] (5)
		at ($(2)+(0.5*\width,0)+(0,\height)$)
		{5};
		\node[class] (6)
		at ($(4)+(0,\height)$) {6};
		
		\draw[->] (1) -- (3);
		\draw[->] (1) -- (4);
		\draw[->] (2) -- (4);
		\draw[->] (2) -- (5);
		\draw[->] (3) -- (6);
		\draw[->] (4) -- (6);
		\draw[->] (5) -- (6);
	\end{tikzpicture}
	\caption{The placement graph corresponding
		to \figurename~\ref{fig:one}.}
	\label{fig:placementGraphInitialExample}
\end{figure}
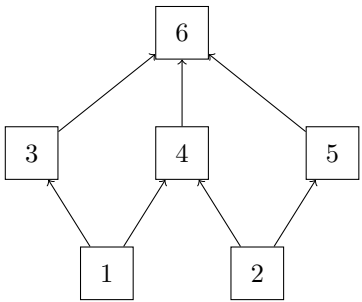

In general,
not all possible states of
a closed \gls{pands} queue
adhere to a placement order.
In Example~\ref{ex:one},
if the initial queue state
is $(3,1,2,3,4,5,6)$,
a class-1 customer is both preceded
and succeeded by a class-3 customer,
making it impossible to
orient the edge between classes~$1$ and $3$
in the swapping graph.
Furthermore, each state
can only adhere to \emph{at most}
one placement order,
so that the sets of states
that adhere to different placement orders
are disjoint.
To prove this,
it suffices to observe that,
for each state
that adheres to a placement order,
the relative placement of customers
within the state
specifies the orientation of all edges of the swapping graph,
which in turn uniquely defines
a placement order.

In the rest of this section
and in Section~\ref{subsubsec:one-steady},
we focus on the case where
the initial state
of the queue does adhere to
a placement order.
Proposition~\ref{prop:one-order-closed}
describes the phenomenon encountered
in Section~\ref{subsubsec:one-example}
in full generality, while
Proposition~\ref{prop:one-order-irreducible}
provides a stronger result,
assuming that all customers
receive a positive service rate.
The proofs of these two propositions are given in Appendix~\ref{app:closed}.
The case of states
that do not adhere to a placement order
is treated in Appendix~\ref{app:irreducibility}.

\begin{proposition} \label{prop:one-order-closed}
	If the initial state of the closed \gls{pands} queue adheres to the placement order $\prec$, then any state reached by applying the \gls{pands} mechanism also adheres to this placement order.
\end{proposition}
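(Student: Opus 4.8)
The plan is to show that a single application of the \gls{pands} mechanism preserves adherence to the placement order $\prec$; the general statement then follows by induction over the sequence of transitions. So fix a state $c = (c_1, \ldots, c_n)$ that adheres to $\prec$, suppose the customer in position $p_1$ completes service, and let $p_1 < p_2 < \cdots < p_u$ be the chain of positions visited by the mechanism, with classes $i_v = c_{p_v}$, so that (after appending the departing customer to the tail) the new state is
$$
d = (c_1, \ldots, c_{p_1 - 1}, c_{p_1 + 1}, \ldots, c_{p_2 - 1}, i_1, c_{p_2 + 1}, \ldots, c_{p_u - 1}, i_{u-1}, c_{p_u + 1}, \ldots, c_n, i_u).
$$
I would first record the key structural fact: for consecutive chain indices $v$, classes $i_v$ and $i_{v+1}$ are neighbors in the swapping graph, hence comparable under $\prec$; and because $c$ adheres to $\prec$ and $p_v < p_{v+1}$, we must have $i_{v+1} \nprec i_v$, so in fact $i_v \prec i_{v+1}$. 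Thus the chain is strictly increasing in the placement order: $i_1 \prec i_2 \prec \cdots \prec i_u$. This is the lever that makes everything work.

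Next I would verify adherence of $d$ by a case analysis on pairs of positions in $d$. Most pairs are inherited directly: if two customers both lie in one of the untouched blocks $(c_{p_v + 1}, \ldots, c_{p_{v+1} - 1})$, or one lies strictly before $p_1$, their relative order and classes are unchanged, so adherence carries over from $c$. The substantive cases involve a moved customer — one of the $i_v$ now sitting in (former) position $p_{v+1}$, for $v = 1, \ldots, u-1$, or the departing customer $i_u$ now at the tail. For a moved customer $i_v$ at position $p_{v+1}$ versus some customer $c_r$ with $r$ in the range strictly between $p_1$ and $p_{v+1}$: in $c$, customer $i_v$ sat at position $p_v$. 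If $p_v < r$, then adherence of $c$ gives $c_r \nprec i_v = c_{p_v}$, which is exactly what is needed since $i_v$ still precedes $c_r$ in $d$; if $r < p_v$ then $i_v$ now precedes a customer it used to follow, and I need $c_{p_v} \nprec c_r$, i.e.\ $i_v \nprec c_r$ — here one uses that the mechanism \emph{passed over} $c_r$, meaning $c_r \notin \I_{i_{v-1}}$ (taking $i_0 = i_1$ when $v=1$, with $p_0$ interpreted appropriately), combined with $i_{v-1} \prec i_v$; the hard part is precisely tying the ``passed over'' condition to a non-comparability under $\prec$. For the departing customer $i_u$ appended at the tail, adherence requires $i_u \nprec c_s$ for every other customer $c_s$: since $i_u \succ i_{u-1} \succ \cdots \succ i_1$ dominates the whole chain, and since any $c_s$ that the final link passed over is a non-neighbor of $i_{u-1}$, one argues $i_u$ cannot be $\prec$-below such a $c_s$; for customers strictly before $p_1$ one again uses adherence of $c$ together with the chain inequality.

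The main obstacle I anticipate is the bookkeeping in the ``moved-forward'' subcase and the tail subcase: translating the operational statement ``the mechanism passed over customer $c_r$'' (i.e.\ $c_r \notin \I_{c_{p_v}}$ for the relevant $v$) into the order-theoretic statement ``$c_r$ and the relevant $i_v$ are $\prec$-incomparable, or at least $i_v \nprec c_r$'' requires care, because $\prec$ only compares neighbors in the swapping graph — non-neighbors are automatically incomparable in the sense that no edge constrains them, but a directed path could still relate them. I would handle this by noting that $i_v \prec c_r$ would force a directed path in the placement graph, whose first edge leaves $i_v$; I would trace this path against the positions in $c$ and the chain structure to derive a contradiction with adherence of $c$ or with the minimality in the definition $p_{v+1} = \min\{q \ge p_v+1 : c_q \in \I_{c_{p_v}}\}$. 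Once these two subcases are settled, the remaining pairs are routine, and the induction closes the proof.
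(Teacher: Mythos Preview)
Your overall strategy and the lever $i_1 \prec i_2 \prec \cdots \prec i_u$ are correct, but your case analysis has the easy and hard subcases swapped. For a moved customer $i_v$ with $v < u$, in $d$ this customer sits \emph{after} the block $c_{p_v+1}, \ldots, c_{p_{v+1}-1}$; so when $p_v < r < p_{v+1}$, $i_v$ \emph{follows} $c_r$ in $d$, contrary to what you write, and adherence demands $i_v \nprec c_r$, not $c_r \nprec i_v$. This is the genuinely nontrivial case, and the relevant passed-over condition is $c_r \notin \I_{i_v}$ (it is $i_v$, not $i_{v-1}$, that moved past $c_r$); non-adjacency in the swapping graph does not by itself exclude $i_v \prec c_r$. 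Conversely, for $r < p_v$ the needed $i_v \nprec c_r$ is immediate from adherence of $c$ since $i_v = c_{p_v}$, so no passed-over reasoning is required there.

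Your path-tracing idea does close the hard case and is essentially what the paper does, only the paper packages it more cleanly by handling one swap at a time rather than the full chain at once. The paper shows that the swap position for the customer at $p$ is exactly the smallest $q > p$ with $c_p \prec c_q$: that $c_p$ and $c_q$ are neighbors is proved by taking the first edge $c_p \to j$ of a hypothetical longer path and using the standing assumption $\ell_j > 0$ to place a class-$j$ customer strictly between $p$ and $q$ (by adherence), contradicting minimality of $q$; that no neighbor of $c_p$ lies strictly between follows because neighbors are $\prec$-comparable with $c_p$, contradicting either adherence of $c$ or minimality of $q$. Once $p_{v+1}$ is identified with this $q$, the inequality $i_v \nprec c_r$ for $p_v < r < p_{v+1}$ is just the minimality of $q$, the tail case for $i_u$ is the branch where no such $q$ exists, and your global pairwise bookkeeping is avoided entirely. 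Note in particular that the assumption $\ell_j > 0$ for every $j \in \I$ is doing real work here; your sketch does not invoke it.
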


\begin{proposition} \label{prop:one-order-irreducible}
	Assume that $\Delta\mu(c) > 0$
	for each $c \in \I^*$.
	All states that
	adhere to the same placement order
	and correspond to the same macrostate
	form a single closed communicating class
	of the Markov process
	associated with the queue state.
\end{proposition}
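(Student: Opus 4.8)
The plan is to split the statement into a \emph{closedness} part and an \emph{irreducibility} part. Fix the placement order $\prec$ and the macrostate $\ell=(\ell_1,\dots,\ell_I)$ (with all $\ell_i>0$), and let $\mathcal{R}$ be the finite set of states that adhere to $\prec$ and have macrostate $\ell$. Closedness is immediate: in a closed queue the macrostate never changes, and by Proposition~\ref{prop:one-order-closed} any state reached from a state adhering to $\prec$ again adheres to $\prec$; hence $\mathcal{R}$ is closed for the Markov process and is therefore a union of communicating classes. It then remains to show that $\mathcal{R}$ is a \emph{single} communicating class, i.e.\ that any of its states is reachable from any other. The assumption $\Delta\mu(c)>0$ for all $c\in\I^*$ is exactly what makes this possible: it says that every customer in the queue, not only the head customer, has positive service rate, so a pass-and-swap transition may be triggered at \emph{any} position.

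For connectivity I would fix a reference state $c^\star\in\mathcal{R}$ and prove (i) $c^\star$ is reachable from every state of $\mathcal{R}$ and (ii) every state of $\mathcal{R}$ is reachable from $c^\star$; together these give mutual reachability of any two states of $\mathcal{R}$ through $c^\star$. I would take $c^\star$ to be the ``sorted'' state: choose a linear extension $i^{(1)},\dots,i^{(I)}$ of $\prec$ and let $c^\star$ list $\ell_{i^{(1)}}$ copies of class $i^{(1)}$, then $\ell_{i^{(2)}}$ copies of class $i^{(2)}$, and so on; one checks at once that $c^\star$ adheres to $\prec$, so in particular $\mathcal{R}\neq\emptyset$. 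The basic moves I expect to use repeatedly are: (a) firing a position whose customer has no swappable customer behind it, which simply relocates that customer to the tail and shifts everyone behind it one step forward; and (b) firing a position whose customer has a single reachable swappable partner further down, producing a controlled forward hop of one customer over a block of classes $\prec$-incomparable to it, with one class appended at the tail. The placement-order structure is what keeps these cascades under control: for neighbors $i,j$ in the swapping graph exactly one of $i\prec j$, $j\prec i$ holds, and in an adhering state no class $\prec$-below $c_1$ can sit behind the head, so when the head fires it can only swap ``upward'' in $\prec$.

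For (i) I would run a potential-function argument, the potential of a state being its number of inversions relative to the linear extension (pairs of positions $p<q$ with $c_q$ appearing strictly before $c_p$ in $i^{(1)},\dots,i^{(I)}$), which vanishes precisely at $c^\star$ and, because the state adheres to $\prec$, only counts $\prec$-incomparable pairs. The core claim is that from any non-sorted state of $\mathcal{R}$ a finite sequence of moves of types (a)--(b) strictly decreases this count without ever leaving $\mathcal{R}$; I would prove it by induction on the queue length, peeling off the front customer and using (a)--(b) to bring the leading block into canonical form before recursing on the remainder. Part (ii) then follows by a similar argument, or more cheaply by observing that from $c^\star$ the successive class-blocks can be cycled to the tail in a way that reverses the moves used for (i).

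The hard part is the constructive combinatorial lemma behind (i): from an arbitrary adhering non-sorted state, selecting a position to fire whose induced pass-and-swap cascade is guaranteed both to remain inside $\mathcal{R}$ and to make measurable progress toward $c^\star$. This requires a careful analysis of how a cascade threads through maximal blocks of pairwise-$\prec$-incomparable classes and of which class is ultimately appended at the tail; the ``exactly one of $i\prec j$, $j\prec i$'' dichotomy for swapping-graph neighbors, together with the adherence constraint, is precisely what prevents a cascade from scrambling the order beyond repair. I would isolate this as a standalone lemma and only then assemble the proof of the proposition.
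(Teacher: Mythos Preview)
Your closedness argument is fine and matches the paper's. The irreducibility argument, however, is only a plan: you explicitly defer the central lemma (that some firing sequence strictly decreases your inversion potential) to a ``standalone lemma'' that you do not prove, and the primitive moves (a) and (b) you describe are too restrictive to cover the general cascade. In an adhering state, firing a position triggers a uniquely determined chain of swaps whose length you cannot control; you cannot in general find a position where ``a single reachable swappable partner'' exists, so your toolbox is incomplete before you even begin the potential argument.

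The paper's route avoids all of this. Instead of going through a fixed sorted state with an inversion count, it shows directly that any adhering state~$d$ is reachable from any adhering state~$c$ with the same macrostate. The algorithm is: let $p$ be the first position where $c$ and $d$ disagree and fire position~$p$ (positive rate by hypothesis); then repeat. Progress is guaranteed by one clean observation: if $r>p$ is the first position with $c_r=d_p$, then for every $q\in\{p,\dots,r-1\}$ the classes $c_q$ and $c_r$ are \emph{not} neighbors in the swapping graph, so the cascade triggered at~$p$ never ejects the customer at position~$r$, which therefore moves to position~$r-1$. The two halves of this observation come from adherence of~$c$ (giving $c_r\nprec c_q$) and from adherence of~$d$ together with $(d_1,\dots,d_{p-1})=(c_1,\dots,c_{p-1})$ and $|c|=|d|$ (giving $c_q\nprec c_r=d_p$, since every class appearing in $c$ at positions $\ge p$ also appears in $d$ at positions $\ge p$). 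Thus after finitely many firings the class-$d_p$ customer reaches position~$p$, the prefix of agreement grows, and induction on~$p$ finishes the proof. Your reference-state detour and inversion potential are unnecessary, and the structural fact you need is this single ``the target customer is cascade-invisible'' lemma rather than an analysis of blocks of incomparable classes.
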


\begin{remark}
	The assumption in
	Proposition~\ref{prop:one-order-irreducible},
	namely that $\Delta\mu(c) > 0$
	for each $c \in \I^*$,
	is a sufficient condition
	for this result to hold
	but it is not a necessary condition.
	It is for example worth noting that
	this assumption is not satisfied
	by the multi-server queue
	of Example~\ref{ex:oi},
	yet the closed variant of
	this queue satisfies the conclusions
	of these two propositions
	whenever
	$\mu_1$, $\mu_2$, and $\mu_3$
	are positive.
	In general, the construction of weaker sufficient conditions appears to be challenging since the transition described
	in step~3 of the algorithm in the proof of Proposition~\ref{prop:one-order-irreducible} is not guaranteed to occur with a positive probability when there are states~$c$ so that $\Delta\mu(c) = 0$.
\end{remark}

\subsubsection{Stationary analysis}
\label{subsubsec:one-steady}

We now turn to the stationary distribution
of the Markov process underlying
the closed \gls{pands} queue
and establish its product-form nature.
Recall that the initial macrostate
of the queue equals $\ell$,
which cannot change over time due to
the closed nature of the queue.
We assume that the initial state
adheres to a placement order $\prec$.
Since all subsequent states
must also adhere to this placement order
due to Proposition~\ref{prop:one-order-closed},
we restrict the state space of the Markov process
to the state space~$\C$ that consists
of all states $c = (c_1, \ldots, c_n)$
that satisfy $|c| = \ell$
and adhere to the placement order~$\prec$.
The rate function~$\mu$
and balance function~$\Phi$ of the queue
are assumed to be defined
on the whole set $\I^*$ for simplicity,
although we could just as well define them
on a subset of $\I^*$.
Theorem~\ref{theo:one-picd} below provides
the stationary distribution
of the closed \gls{pands} queue
and reveals its product form nature.

\begin{theorem} \label{theo:one-picd}
	Assume that the Markov process
	associated with the state
	of the closed \gls{pands} queue,
	with state space $\C$, is irreducible.
	The stationary distribution
	of this Markov process
	is then given by
	\begin{equation} \label{eq:one-pic}
		\pi(c)
		= \frac{\Phi(c)}
		{\sum_{d \in \C} \Phi(d)},
		\quad \forall c \in \C,
	\end{equation}
	where the function $\Phi$
	is given by~\eqref{eq:oi-Phic}.
\end{theorem}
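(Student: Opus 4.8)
The plan is to verify that the measure $\Phi$ restricted to $\C$ satisfies the global balance equations of the Markov process on $\C$, in close analogy with the proof of Theorem~\ref{theo:oi}, and then invoke irreducibility to conclude that the normalized version~\eqref{eq:one-pic} is the unique stationary distribution. Since the queue is closed, the only transitions out of a state $c = (c_1, \ldots, c_n) \in \C$ are service completions: for each $p \in \{1, \ldots, n\}$ with $\Delta\mu(c_1, \ldots, c_p) > 0$, the service completion of the customer in position $p$ occurs at rate $\Delta\mu(c_1, \ldots, c_p)$ and leads to a state in $\C$ (by Proposition~\ref{prop:one-order-closed}) in which the class that would have departed in the open queue is re-appended at the tail. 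So the global balance equation at state $c$ reads $\Phi(c)\,\mu(c) = \sum_{d \in \C} \Phi(d) \cdot (\text{rate from } d \text{ to } c)$, where $\mu(c) = \sum_{p=1}^n \Delta\mu(c_1, \ldots, c_p)$ is the total outflow rate.

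The key observation is that the inflow into $c$ in the closed queue is exactly the inflow that the \emph{open} P\&S queue would have into $c$ due to the departure of a customer of each class $i$, but with that departing class re-attached at the end — i.e., it couples to the partial balance equations~\eqref{eq:pands-partial-balance-2} established in Theorem~\ref{theo:pands}. Concretely: a transition in the closed queue lands in $c = (c_1, \ldots, c_n)$ precisely when, in some state $d' \in \I^*$ of length $n$ with $c_n$ removed from the tail — that is, $d' = (c_1, \ldots, c_{n-1})$ — wait, more carefully: the closed transition into $c$ from $d$ corresponds to an open-queue transition $\delta_p(d) = ((c_1, \ldots, c_{n-1}), c_n)$, since the re-appended customer has class $c_n$ and sits in the last position. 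Thus the total inflow into $c$ is
\begin{equation*}
	\sum_{d \in \I^*} \sum_{\substack{p = 1 \\ \delta_p(d) = ((c_1,\ldots,c_{n-1}),\, c_n)}}^{n} \Phi(d)\, \Delta\mu(d_1, \ldots, d_p),
\end{equation*}
restricted to those $d$ lying in $\C$ — but by Proposition~\ref{prop:one-order-closed} every $d \in \I^*$ for which such a transition reaches $c \in \C$ automatically adheres to $\prec$ and has macrostate $\ell$, so the restriction to $\C$ is automatic and the sum coincides with the right-hand side of~\eqref{eq:pands-partial-balance-3} evaluated at the state $(c_1, \ldots, c_{n-1})$ with class $i = c_n$ (the $p = n+1$ term there corresponds to the customer simply being re-appended without any swap, which is legitimate here). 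By Theorem~\ref{theo:pands}, that right-hand side equals $\Phi(c_1, \ldots, c_{n-1})$, and then by~\eqref{eq:oi-Phic} we have $\Phi(c_1, \ldots, c_{n-1}) = \Phi(c)\,\mu(c_1, \ldots, c_n) = \Phi(c)\,\mu(c)$. This is exactly the global balance equation at $c$.

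The main obstacle, and the step requiring the most care, is the bookkeeping that matches closed-queue transitions into $c$ bijectively with the terms of~\eqref{eq:pands-partial-balance-3}: one must check that the re-appending of the departing class at the tail corresponds cleanly to the ``insert class $i$ at position $p$'' structure underlying $\delta_p$, that no transition is double-counted or missed, and that the indices range correctly (in particular handling the $p = n+1$ term, which in the closed setting is the transition where the head-of-queue customer completes service and is immediately re-appended because it has no swappable successor, versus genuine swap cascades). A secondary subtlety is confirming that transitions with $\Delta\mu(d_1,\ldots,d_p) = 0$ contribute zero on both sides and hence can be harmlessly included in the sums, so that the algebra inherited from Theorem~\ref{theo:pands} applies verbatim. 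Once global balance is verified, irreducibility (assumed in the hypothesis) gives uniqueness of the stationary distribution, and normalizing $\Phi$ over the finite set $\C$ — finite because $|c| = \ell$ fixes the multiset of classes — yields~\eqref{eq:one-pic}.
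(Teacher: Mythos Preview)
Your proposal is correct and follows essentially the same route as the paper's proof: identify the closed-queue inflow into $c$ with the right-hand side of~\eqref{eq:pands-partial-balance-3} applied to $(c_1,\ldots,c_{n-1})$ and class $i=c_n$, invoke Theorem~\ref{theo:pands} to equate this with $\Phi(c_1,\ldots,c_{n-1})=\Phi(c)\mu(c)$, and conclude by irreducibility and normalization. Your cautionary remarks about bookkeeping are apt but do not point to genuine obstacles; the only place where both you and the paper are slightly terse is the claim that every $d$ with $\delta_p(d)=((c_1,\ldots,c_{n-1}),c_n)$ lies in $\C$, which is invoked via Proposition~\ref{prop:one-order-closed} but strictly speaking uses its converse (backward closure of $\C$) rather than the forward closure actually stated there.
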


\begin{proof}
	It suffices to show that the function~$\Phi$ satisfies the balance equations of the Markov process, after which the result follows by normalization.
	Let $c = (c_1, \ldots, c_n) \in \C$
	and $i = c_n$.
	Since a departing customer immediately re-enters the queue as
	a customer of the same class,
	the balance equation for
	any state $c \in \C$ reads
	\begin{equation} \label{eq:one-balance-1}
		\pi(c) \, \mu(c)
		= \sum_{d \in \C}
		\sum_{\substack{
				p = 1 \\
				\delta_p(d)
				= ((c_1, \ldots, c_{n-1}), i)
		}}^n
		\pi(d)
		\, \Delta\mu(d_1, \ldots, d_p),
	\end{equation}
	where we write
	$\delta_p(d) = ((c_1, \ldots, c_{n-1}), i)$
	if, in the open queue, the service completion of the customer in position~$p$ in state $d$ would lead to state $(c_1, \ldots, c_{n-1})$ with a departure of a class-$i$ customer.
	It follows from
	Proposition~\ref{prop:one-order-closed} that
	the set $\C$ contains all states
	$d \in \I^*$ such that
	$\delta_p(d) = ((c_1, \ldots, c_{n-1}), i)$
	for some $p \in \{1, \ldots, n\}$.
	Therefore, it suffices to prove that
	the balance function~$\Phi$ satisfies
	\begin{equation} \label{eq:one-balance-2}
		\Phi(c) \, \mu(c)
		= \sum_{d \in \I^*}
		\sum_{\substack{
				p = 1 \\
				\delta_p(d)
				= ((c_1, \ldots, c_{n-1}),i)
		}}^n
		\Phi(d) \, \Delta\mu(d_1, \ldots, d_p).
	\end{equation}
	By applying~\eqref{eq:pands-partial-balance-3}
	to state~$(c_1, \ldots, c_{n-1})$
	and class~$i$,
	we obtain that $\Phi(c_1, \ldots, c_{n-1})$,
	as defined in~\eqref{eq:oi-Phic},
	is equal to the right-hand side
	of~\eqref{eq:one-balance-2}.
	To conclude, it suffices to observe
	that~\eqref{eq:oi-Phic} implies
	$\Phi(c) \mu(c)
	= \Phi(c_1, \ldots, c_{n-1})$.
\end{proof}

\begin{remark} \label{remark1}
	According to
	Proposition~\ref{prop:one-order-irreducible},
	a sufficient condition for
	the Markov process
	considered in Theorem~\ref{theo:one-picd}
	to be irreducible
	is that $\Delta\mu(c) > 0$
	for each $c \in \I^*$.
	If this process is not irreducible,
	all steps of the proof of
	Theorem~\ref{theo:one-picd} remain valid,
	so that the distribution defined
	by~\eqref{eq:one-pic}
	is still a stationary distribution
	of the Markov process, but it may not be the only one.
	Since $\Phi(c) > 0$ for each $c \in \I^*$
	by \eqref{eq:oi-Phic},
	this observation shows that
	the Markov process
	considered in Theorem~\ref{theo:one-picd}
	always has a positive stationary distribution,
	which implies that this process
	has no transient state,
	that is, all its
	communicating classes are closed.
\end{remark}

\begin{remark} \label{remark2}
	A variant of Theorem~\ref{theo:one-picd}
	can also be derived for closed \gls{pands} queues with initial states	that do not adhere to a placement order. 
	We have deferred derivation of this more general result to Appendix~\ref{app:irreducibility} to simplify the discussion.
	Theorem~\ref{theo:one-picd}	will be sufficient
	for the applications
	of Section~\ref{sec:app}.
\end{remark}

\subsection{A closed tandem network of two pass-and-swap queues}
\label{subsec:two}

Now that the product-form
of the stationary distribution
of a single closed \gls{pands} queue
has been established,
we turn to the study of
a closed tandem of two \gls{pands} queues.
Again, we first explain the model
through an introductory example
in Section~\ref{subsubsec:two-example},
after which we formalize the model
and describe structural properties
in Section~\ref{subsubsec:two-model}. 
We also derive
the stationary distribution
in Section~\ref{subsubsec:two-steady}.

\subsubsection{Introductory example}
\label{subsubsec:two-example}

We consider the closed tandem network of two \gls{pands} queues depicted in \figurename~\ref{fig:dlb-toy}.
\begin{figure}[b]
	\centering
	\subfloat[Initial state. \label{subfig:toy-two-1}]{
		\begin{tikzpicture}
			\def\width{1cm}
			\def\height{1.8cm}
			
			\node[fcfs=6] (queue1) {
				\nodepart{one}{6}
				\nodepart{two}{5}
				\nodepart{three}{4}
				\nodepart{four}{3}
				\nodepart{five}{2}
				\nodepart{six}{1}
			};
			
			\node[server, anchor=west] (mu1) at ($(queue1.east)+(.1cm,0)$) {};
			
			\node[fcfs=6] (queue2) at ($(queue1)+(\width,0)-(0,\height)$) {
				\nodepart{one}{\phantom{1}}
				\nodepart{two}{\phantom{1}}
				\nodepart{three}{\phantom{1}}
				\nodepart{four}{\phantom{1}}
				\nodepart{five}{\phantom{1}}
				\nodepart{six}{\phantom{1}}
			};
			
			\node[server, anchor=east] (mu2) at ($(queue2.west)-(.1cm,0)$) {};
			
			\draw[->] ($(mu2.west)-(.1cm,0)$) -- ($(mu2.west)-(.5cm,0)$) |- ($(queue1.west)-(.1cm,0)$);
			\draw[->] ($(mu1.east)+(.1cm,0)$) -- ($(mu1.east)+(.5cm,0)$) |- ($(queue2.east)+(.1cm,0)$);
			
		\end{tikzpicture}
	}
	\hfill
	\subfloat[State reached after
	the service completion of customer~1.
	\label{subfig:toy-two-2}]{
		\begin{tikzpicture}
			\def\width{1cm}
			\def\height{1.8cm}
			
			\node[fcfs=6] (queue1) {
				\nodepart{one}{\phantom{1}}
				\nodepart{two}{3}
				\nodepart{three}{5}
				\nodepart{four}{4}
				\nodepart{five}{1}
				\nodepart{six}{2}
			};
			
			\node[server, anchor=west] (mu1)
			at ($(queue1.east)+(.1cm,0)$) {};
			
			\node[fcfs=6] (queue2) at ($(queue1)+(\width,0)-(0,\height)$) {
				\nodepart{one}{6}
				\nodepart{two}{\phantom{1}}
				\nodepart{three}{\phantom{1}}
				\nodepart{four}{\phantom{1}}
				\nodepart{five}{\phantom{1}}
				\nodepart{six}{\phantom{1}}
			};
			
			\node[server, anchor=east] (mu2)
			at ($(queue2.west)-(.1cm,0)$) {};
			
			\draw[->] ($(mu2.west)-(.1cm,0)$)
			-- ($(mu2.west)-(.5cm,0)$)
			|- ($(queue1.west)-(.1cm,0)$);
			\draw[->] ($(mu1.east)+(.1cm,0)$)
			-- ($(mu1.east)+(.5cm,0)$)
			|- ($(queue2.east)+(.1cm,0)$);
		\end{tikzpicture}
	}
	\hfill
	\subfloat[State reached after
	service completions
	of several customers
	at the head
	of the first queue. \label{subfig:toy-two-3}]{
		\begin{tikzpicture}
			\def\width{1cm}
			\def\height{1.8cm}
			
			\node[fcfs=6] (queue1) {
				\nodepart{one}{\phantom{1}}
				\nodepart{two}{\phantom{1}}
				\nodepart{three}{\phantom{1}}
				\nodepart{four}{1}
				\nodepart{five}{5}
				\nodepart{six}{2}
			};
			
			\node[server, anchor=west] (mu1) at ($(queue1.east)+(.1cm,0)$) {};
			
			\node[fcfs=6] (queue2) at ($(queue1)+(\width,0)-(0,\height)$) {
				\nodepart{one}{6}
				\nodepart{two}{4}
				\nodepart{three}{3}
				\nodepart{four}{\phantom{1}}
				\nodepart{five}{\phantom{1}}
				\nodepart{six}{\phantom{1}}
			};
			
			\node[server, anchor=east] (mu2) at ($(queue2.west)-(.1cm,0)$) {};
			
			\draw[->] ($(mu2.west)-(.1cm,0)$)
			-- ($(mu2.west)-(.5cm,0)$)
			|- ($(queue1.west)-(.1cm,0)$);
			\draw[->] ($(mu1.east)+(.1cm,0)$)
			-- ($(mu1.east)+(.5cm,0)$)
			|- ($(queue2.east)+(.1cm,0)$);
		\end{tikzpicture}
	}
	\caption{A closed tandem network of two \gls{pands} queues.
		As in \figurename~\ref{fig:one},
		the rate function is not specified because
		we will only consider
		the service completion of the customer
		at the head of a queue.}
	\label{fig:dlb-toy}
\end{figure}
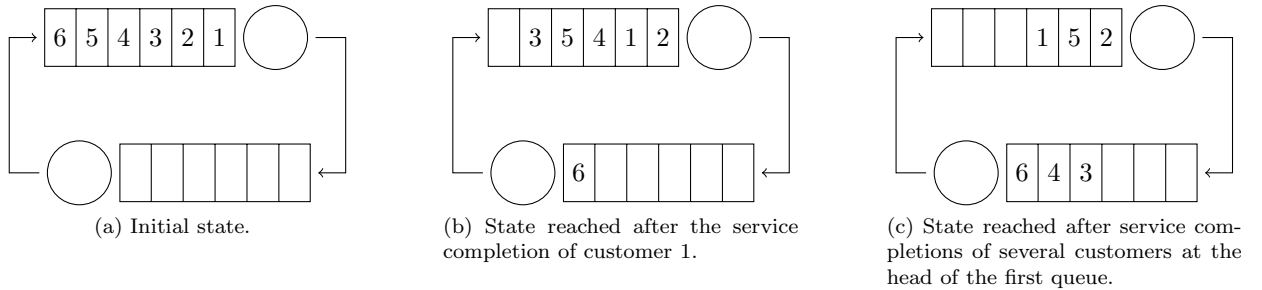%
Both queues have the same set of customer classes and the same swapping graph as the closed queue of Section~\ref{subsec:one}.
Furthermore, only the customer at the head of each queue, if any, receives a positive service rate. We also assume that there is a single customer of each class in the network.
The routing process is as follows: for each $i \in \I$, if a class-$i$ customer departs from a queue, this customer is routed to the back of the \emph{other} queue as a class-$i$ customer. As shown in \figurename~\ref{subfig:toy-two-1}, the initial state of the first queue is $c = (1,2,3,4,5,6)$ and that of the second queue is $d = \emptyset$.
\figurename~\ref{subfig:toy-two-2} shows the state reached after customer~1 (that is, the only customer belonging to class 1) completes service. As in the introductory example of Section~\ref{subsec:one}, this customer replaces customer~$3$, and customer~$3$ replaces customer~$6$.
The difference is that customer~$6$ is now routed to the second queue rather than to the first. \figurename~\ref{subfig:toy-two-3} shows the state reached after several service completions, each time that of the customer at the head of the first queue.

As in Section~\ref{subsubsec:one-example},
the order of customers seems to be preserved
by the \gls{pands} mechanism.
However, the orders of customers
in the two queues are reversed.
For instance, while customer~$6$
comes after customers~$3$ and $4$
in the first queue
(as in Section~\ref{subsubsec:one-example}),
in the second queue customer $6$ always
precedes these customers.
We will show that this symmetry
in customer orders
holds for any closed tandem network
of two \gls{pands} queues.

Before we move on to the analysis,
let us motivate this model by giving a glimpse
of the applications of Section~\ref{sec:app}.
This section is concerned with
token-based load-distribution protocols
for server systems with assignment constraints,
in which each job seizes a token upon arrival
and releases its token upon departure.
The above-defined closed tandem network
models the dynamics of tokens as follows.
The first queue contains available tokens,
while the second queue
contains tokens held by jobs in the system
(either waiting to be served or in service).
Service completions in the first and second queues
correspond to job arrivals and departures,
respectively, and
the \gls{pands} mechanism translates to
a protocol to distribute load across servers.

\subsubsection{Queueing model}
\label{subsubsec:two-model}

Just like in Section~\ref{subsec:one}, the queues in the closed tandem network are ordinary \gls{pands} queues as described in Section~\ref{sec:pands}. There are however no external arrivals or departures: the departure process of one queue now forms the arrival process of the other. In particular, both queues share the same set $\I = \{1, \ldots, I\}$ of customer classes. The state of the first queue is denoted by $c = (c_1, \ldots, c_n)\in\I^*$ and that of the second queue by $d = (d_1, \ldots, d_m)\in\I^*$, where $n$ and $m$ are the number of customers present in the first and second queue, respectively. We refer to $(c; d) = (c_1, \ldots, c_n; d_1, \ldots, d_m)$ as the state of the network. We furthermore assume that both queues have the same swapping graph. The rate functions of the two queues may however differ: whereas customers in the first queue complete service according to the rate function $\mu$, the rate function in the second queue is denoted by $\nu$.
Because of the closedness of the system,
the sum of the macrostates of the two queues, denoted by $\ell = |c|+|d|$, is constant over time.
We refer to this vector $\ell = (\ell_1, \ldots, \ell_{I})$ as the network macrostate and we assume without loss of generality that $\ell_i > 0$ for every $i \in \I$.
Macrostates $|c|$ and $|d|$, however,
do fluctuate over time.
We let $\Phi$ (resp.\ $\Lambda$) denote the balance function
of the first (resp.\ second) queue.
The functions $\mu$, $\nu$, $\Phi$, and $\Lambda$
are assumed to be defined
on $\I^*$ for simplicity.

As mentioned above, both queues have the same swapping graph.
Similarly to Section~\ref{subsubsec:one-model},
we define a \emph{placement order} $\prec$
of the network
by directing the edges
of this swapping graph
so that a directed acyclic graph,
called the \emph{placement graph} of the network,
arises;
we again write $i \prec j$
if and only if there exists a directed path
from class~$i$ to class~$j$ in the placement graph.
We now say that a network state $(c;d)$
adheres to the placement order~$\prec$ if
the following three conditions are satisfied:
\begin{enumerate}[label=(\roman*)]
	\item \label{order-1} $(c_1, \ldots, c_n)$ adheres
	to the placement order~$\prec$
	in the sense of Section~\ref{subsubsec:one-model},
	\item \label{order-2} $(d_m, \ldots,d_1)$ adheres
	to the placement order~$\prec$
	in the sense of Section~\ref{subsubsec:one-model}, and
	\item \label{order-3} $c_p \nsucc d_q$ for each
	$p \in \{1, \ldots, n\}$
	and $q \in \{1, \ldots, m\}$.
\end{enumerate}
Reversing the order of state~$d$
in property~\ref{order-2}
is consistent with the observation
of Section~\ref{subsubsec:two-example}
that the order of customers
in the second queue is reversed
compared to the first queue.
Equivalently, we say that the network state $(c;d)$
adheres to the placement order $\prec$ if and only if the state $(c_1, \ldots, c_n, d_m, \ldots, d_1)$ would adhere to $\prec$ in the single-queue setting of Section \ref{subsubsec:one-model}.
For example, the three network states
shown in \figurename~\ref{fig:dlb-toy}
adhere to the placement order $\prec$
defined by the placement graph
of \figurename~\ref{fig:placementGraphInitialExample}.
Focusing on the state shown in
\figurename~\ref{subfig:toy-two-3},
we have $c = (2,5,1)$ and $d = (6,4,3)$.
The equivalent state in Section \ref{subsubsec:one-model} would be $(2,5,1,3,4,6)$; note the absence of a semicolon and the reverse order of the customers of the second queue. It is indeed easily verified that this state adheres to the same placement order $\prec$ in \figurename~\ref{fig:dlb-toy}.
The definition of adherence in the two-queue setting is symmetric
with respect to the queues in the sense that
state~$(c;d)$ adheres
to the placement order~$\prec$
if and only if
state~$(d;c)$ adheres
to the reverse placement order~$\succ$
defined as follows: for each $i, j \in \I$,
$i \succ j$ if and only if $j \prec i$.

As for the case of a single queue,
we focus here on the case where the initial state
adheres to a placement order.
Propositions~\ref{prop:two-order-closed}
and \ref{prop:two-order-irreducible}
below are the counterparts
of Propositions~\ref{prop:one-order-closed}
and \ref{prop:one-order-irreducible}
for closed tandem networks of two queues.
The proofs of these two propositions
are given in Appendix~\ref{app:closed} and
rely on the proof of their single-queue counterparts.

\begin{proposition}\label{prop:two-order-closed}
	If the initial network state
	adheres to the placement order $\prec$,
	then any state reached
	by applying the \gls{pands} mechanism
	to either of the two queues
	also adheres to this placement order.
\end{proposition}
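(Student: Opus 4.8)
The plan is to show that a single P\&S transition preserves adherence to $\prec$, and then conclude by induction on the number of transitions. First I would exploit the symmetry built into the definition of adherence noted at the end of Section~\ref{subsubsec:two-model} — the network state $(c;d)$ adheres to $\prec$ if and only if $(d;c)$ adheres to the reverse order $\succ$, and the two queues share a swapping graph — so that it suffices to treat a transition triggered by a service completion in the \emph{first} queue.

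Suppose, then, that $(c;d)$ with $c=(c_1,\ldots,c_n)$ and $d=(d_1,\ldots,d_m)$ adheres to $\prec$ and that the customer in some position $p$ of the first queue completes service. The ensuing P\&S chain is internal to the first queue: it rearranges the customers at positions $p=p_1<p_2<\cdots<p_u\le n$, producing a new first-queue state $c'=(c'_1,\ldots,c'_{n-1})$, and the ejected customer, of class $i_u=c_{p_u}$, is appended to the tail of the second queue, which becomes $(d_1,\ldots,d_m,i_u)$. The crucial observation — and the step I expect to require the most care — is that, read in isolation, this is exactly a transition of the closed \emph{single} P\&S queue of Section~\ref{subsubsec:one-model} started from state $(c_1,\ldots,c_n)$: there the departing class-$i_u$ customer is re-appended at the tail, so the resulting single-queue state is precisely $(c'_1,\ldots,c'_{n-1},i_u)$. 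Since $\prec$ arises from an acyclic orientation of the swapping graph shared by both queues (hence is a placement order for the first queue on its own), and since $(c_1,\ldots,c_n)$ adheres to it by condition~\ref{order-1} of the pre-transition state, Proposition~\ref{prop:one-order-closed} yields that $(c'_1,\ldots,c'_{n-1},i_u)$ adheres to $\prec$. In particular its prefix $(c'_1,\ldots,c'_{n-1})$ adheres to $\prec$ — this is condition~\ref{order-1} for the new network state — and $i_u\nprec c'_q$ for every $q\in\{1,\ldots,n-1\}$.

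It then remains to verify conditions~\ref{order-2} and~\ref{order-3} for the new state directly against the definition. For~\ref{order-2}, the reversal of the new second-queue state is $(i_u,d_m,\ldots,d_1)$, whose suffix $(d_m,\ldots,d_1)$ adheres to $\prec$ by the old condition~\ref{order-2}, while $d_q\nprec i_u$ for every $q$ holds because $i_u=c_{p_u}$ and the old condition~\ref{order-3} reads $c_{p_u}\nsucc d_q$. For~\ref{order-3}, the point is that the P\&S chain only rearranges customers already present in $c$, so every class occurring in $c'$ occurs in $c$; the comparisons $c'_p\nsucc d_q$ then reduce to the old condition~\ref{order-3}, and the one new comparison $c'_p\nsucc i_u$ is the statement $i_u\nprec c'_p$ established in the previous paragraph. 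With the three conditions in hand, the induction closes the argument.

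In short, the mechanics are bookkeeping against the three defining clauses of adherence; the only genuine obstacle is the translation in the middle paragraph — confirming that a first-queue transition of the tandem is literally an instance of the closed single-queue transition, with the routed-out customer playing the role of the re-appended customer, so that Proposition~\ref{prop:one-order-closed} applies verbatim and nothing about the P\&S chain needs to be re-derived.
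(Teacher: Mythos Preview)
Your proposal is correct and follows essentially the same route as the paper: reduce by symmetry to a service completion in the first queue, observe that this is exactly a transition of the closed single \gls{pands} queue so that Proposition~\ref{prop:one-order-closed} gives adherence of $(c'_1,\ldots,c'_{n-1},i_u)$, and then read off conditions~\ref{order-1}--\ref{order-3} for the new network state. If anything, your bookkeeping for condition~\ref{order-3} is slightly more explicit than the paper's, which bundles the old comparisons $c'_p\nsucc d_q$ together with the new one $c'_p\nsucc i_u$ without spelling out that the former relies on the classes of $c'$ already occurring in $c$.
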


\begin{proposition} \label{prop:two-order-irreducible}
	Assume that either
	$\Delta\mu(c) > 0$ for each $c \in \I^*$
	or $\Delta\nu(d) > 0$ for each $d \in \I^*$ (or both).
	All states that adhere
	to the same placement order
	and correspond to the same macrostate
	form a single closed communicating class
	of the Markov process
	associated with the network state.
\end{proposition}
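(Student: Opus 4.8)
The plan is to combine the closedness statement of Proposition~\ref{prop:two-order-closed} with the single-queue result of Proposition~\ref{prop:one-order-irreducible}. Write $\D$ for the (finite) set of network states that adhere to $\prec$ and have network macrostate $\ell$. By Proposition~\ref{prop:two-order-closed}, $\D$ is closed under the dynamics, so it contains at least one closed communicating class and it suffices to show that any two states of $\D$ communicate. Using the symmetry between the two queues --- exchange their roles and replace $\prec$ by the reverse placement order $\succ$ --- I may assume without loss of generality that $\Delta\nu(d) > 0$ for every $d \in \I^*$. The strategy is then to funnel all of $\D$ through the subset $\K_1 \subseteq \D$ of states in which the first queue is empty: I will show that every state of $\D$ both reaches a state of $\K_1$ and is reached from a state of $\K_1$, and that $\K_1$ is a single communicating class. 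Together with the closedness of $\D$, these imply $\K_1 = \D$, which is the claim.

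I would first record two reachability facts. Since the customer at the head of a nonempty \gls{pands} queue always receives a positive service rate, repeatedly completing the service of the head-of-line customer of the first queue removes its customers one at a time and eventually produces a state in $\K_1$; hence every state of $\D$ reaches $\K_1$. For the reverse direction, recall that a network state $(c;d)$ adheres to $\prec$ precisely when the concatenated state $e = (c_1,\ldots,c_n,d_m,\ldots,d_1)$ does in the single-queue sense. Completing the service of the customer in the \emph{last} position of the second queue --- which has a positive rate because $\Delta\nu > 0$ --- triggers only a trivial \gls{pands} cascade and merely moves that customer to the back of the first queue, so it leaves $e$ unchanged while increasing the length of the first queue by one. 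Starting from the state of $\K_1$ whose second queue equals $e$ read in reverse and iterating this move, I obtain in turn every network state whose concatenation equals $e$; hence every state of $\D$ is reached from $\K_1$.

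It remains to prove that $\K_1$ is a single communicating class, which is the heart of the argument. From $(\emptyset;d) \in \K_1$, completing the service of a customer in some position $q$ of the second queue that has positive service rate moves the ejected customer, of some class $i$, to the back of the empty first queue; then completing the service of that now-lone customer, which can swap with nobody, moves it to the back of the second queue. The composite of these two transitions sends $(\emptyset;d)$ to $(\emptyset;d')$, where $d'$ is exactly the state obtained from $d$ by one transition of the \emph{closed single-queue \gls{pands} queue} with rate function $\nu$. Since $(\emptyset;d)$ lies in $\D$ exactly when $d$ adheres to $\succ$ and has macrostate $\ell$, and since $\Delta\nu(d) > 0$ for all $d$, Proposition~\ref{prop:one-order-irreducible} applies to this auxiliary closed single-queue queue and tells us that all such states $d$ communicate. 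Reproducing each of those single-queue transitions in the tandem by the composite move above (and doing the same along the reverse path) shows that all states of $\K_1$ communicate. Throughout, Proposition~\ref{prop:two-order-closed} ensures that every state produced along the way indeed belongs to $\D$.

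I expect the main obstacle to be this last step --- specifically, the observation that one transition of the closed single-queue \gls{pands} queue acting on the second queue can be simulated inside the tandem by the two-step maneuver ``eject a customer into the empty first queue, then push it straight back'', which is exactly what lets Proposition~\ref{prop:one-order-irreducible} carry the combinatorial load. A secondary point requiring care is that individual transitions of the tandem are not in general reversible (we do not assume $\Delta\mu > 0$), which is why the proof is organized around the asymmetric pair of reachability statements of the second paragraph rather than around reversing paths directly.
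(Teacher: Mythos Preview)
Your proof is correct and follows essentially the same route as the paper: funnel all states through those with one queue empty, simulate the closed single-queue \gls{pands} dynamics on the full queue via the two-step ``eject into the empty queue, then push straight back'' trick, and invoke Proposition~\ref{prop:one-order-irreducible}. The only cosmetic differences are that the paper picks the symmetric WLOG case $\Delta\mu>0$ (so it empties the second queue and works in the first) and constructs an explicit path from $(c;d)$ to $(c';d')$ rather than phrasing the argument via your hub set $\K_1$; note also that your sentence ``$\K_1=\D$'' is a slip --- what your three facts actually yield is that the single communicating class containing $\K_1$ is all of $\D$.
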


\subsubsection{Stationary analysis}
\label{subsubsec:two-steady}

We now derive the stationary distribution
of the Markov process associated with
the network state.
As in Section~\ref{subsubsec:one-steady},
we focus on the special case
where the initial state
adheres to the placement order~$\prec$,
so that, by Proposition~\ref{prop:two-order-closed},
all subsequent states also adhere
to this placement order.
Therefore, we restrict
the state space of the Markov process
to the set~$\Sigma$
that consists of all network states~$(c;d)$
that adhere to the placement order~$\prec$
and satisfy $|c| + |d| = \ell$,
where $\ell$ denotes the initial network macrostate.
Theorem~\ref{theo:two-steady-cd} below
gives the stationary distribution
of the Markov process
associated with the network state.

\begin{theorem} \label{theo:two-steady-cd}
	Assume that the Markov process
	associated with the state
	of the closed tandem network,
	with state space $\Sigma$,
	is irreducible.
	The stationary distribution
	of this Markov process
	is then given by
	\begin{equation} \label{eq:two-picd}
		\pi(c;d) = \frac1G \Phi(c) \Lambda(d),
		\quad \forall (c;d) \in \Sigma,
	\end{equation}
	where $\Phi$ and $\Lambda$ are
	the balance functions of
	the first and second queues,
	respectively,
	and the normalization constant~$G$
	is given by
	\begin{equation} \label{eq:two-Gcd}
		G = \sum_{(c;d) \in \Sigma}
		\Phi(c) \Lambda(d).
	\end{equation}
\end{theorem}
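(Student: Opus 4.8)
The plan is to follow the blueprint of the proof of Theorem~\ref{theo:one-picd}: verify that the unnormalised measure $(c;d)\mapsto\Phi(c)\Lambda(d)$ satisfies the balance equations of the Markov process on the state space~$\Sigma$, and then divide by~$G$. Restricting to~$\Sigma$ is legitimate because $\Sigma$ is closed under the dynamics by Proposition~\ref{prop:two-order-closed}; the resulting probability vector is then \emph{a} stationary distribution, and it is the unique one under the irreducibility assumption (without that assumption one argues exactly as in Remark~\ref{remark1}, using $\Phi(c)\Lambda(d)>0$). Since $\Sigma$ consists of finitely many strings with a fixed total macrostate~$\ell$, the constant~$G$ in~\eqref{eq:two-Gcd} is automatically finite, so here no analogue of a stability condition is needed.

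The main bookkeeping step is to set up the balance equation at a fixed state $(c;d)\in\Sigma$, with $c=(c_1,\ldots,c_n)$ and $d=(d_1,\ldots,d_m)$, and to split its right-hand side according to which of the two queues has just had a service completion. The total rate out of $(c;d)$ is $\mu(c)+\nu(d)$, with $\mu(\emptyset)=\nu(\emptyset)=0$, since a transition occurs precisely when a service completes in one of the two queues. A transition \emph{into} $(c;d)$ is of one of two disjoint types. In type~(i) a service completes in the first queue and the departing customer is routed to the tail of the second queue; that customer must therefore be the class-$d_m$ customer now in position~$m$ of~$d$, so the predecessor is $(\tilde c\,;d_1,\ldots,d_{m-1})$ with $\delta_p(\tilde c)=(c,d_m)$ for some~$p$, at rate $\Delta\mu(\tilde c_1,\ldots,\tilde c_p)$. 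In type~(ii) a service completes in the second queue and routes a class-$c_n$ customer to the tail of the first queue, so the predecessor is $(c_1,\ldots,c_{n-1}\,;\tilde d)$ with $\delta_p(\tilde d)=(d,c_n)$, at rate $\Delta\nu(\tilde d_1,\ldots,\tilde d_p)$. Type~(i) is vacuous when $d=\emptyset$ and type~(ii) when $c=\emptyset$, consistent with the corresponding term vanishing in the outflow. A priori $\tilde c$ and $\tilde d$ range over all of~$\I^*$; but — just as in the single-queue proof, where the analogous fact is read off Proposition~\ref{prop:one-order-closed} (with the convention that a state adheres to at most one placement order, and the treatment of Appendix~\ref{app:irreducibility} for states adhering to none) — every such predecessor already lies in~$\Sigma$, so the inner sums below may equivalently be taken over all of~$\I^*$.

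Substituting $\pi(c;d)=\tfrac1G\Phi(c)\Lambda(d)$ and factoring, the type-(i) inflow equals $\tfrac1G\,\Lambda(d_1,\ldots,d_{m-1})\sum_{\tilde c\in\I^*,\,p:\,\delta_p(\tilde c)=(c,d_m)}\Phi(\tilde c)\,\Delta\mu(\tilde c_1,\ldots,\tilde c_p)$; applying the \gls{pands} partial-balance identity~\eqref{eq:pands-partial-balance-3} to the first queue with state~$c$ and class~$d_m$ collapses the sum to $\Phi(c)$, and since $\Lambda(d_1,\ldots,d_{m-1})=\Lambda(d)\,\nu(d)$ by~\eqref{eq:oi-Phic} and the order independence of~$\nu$, the type-(i) inflow is $\tfrac1G\Phi(c)\Lambda(d)\,\nu(d)$. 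Symmetrically, using that the second queue is itself a \gls{pands} queue — so by Theorem~\ref{theo:pands} its balance function~$\Lambda$ satisfies the analogue of~\eqref{eq:pands-partial-balance-3} with $\mu,\Phi$ replaced by $\nu,\Lambda$ — together with $\Phi(c_1,\ldots,c_{n-1})=\Phi(c)\,\mu(c)$ from~\eqref{eq:oi-Phic}, the type-(ii) inflow is $\tfrac1G\Phi(c)\Lambda(d)\,\mu(c)$. Adding the two gives $\tfrac1G\Phi(c)\Lambda(d)\,(\mu(c)+\nu(d))=\pi(c;d)(\mu(c)+\nu(d))$, which is exactly the balance equation; normalising by~$G$ and invoking irreducibility finishes the argument.

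The step I expect to be the real obstacle is the one in the second paragraph: correctly enumerating \emph{all} predecessors of $(c;d)$ and establishing that they all stay inside~$\Sigma$, which relies on Proposition~\ref{prop:two-order-closed}/\ref{prop:one-order-closed} and the reversed-order convention~\ref{order-2} in the definition of adherence for the two-queue setting. Once the inflow is split cleanly into a ``first-queue'' part and a ``second-queue'' part, each part is dispatched by a single application of~\eqref{eq:pands-partial-balance-3} to the respective queue, in complete analogy with Theorem~\ref{theo:one-picd}; the product form $\Phi(c)\Lambda(d)$ is then precisely what makes the two contributions reassemble into the outflow $\mu(c)+\nu(d)$.
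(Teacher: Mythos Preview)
Your proposal is correct and follows essentially the same route as the paper: split the inflow into a first-queue part and a second-queue part, factor out the balance function of the untouched queue, and dispatch each part with a single application of~\eqref{eq:pands-partial-balance-3}. The paper phrases this explicitly as two \emph{partial} balance equations (flow out via queue~1 equals flow in via queue~2, and vice versa) and, rather than citing~\eqref{eq:pands-partial-balance-3} directly, refers back to the proof of Theorem~\ref{theo:one-picd}; but your computation of each inflow term separately is exactly that partial balance, so the difference is purely presentational.
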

\begin{proof}
	Before writing down the balance equations,
	we introduce some useful notation.
	Let $\X$ denote the subset of $\N^I$
	that consists of
	the vectors $x = (x_1, \ldots, x_I)$
	such that $x \le \ell$ and,
	for each $i, j \in \I$ with $i \prec j$,
	$x_j = 0$ whenever $x_i = 0$.
	This is the set of possible
	macrostates of the first queue.
	For each $x \in \X$,
	let $\C_x$ denote the set
	of states $c = (c_1, \ldots, c_n) \in \I^*$
	that adhere to the placement order
	and satisfy $|c| = x$.
	The set of possible states
	of the first queue is
	$\C = \bigcup_{x \in \X_\ell} \C_x$.
	Similarly, let
	$\Y$ denote the subset of $\N^I$
	that consists of
	the vectors $y = (y_1, \ldots, y_I)$
	such that $|y| \le \ell$ and,
	for each $i, j \in \I$
	with $i \prec j$,
	$y_i = 0$ whenever $y_j = 0$.
	This is the set of possible macrostates
	of the second queue.
	Also, for each $y \in \Y$, let
	$\D_y$ denote the set of
	states $d = (d_1, \ldots, d_m)$
	such that $(d_m, \ldots, d_1)$
	adheres to the placement order
	and $|d| = y$.
	The set of possible states
	of the second queue is
	$\D = \bigcup_{y \in \Y_\ell} \D_y$.
	As a result, the state space $\Sigma$ 
	can be partitioned as follows:
	\begin{equation} \label{eq:two-partition}
		\Sigma
		= \bigcup_{x \in \X}
		\C_x \times \D_{\ell - x}
		= \bigcup_{y \in \Y}
		\C_{\ell - y} \times \D_y,
	\end{equation}
	where the symbol $\times$
	stands for the Cartesian product.
	In particular, if the state
	of the first queue is equal to $c \in \C$,
	then the set of possible states
	of the second queue is $\Y_{\ell - |c|}$,
	and vice versa.
	
	To prove the theorem,
	it suffices to verify that any measure
	given by~\eqref{eq:two-picd} satisfies
	the following partial balance equations
	in each state $(c;d) \in \Sigma$,
	with $c = (c_1, \ldots, c_n)$,
	$d = (d_1, \ldots, d_m)$,
	$x = |c|$, and $y = |d|$:
	\begin{itemize}
		\item Equalize the flow
		out of state $(c;d)$
		due to a service completion
		at the first queue
		with the flow into that state
		due to an arrival at this queue,
		that is, to a service completion
		at the second queue
		(if $c \neq \emptyset$):
		\begin{equation}
			\label{eq:two-partial-balance-1}
			\pi(c;d) \, \mu(c)
			= \sum_{d' \in \D_{y + e_{c_n}}}
			\sum_{\substack{
					p = 1 \\
					\delta_p(d') = (d, c_n)
			}}^{m+1}
			\pi(c_1, \ldots, c_{n-1}; d')
			\, \Delta\nu(d'_1, \ldots, d'_p).
		\end{equation}
		\item Equalize the flow
		out of state $(c;d)$
		due to a service completion
		at the second queue
		with the flow into that state
		due to an arrival at this queue,
		that is, to a service completion
		at the first queue
		(if $d \neq \emptyset$):
		\begin{equation}
			\label{eq:two-partial-balance-2}
			\pi(c;d) \, \nu(d)
			= \sum_{c' \in \C_{x + e_{d_m}}}
			\sum_{\substack{
					p = 1 \\
					\delta_p(c') = (c,d_m)
			}}^{n+1}
			\pi(c'; d_1, \ldots, d_{m-1})
			\, \Delta\mu(c'_1, \ldots, c'_p),
		\end{equation}
	\end{itemize}
	We focus on~\eqref{eq:two-partial-balance-2}
	because~\eqref{eq:two-partial-balance-1}
	follows by symmetry.
	Assuming that $d \neq \emptyset$,
	the main argument consists
	of observing that,
	since $\Lambda(d) \, \nu(d)
	= \Lambda(d_1, \ldots, d_{m-1})$,
	a stationary measure
	given by~\eqref{eq:two-picd}
	satisfies~\eqref{eq:two-partial-balance-2}
	if and only if the balance function $\Phi$ defined by~\eqref{eq:oi-Phic} satisfies
	\begin{equation*}
		\Phi(c)
		= \sum_{c' \in \C_{x + e_{d_m}}}
		\sum_{\substack{
				p = 1 \\
				\delta_p(c') = (c, d_m)
		}}^{n+1}
		\Phi(c')
		\, \Delta\mu(c'_1, \ldots, c'_p).
	\end{equation*}
	Up to a normalization constant,
	the right-hand side
	of this equation is also that
	of the partial balance
	equation~\eqref{eq:one-balance-1}
	applied to state
	$(c_1, \ldots, c_n, d_m)$,
	since the domains of
	the outer sums
	are the same.
	The proof of Theorem~\ref{theo:one-picd} already showed that $\Phi$ satisfies this equation.
	To conclude, it suffices to observe that
	the left-hand side
	of~\eqref{eq:one-balance-1} is
	$\Phi(c_1, \ldots, c_n, d_m)
	\, \mu(c_1, \ldots, c_n, d_m)
	= \Phi(c)$.
\end{proof}

\begin{remark}
	Mutatis mutandis, Remarks~\ref{remark1}	and \ref{remark2} also apply to a closed tandem network	of two \gls{pands} queues. In particular, an equivalent of Appendix~\ref{app:irreducibility} can be derived in case the initial network state does not adhere to a placement order.
\end{remark}

\section{Application to resource management in machine clusters} \label{sec:app}

In the introduction, we already mentioned that \gls{pands} queues can be used to model several load-distribution and scheduling protocols, such as the FCFS-ALIS and \gls{fcfs} redundancy scheduling protocols, in clusters of machines in which jobs have assignment constraints.
This cluster model
can represent various queueing systems,
like the computer clusters
or manufacturing systems
mentioned in the introduction,
in which not every machine is able
to fulfill the service requirement
of any job.
It has played a central role in
several studies of product-form queueing models
over the past decade; see e.g.\ \cite{AW12,AW14,ABDV19,ABV18,BC17,C19-1,C19-2,GR20,G16}.
We now explain how \gls{pands} queues
can be applied to analyze the performance
of existing and new
load-distribution and scheduling protocols
in such clusters of machines.
As an introductory example,
in Section~\ref{subsec:app-intro},
we consider a load-distribution
protocol whereby the decision of
assigning a job to a machine
is based on
the order in which
(slots in the buffers of)
machines have become idle.
We will see that this load-distribution protocol can be interpreted as a new scheduling protocol, called \emph{cancel-on-commit}, for a redundancy scheduling system.
We then explain how the queueing model
that describes the dynamics of this protocol
can be cast as a closed tandem network
of two \gls{pands} queues
like that of Section~\ref{subsec:two}. 
In Section~\ref{subsec:app-general},
we introduce a more general framework
that encompasses other
load-distribution and scheduling protocols,
and then we give two prototypical
examples of such protocols.
In all cases,
the dynamics can be described
using a closed tandem network of \gls{pands} queues
like that introduced in Section~\ref{subsec:two}, and
deriving the stationary distribution
of the system state is a direct application
of the results of Section~\ref{subsec:two},
provided that the associated Markov process
satisfies the appropriate irreducibility conditions.

\subsection{Assign-to-the-longest-idle-slot
and cancel-on-commit}
\label{subsec:app-intro}

We first consider a cluster
made of a dispatcher and a set of machines.
Each incoming job is
\textit{a priori} compatible with several machines
but is eventually assigned to
and processed by only one of these machines.
Following the same approach
as in the recent work~\cite{ABV18},
we introduce two variants of this cluster,
one in which the dispatcher
has a central buffer to store the jobs
that have not been committed
to a machine yet,
and another
in which these uncommitted jobs
are temporarily replicated
in the buffers of several machines.
In the former case,
we introduce an assignment protocol,
called first-come-first-served
and assign-to-the-longest-idle-\emph{slot}
(FCFS-ALIS), that generalizes the
first-come-first-served
and assign-to-the-longest-idle-\emph{server}
protocol introduced in~\cite{AW14}.
In the latter case,
we introduce a new
redundancy scheduling protocol,
called cancel-on-commit,
that generalizes the
cancel-on-start protocol.

\subsubsection{Assign-to-the-longest-idle-slot}
\label{subsubsec:alis}

We start with the variant where uncommitted jobs are stored in a central buffer.
Consider a two-level buffered cluster
consisting of a dispatcher and
a set $\Se = \{1, \ldots, S\}$ of machines.
For each $s \in \Se$,
machine~$s$ has a buffer
of length~$\ell_s \in \{1, 2, \ldots\}$
that contains all jobs
assigned (we also say committed) to this machine,
either waiting or in service.
Each machine processes the jobs
in its buffer
in \gls{fcfs} order and,
for each $s \in \Se$,
the service time of a job on machine~$s$
is exponentially distributed
with rate~$\mu_s$.
Incoming jobs enter the system via the dispatcher,
and the dispatcher is in charge
of assigning these jobs to
(the buffer of) a machine.
The dispatcher also has its own buffer
where incoming jobs can be stored
in case they cannot be immediately
assigned to a machine due to full buffers.
In this way,
each job present in the system
is either in the buffer of a machine,
in which case we say that it
has been assigned or committed to this machine,
or in the buffer of the dispatcher,
waiting for an assignment.

Each incoming job has a type
that determines the set of machines
to which this job can be assigned.
The set of job types is denoted
by $\K = \{1, \ldots, K\}$ and,
for each $k \in \K$,
type-$k$ jobs arrive according to
a Poisson process with rate $\nu_k$
and can be assigned to any machine
within the set $\Se_k \subseteq \Se$.
Conversely, for each $s \in \Se$,
we let $\K_s \subseteq \K$ denote
the set of job types that can be
assigned to machine~$s$
(that is, such that $s \in \Se_k$).
This defines a bipartite \emph{assignment} graph
between job types and machines,
in which there is an edge between
a type and a machine if
the jobs of this type can be
assigned to this machine;
in this case, we say that these jobs
are \emph{compatible} with the machine.
In the examples of this section,
job types will be identified
by letters rather than numbers
to avoid confusion.
In the assignment graph
of \figurename~\ref{fig:app-intro-assignment}
for instance,
type-$A$ jobs are compatible
with machines~$1$ and $3$
and type-$B$ jobs with machines~$2$ and $3$,
so that $\Se_A = \{1,3\}$,
$\Se_B = \{2,3\}$,
$\K_1 = \{A\}$, $\K_2 = \{B\}$,
and $\K_3 = \{A,B\}$.

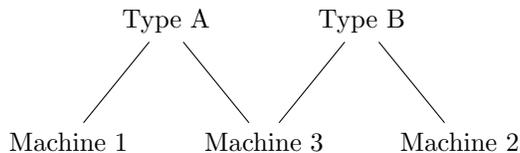
\begin{figure}[ht]
	\centering
	\begin{tikzpicture}
		\def\width{2.6cm}
		\def\height{1.6cm}
		
		\node (t1) {Type A};
		\node (t2)
		at ($(t1)+(\width,0)$) {Type B};
		
		\node (s1)
		at ($(t1)+(-0.5*\width,-\height)$)
		{Machine 1};
		\node (s3)
		at ($(t1)!.5!(t2)-(0,\height)$)
		{Machine 3};
		\node (s2)
		at ($(t2)+(0.5*\width,-\height)$)
		{Machine 2};
		
		\draw (t1) -- (s1);
		\draw (t1) -- (s3);
		\draw (t2) -- (s3);
		\draw (t2) -- (s2);
	\end{tikzpicture}
	\caption{A bipartite assignment graph
		between job types and machines.
		To avoid any confusion
		in the rest of this section,
		job types are identified
		by letters
		rather than numbers
		in the examples.
	}
	\label{fig:app-intro-assignment}
\end{figure}

An incoming type-$k$ job
is immediately assigned to
the buffer of a machine in $\Se_k$
if at least one of these buffers
has idle (that is, empty) slots,
otherwise the job is left unassigned
in the dispatcher's buffer.
We assume that, for each $k \in \K$,
the dispatcher's buffer can contain
at most $\ell_k \in \{0, 1, 2, \ldots\}$
unassigned type-$k$ jobs,
so that an incoming type-$k$ job
is rejected (and considered permanently lost)
if it arrives while there are
already $\ell_k$ unassigned type-$k$ jobs.
In other words, the dispatcher's buffer
consists of $\ell_k$ slots for type-$k$ jobs,
for each $k \in \K$,
and these slots cannot be occupied
by jobs of other types.
These values $\ell_k$ can be used
to differentiate service between job types.

When the job in service
on machine~$s$ completes service,
the oldest unassigned job
of a type in $\K_s$, if any,
is immediately removed
from the dispatcher's buffer
and added to the buffer of machine~$s$,
following which
machine~$s$ immediately starts processing
the oldest job in its buffer.
Conversely, when a type-$k$ job arrives,
the dispatcher applies the following procedure:
\begin{enumerate}[label=(\roman*)]
	\item \label{alis-1}
	if one or more machines
	in the set $\Se_k$
	have space in their buffer,
	the dispatcher selects the buffer slot,
	among those corresponding to
	these machines,
	that has been idle the longest,
	and assigns the job to the corresponding machine;
	\item \label{alis-2}
	otherwise, if there are currently
	fewer than
	$\ell_k$ unassigned type-$k$ jobs,
	the dispatcher puts the incoming job
	in its own buffer
	until this job can be assigned
	to the buffer of a machine in $\Se_k$;
	\item \label{alis-3}
	otherwise, the job is rejected.
\end{enumerate}
The assignment rule described
in step~\ref{alis-1} will be called \gls{alis},
in reference to the
assign-to-the-longest-idle-server (also ALIS)
assignment rule~\cite{AW12,AW14},
which corresponds to the special case
where $\ell_s = 1$ for each $s \in \Se$%
\footnote{%
	In the remainder,
	when we use the acronym ALIS (resp.\ FCFS-ALIS),
	we refer to our generalization of
	the ALIS (resp.\ FCFS-ALIS) protocol,
	in which the ``S'' refers to ``slot''.
	If we need to be more specific,
	we will replace ``ALIS''
	with either ``server-based ALIS''
	or ``slot-based ALIS''.
}.
Note that there cannot be
an unassigned job of a type in $\K_s$
while there is space in the buffer of machine~$s$,
as in this case this job would have been assigned to
the buffer of this machine earlier.
We assume that all assignment operations
occur immediately upon a job arrival
or a service completion.
Overall,
we obtain a protocol called \gls{fcfs-alis},
in which machines process jobs
in their buffer in \gls{fcfs} order,
and incoming jobs are assigned
to machines according to
the \gls{alis} assignment rule.
	
	Our ALIS assignment rule
	can be rephrased as follows in terms of tokens.
	This alternative description will, among others, be useful for the analysis of Section \ref{subsubsec:app-intro-det}.
	Assume that each machine sends
	a token to the dispatcher
	whenever a job completes service,
	and that the buffer keeps
	an ordered list of these tokens,
	with the oldest tokens at the head.
	Then, when a new job arrives,
	the dispatcher assigns this job
	to the compatible machine
	whose token appears earliest in this list,
	and deletes the corresponding token.
	Everything goes as if each job
	seized a token of a machine
	when it starts occupying a slot in its buffer
	and released this token upon service completion.
	Each token in the dispatcher's list
	corresponds to an idle slot
	in the buffer of the corresponding machine.
	The intuition behind
	this assignment rule is that,
	if a slot in a buffer
	has been idle for a long time,
	it is likely that
	the corresponding machine
	is relatively less loaded than others,
	so that we should assign
	the incoming job to this machine if possible.

\begin{remark}\label{rem:fcfs-alis}
	The assign-to-the-longest-idle-server (ALIS) protocol
	introduced in~\cite{AW12}
	corresponds to the degenerate case
	where $\ell_s = 1$ for each $s \in \Se$
	and $\ell_k = 0$ for each $k \in \K$,
	so that an incoming job is rejected if
	all its compatible machines
	are already busy.
	It was shown in \cite{AW12} that,
	in this case, performance is insensitive
	to the job size distribution beyond its mean.
	A protocol
	that generalizes this \gls{alis} protocol
	to a scenario where
	$\ell_s \in \{1, 2, \ldots\}$
	for each $s \in \Se$
	and preserves
	its insensitivity property
	was introduced in \cite{C19-1};
	this protocol is also a special case
	of the protocol that we have just defined.
	The first-come-first-served
	and assign-to-the-longest-idle-server (FCFS-ALIS)
	protocol
	introduced in \cite{AW14}
	corresponds to the degenerate case
	where $\ell_s = 1$ for each $s \in \Se$
	and $\ell_k = +\infty$ for each $k \in \K$.
	In practice, taking larger values
	for $\ell_s$ for $s \in \Se$ can be beneficial
	if there is a communication delay
	between the dispatcher and the machines,
	such that a job cannot enter
	service immediately
	after it is assigned to a machine.
	Furthermore, taking unequal values
	for $\ell_k$ for $k \in \K$
	can be useful
	to differentiate service between job types.
	In the remainder of this section,
	we shall assume that
	$\ell_k \in \{1, 2, \ldots\}$
	for each $k \in \K$
	and $\ell_s \in \{1, 2, \ldots\}$
	for each $s \in \Se$.
	Although the protocol
	defined above can be easily extended
	to the case $\ell_k = +\infty$,
	generalizing the queueing analysis
	of Section~\ref{subsubsec:app-intro-det}
	is less straightforward
	and will be left for future work.
\end{remark}

\subsubsection{Cancel-on-commit} \label{subsubsec:cancel-on-commit}

It was observed in \cite{ABV18}
that the server-based FCFS-ALIS protocol,
which is a special case of
the slot-based FCFS-ALIS protocol
introduced in Section~\ref{subsubsec:alis}
as per Remark~\ref{rem:fcfs-alis},
leads to the same dynamics as a redundancy scheduling protocol, called \emph{cancel-on-start}~\cite{G16, BC17}, whereby replicas of a job may be routed to multiple machines, and redundant replicas are canceled whenever a replica enters service at a machine. 
Using a similar approach,
we now show that the slot-based \gls{fcfs-alis} protocol
can be reinterpreted as a redundancy scheduling protocol,
called \emph{cancel-on-commit},
that generalizes the cancel-on-start protocol.
The basic idea is to let go of the dispatcher's central buffer for uncommitted jobs and, instead, replicate these jobs in the buffers of all their compatible machines.

Consider the following reinterpretation
of the cluster model of Section~\ref{subsubsec:alis}.
The cluster still consists of a dispatcher
and a set $\Se = \{1, \ldots, S\}$ of machines,
and incoming jobs have a type in
the set $\K = \{1, \ldots, K\}$
that determines the set of machines
to which they can be assigned.
We again let $\Se_k$ denote the set of machines
that are compatible with type-$k$ jobs,
for each $k \in \K$,
and $\K_s$ the set of job types
that are compatible with machine~$s$,
for each $s \in \Se$.
The main difference with Section~\ref{subsec:app-intro}
is that the dispatcher no longer has
a central buffer to store unassigned jobs.
Instead, for each $s \in \Se$,
machine~$s$ has a two-level buffer.
The first level of this buffer
consists of $\ell_s$ slots
occupied by jobs that have been
assigned or \emph{committed} to this machine.
The second level of this buffer
will contain unassigned job replicas,
as explained in the next paragraph.
From now on, the first (resp.\ second) level
of the buffer of machine~$s$ will be called its first-level (resp.\ second-level) buffer for brevity.
Each machine processes the jobs
in its first-level buffer
in \gls{fcfs} order.

When a type-$k$ job arrives,
the dispatcher immediately
sends a replica of this job
to each machine within the set~$\Se_k$.
What happens next depends on
the state of these machines,
and we distinguish three cases:
\begin{enumerate}[label=(\roman*)]
	\item \label{cac-1}
	at least one machine in $\mathcal{S}_k$
	has available space in its first-level buffer,
	\item \label{cac-2}
	none of the machines in $\Se_k$ has available space in its first-level buffer,
	but the second-level buffers
	of these machines each contain
	less than $\ell_k$ uncommitted
	type-$k$ jobs, or
	\item \label{cac-3}
	the first-level buffers of the machines in $\Se_k$ are all full,
	and their second-level buffers
	already contain $\ell_k$
	uncommitted type-$k$ jobs.
\end{enumerate}
In the first case,
we let $s$ denote the machine in $\Se_k$
with the slot in its first-level buffer
that has been idle the longest
out of all idle slots
in the first-level buffers
of the machines in $\Se_k$.
The replica sent to machine~$s$
takes this first-level slot,
and we say that the job
commits to machine~$s$.
All other replicas of the job
are immediately canceled.
The committed replica then awaits
its service by machine~$s$,
rendered in FCFS fashion,
and leaves upon service completion.
In case~\ref{cac-2}, each replica of the job
takes a slot in the second-level buffer
of a machine in $\Se_k$
and waits until it is either canceled
or committed to this machine.
Whenever a job completes service at a machine,
a first-level buffer slot becomes available.
This buffer slot is immediately taken
by the longest waiting replica
in the machine's second-level buffer, if any.
The corresponding job is committed
to the machine,
and all other replicas of this job are canceled.
Finally, in case~\ref{cac-3},
the incoming job is rejected
and considered permanently lost.

One can verify that this \emph{cancel-on-commit} protocol leads to the same dynamics as the \gls{fcfs-alis} protocol described in Section~\ref{subsubsec:alis}. The key difference, which does not impact the dynamics, is that uncommitted jobs wait in the buffers of their compatible machines instead of waiting in the dispatcher's centralized buffer.
We will see in Section~\ref{subsubsec:app-intro-det} that these two systems can be cast as a closed tandem of two \gls{pands} queues, so that Theorem~\ref{theo:two-steady-cd} also provides the stationary distribution of the job population in a redundancy scheduling system with the cancel-on-commit protocol.

\subsubsection{Interpretation as a closed tandem network of pass-and-swap queues}
\label{subsubsec:app-intro-det}

The objective of this section is to
cast the above-mentioned cluster model
as a closed tandem network of two \gls{pands} queues.
To this end, we first need to give
(yet) another perspective on
the dynamics of the system.

\paragraph{Token-based central-queue perspective}

The dynamics of the machine cluster
can also be described
by considering tokens,
as if each job present in the system
held a token that identifies
the slot occupied by this job.
More specifically,
the set of token classes
is $\K \sqcup \Se$,
where $\sqcup$ denotes
the disjoint union operator\footnote{%
	In practice, to make sure that the sets $\K$ and $\Se$ remain disjoint, job types and machines can be renumbered if necessary.
	No confusion will arise
	from this slight abuse of notation,
	as we will always be using
	the letter~$k$
	for a token class in $\K$,
	associated with a job type,
	and the letters~$s$ and~$t$
	for a token class in $\Se$,
	associated with a machine.
	In the examples,
	the index of a token class will be
	a letter if this token class
	is associated with a job type
	and a number if this token class
is associated with a machine.}.
There are $\ell_s$ class-$s$ tokens,
for each $s \in \Se$, and
$\ell_k$ class-$k$ tokens,
for each $k \in \K$.
The former tokens are those we already referred to in Section~\ref{subsubsec:alis}.
More particularly, focusing
on the cluster model
of Section~\ref{subsubsec:alis},
each class-$s$ token
corresponds to a specific slot
in the buffer of machine~$s$,
in the sense that a job holds this token
when it occupies the corresponding slot.
Similarly, each class-$k$ token
corresponds to a specific slot
that can be occupied by type-$k$ jobs
in the dispatcher's buffer.
Focusing on the cluster model of Section~\ref{subsubsec:cancel-on-commit},
each class-$s$ token
corresponds to a specific slot
in the first-level buffer of machine~$s$,
while each class-$k$ token
corresponds to a specific slot
that can be occupied by type-$k$ jobs
in the second-level buffers
of the machines in $\Se_k$.
In general, we can think of
a class-$s$ token
as a token that ``belongs'' to machine~$s$,
and of a class-$k$ token
as a token that ``belongs'' to type-$k$ jobs.
Each token is held by a job
if and only if this job
is occupying the corresponding slot,
otherwise the token is available.
When a type-$k$ job enters the system,
this job seizes an available token
of a class $s \in \Se_k$, if any,
otherwise it seizes
an available class-$k$ token, if any.
When a job completes service
at machine~$s \in \Se$,
its class-$s$ token is passed on to
an unassigned job
of a type $k \in \K_s$, if any,
in which case this unassigned job
releases the class-$k$ token that it was holding;
otherwise, the class-$s$ token is released.

The dynamics of the cluster
can be entirely described
by the movements of these tokens,
and long-term performance metrics,
like the mean sojourn time of jobs,
can be derived from
the long-term expected number of tokens
held by jobs present in the system.
These tokens will play the part
of customers in the closed tandem network
of two \gls{pands} queues
that we will now introduce.
The first queue
will contain tokens held by jobs
present in the system,
either assigned to a machine
or waiting for an assignment,
and these tokens will be ordered
according to the arrival order of jobs in the system.
In particular, contrary to
Sections~\ref{subsubsec:alis}
and~\ref{subsubsec:cancel-on-commit},
we will adopt a central-queue perspective
in which all (tokens held by)
jobs present in the system
are gathered in a single queue.
This representation has become
standard in product-form queueing
models representing
machine clusters~\cite{ABDV19,ABV18,BC17,GR20,G16}.
The second queue
of the closed tandem network
will contain available tokens,
and their order will partly
reflect their release order.
We will see that the dynamics induced by the
\gls{fcfs-alis} and cancel-on-commit protocols
are captured by this model,
provided that the \gls{pands} mechanism is
applied with a suitable swapping graph.
We first give an overview of
the closed tandem network
and then we will detail the dynamics
of each queue separately.

\paragraph{Closed tandem network of two \gls{pands} queues}

We consider a closed tandem network
of two \gls{pands} queues
like the one described
in Section~\ref{subsec:two}.
Customers represent tokens
and, in the remainder,
we will always refer to them as tokens.
The set of token classes
is $\I = \K \sqcup \Se$, and this
set has cardinality $I = K + S$.
The first queue contains
the tokens held by jobs
present in the system
and the second queue
contains the available tokens.
A token that leaves the first queue
immediately enters the second queue
as a token of the same class,
and conversely.
The overall number of class-$i$ tokens
in the network is $\ell_i$,
for each $i \in \I$.

Both \gls{pands} queues
have the same swapping graph,
so that we obtain a closed tandem network
of two \gls{pands} queues
like that described
in Section~\ref{subsec:two}.
The placement order is as follows: $s \prec k$
for each $s \in \Se$ and $k \in \K_s$
(or equivalently,
for each $k \in \K$ and $s \in \Se_k$).
The corresponding placement graph
is obtained from the assignment graph
introduced in Section~\ref{subsubsec:alis}
by orienting edges from
the (token classes that correspond to) machines
towards the
(token classes that correspond to) job types.
For example, the placement graph
associated with the assignment graph
of \figurename~\ref{fig:app-intro-assignment}
is shown in
\figurename~\ref{fig:app-intro-placement}.
The swapping graph of the network
is simply the underlying
undirected graph
of the placement graph.
This placement order guarantees that,
if a class-$k$ token is in the first queue,
then, for each $s \in \Se_k$,
all class-$s$ tokens
are also in the first queue
and precede this class-$k$ token.
This corresponds to the fact that,
in the cluster,
a type-$k$ job can only be unassigned
if the buffers of all machines
in $\Se_k$ are full.
We will come back to this interpretation
later when we specify the dynamics
of each queue in detail.
An example of a network state that adheres to this placement order is given by a state where all tokens are lined up in the second queue,
with first those of the classes in~$\K$ in an arbitrary order, and then those of the classes in~$\Se$, also in an arbitrary order. We assume that the network starts in such a state, which corresponds to an empty system in which all tokens are available.

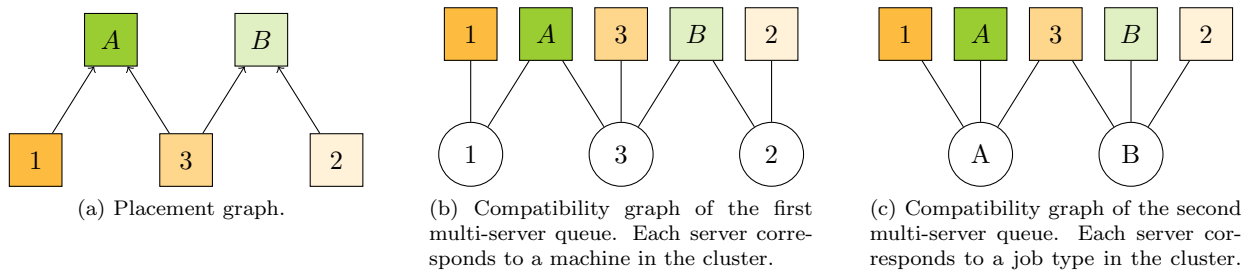
\begin{figure}[ht]
	\centering
	\subfloat[Placement graph.
	\label{fig:app-intro-placement}]{
		\begin{tikzpicture}
			\def\width{2cm}
			\def\height{1.6cm}
			
			\node[class, darkgreen] (c1) {$A$};
			\node[class, lightgreen] (c2)
			at ($(c1)+(\width,0)$)
			{$B$};
			
			\node[class, darkorange] (s1)
			at ($(c1)-(.5*\width,0)-(0,\height)$)
			{$1$};
			\node[class, neutralorange] (s3)
			at ($(c1)!.5!(c2)-(0,\height)$)
			{$3$};
			\node[class, lightorange] (s2)
			at ($(c2)+(.5*\width,0)-(0,\height)$) {$2$};
			
			\draw[<-] (c1) -- (s1);
			\draw[<-] (c1) -- (s3);
			\draw[<-] (c2) -- (s3);
			\draw[<-] (c2) -- (s2);
		\end{tikzpicture}
	}
	\hfill
	\subfloat[Compatibility graph
	of the first multi-server queue.
	Each server corresponds to
	a machine in the cluster.
	\label{fig:app-intro-det-msq1}]{
		\begin{tikzpicture}
			\def\width{2cm}
			\def\height{1.6cm}
			
			\node[class, darkorange] (cs1)
			{$1$};
			\node[class, neutralorange] (cs3)
			at ($(cs1)+(\width,0)$) {$3$};
			\node[class, lightorange] (cs2)
			at ($(cs3)+(\width,0)$) {$2$};
			
			\node[class, darkgreen] (c1)
			at ($(cs1)!.5!(cs3)$)
			{$A$};
			\node[class, lightgreen] (c2)
			at ($(cs2)!.5!(cs3)$)
			{$B$};
			
			\node[server] (s1)
			at ($(cs1)-(0,\height)$) {1};
			\node[server] (s3)
			at ($(cs3)-(0,\height)$) {3};
			\node[server] (s2)
			at ($(cs2)-(0,\height)$) {2};
			
			\draw (cs1) -- (s1);
			\draw (cs3) -- (s3);
			\draw (cs2) -- (s2);
			\draw (s1) -- (c1) -- (s3) -- (c2) -- (s2);
		\end{tikzpicture}
	}
	\hfill
	\subfloat[Compatibility graph
	of the second multi-server queue.
	Each server corresponds to
	a job type in the cluster.
	\label{fig:app-intro-det-msq2}]{
		\begin{tikzpicture}
			\def\width{2cm}
			\def\height{1.6cm}
			
			\node[class, darkorange] (cs1)
			{$1$};
			\node[class, neutralorange] (cs3)
			at ($(cs1)+(\width,0)$) {$3$};
			\node[class, lightorange] (cs2)
			at ($(cs3)+(\width,0)$) {$2$};
			
			\node[class, darkgreen] (c1)
			at ($(cs1)!.5!(cs3)$)
			{$A$};
			\node[class, lightgreen] (c2)
			at ($(cs2)!.5!(cs3)$)
			{$B$};
			
			\node[server] (s1)
			at ($(c1)-(0,\height)$) {A};
			\node[server] (s2)
			at ($(c2)-(0,\height)$) {B};
			
			\draw (c1) -- (s1);
			\draw (c2) -- (s2);
			\draw (cs1) -- (s1) -- (cs3)
			-- (s2) -- (cs2);
		\end{tikzpicture}
	}
	\caption{%
		Closed tandem network
		of two \gls{pands} queues
		associated with the assignment graph
		of \figurename~\ref{fig:app-intro-assignment}.
		The classes and servers associated
		with job types are identified by letters,
		and those associated with machines
		are identified by numbers
		(in accordance with
		\figurename~\ref{fig:app-intro-assignment}).
		The class colors are visual aids
		that help distinguish between
		the classes associated
		with job types (in green)
		and those associated
		with machines (in orange).
		The same class and server indexing
		and color code
		will be adopted in \figurename s~\ref{fig:app-network}
		and \ref{fig:app3-app1}.
	}
	\label{fig:app-intro-det}
\end{figure}

\paragraph{First queue}

A token is in the first queue
when it is held by a job present in the system,
whether this job is assigned to
the buffer of a machine or unassigned.
This queue is a multi-server queue,
like that of Example~\ref{ex:pands},
and each server represents
a machine in the cluster.
More specifically,
the set of servers is $\Se$ and,
for each $s \in \Se$,
the service rate of server~$s$
is equal to the service rate
$\mu_s$ of machine~$s$.
Using the same index set~$\Se$
for the set of servers in the first queue
and for the token classes associated with machines
may seem to be ambiguous at first,
but we will always specify whether
we are referring to a server $s \in \Se$
or to a token class $s \in \Se$.
For each $s \in \Se$,
a class-$s$ token
is compatible with
server~$s$ and with this server only.
Following the same notation
as in Example~\ref{ex:oi},
we let $\Se_s = \{s\}$.
Additionally, for each $k \in \K$,
the set of servers that can process
class-$k$ tokens is $\Se_k$,
which corresponds, in the cluster, to
the set of machines
to which type-$k$ jobs can be assigned.
In the end, for each $s \in \Se$,
server~$s$ can process
the tokens that belong to machine~$s$
plus the tokens that belong
to the job types in $\K_s$.
For example, the compatibility graph
of the first queue
in the tandem network associated
with the assignment graph of
\figurename~\ref{fig:app-intro-assignment}
is shown in \figurename~\ref{fig:app-intro-det-msq1}.
Each server processes
its compatible tokens
in \gls{fcfs} order.
Note that, for each $k \in \K$,
the set $\Se_k$ now plays
two roles in the tandem network:
it represents the set of servers
that can process class-$k$ tokens
in the first queue
as well as the set of token classes
that can be swapped with class-$k$ tokens.

Recall that, according to the placement order,
if a class-$k$ token is in the first queue,
then, for each $s \in \Se_k$,
all class-$s$ tokens are also in the first queue,
at positions that precede the position
of the class-$k$ token.
Given the compatibility graph,
this implies that a class-$k$ token
will actually never be \textit{in service}
in this queue.
Therefore, the only way that a class-$k$ token
leaves the first queue
is if a token of
a class $s \in \Se_k$ completes service
and ejects this class-$k$ token.
In the cluster, this means that
a job completes service on machine~$s$
and that the token
released by this job
is seized by a type-$k$ job that
was unassigned so far
(so that this type-$k$ job
releases its own class-$k$ token).

We let $c = (c_1, \ldots, c_n) \in \I^*$
denote the state of the first queue.
As observed before,
the placement order guarantees that,
if there is a token of a class $k \in \K$
at a position $p \in \{1, \ldots, n\}$
in the first queue,
then, for each $s \in \Se_k$,
each class-$s$ token is also
in the first queue,
at a position $q \in \{1, \ldots, p-1\}$
that precedes that
of the class-$k$ token.
Therefore,
the state space of the state
of the first queue is a strict subset $\C$
of the set of sequences $c \in \I^*$
such that $|c| \le \ell$,
where $\ell = (\ell_1, \ldots, \ell_I)$
is the vector that gives
the maximum number of tokens of each class.
The overall and per-token service rates
are still given by \eqref{eq:mu}
and \eqref{eq:delta-mu},
with the sets $\Se_k$ for $k \in \K$
and $\Se_s$ for $s \in \Se$
as defined above.
Because of the placement order,
\eqref{eq:delta-mu} simplifies to
$\Delta\mu(c_1, \ldots, c_p) = 0$
for each $p \in \{1, \ldots, n\}$
such that $c_p \in \K$.
Furthermore,
the compatibility graph
guarantees that,
for each $p \in \{1, \ldots, n\}$
such that $s = c_p \in \Se$, we have
$\Delta\mu(c_1, \ldots, c_p) = \mu_s$
if $|(c_1, \ldots, c_{p-1})|_s = 0$
and $\Delta\mu(c_1, \ldots, c_p) = 0$
otherwise.

Now assume that a token
in some position $p \in \{1, \ldots, n\}$
completes service
and let $s = c_p \in \Se$
denote this token's class.
Because of the \gls{pands} mechanism,
only one of these two types of transitions can occur:
\begin{enumerate}[label=(\roman*)]
	\item \label {item:trans-1} If the first queue
	contains a token of a class in $\K_s$
	(necessarily in position at least $p+1$
	because of the placement order),
	the class-$s$ token replaces
	the first of these tokens,
	say of class $k \in \K_s$,
	and the ejected class-$k$ token joins
	the second queue.
	In the cluster, this means that
	a token from machine~$s$
	is released by a departing job
	and is immediately seized by
	an unassigned type-$k$ job;
	this type-$k$ job releases
	its class-$k$ token,
	which is appended
	to the queue of available tokens.
	\item \label{item:trans-2}
	If there is no token
	with a class in $\K_s$
	in the first queue,
	the class-$s$ token
	leaves this queue
	and joins the second queue.
	In the cluster, this means that
	a token from machine~$s$ is released
	by a departing job
	and is immediately appended to
	the queue of available tokens
	because there is
	no unassigned job
	of a type in $\K_s$.
\end{enumerate}
In both cases, a token leaves
the first queue
and is added to the second,
meaning that a token is released
in the cluster.
Examples of transitions
are shown in \figurename~\ref{fig:app-network}
for the cluster
of \figurename~\ref{fig:app-intro-assignment}.
In the state of \figurename~\ref{fig:app-network-1},
all tokens of classes in $\Se$ are held
by jobs present in the system,
and there is also an unassigned type-$A$ job.
From \figurename~\ref{fig:app-network-1}
to \figurename~\ref{fig:app-network-2},
the oldest class-$2$ token completes service
in the first queue
and joins the second queue.
In the cluster,
this means that a job completes service
on machine~2 and
its token is added to the queue
of available tokens because
there is no unassigned job of
a compatible type.
This is a transition of type~\ref{item:trans-2}.
From \figurename~\ref{fig:app-network-2}
to \figurename~\ref{fig:app-network-3},
the oldest class-$3$ token
completes service in the first queue;
this token replaces the class-$A$ token,
which joins the second queue.
In the cluster,
this means that a job completes service
on machine~3 and
that its token is seized by
an unassigned type-$A$ job.
This is a transition of type~\ref{item:trans-1}.
The transition
from \figurename~\ref{fig:app-network-3}
to \figurename~\ref{fig:app-network-4},
triggered by a service completion
in the second queue,
will be commented on later.

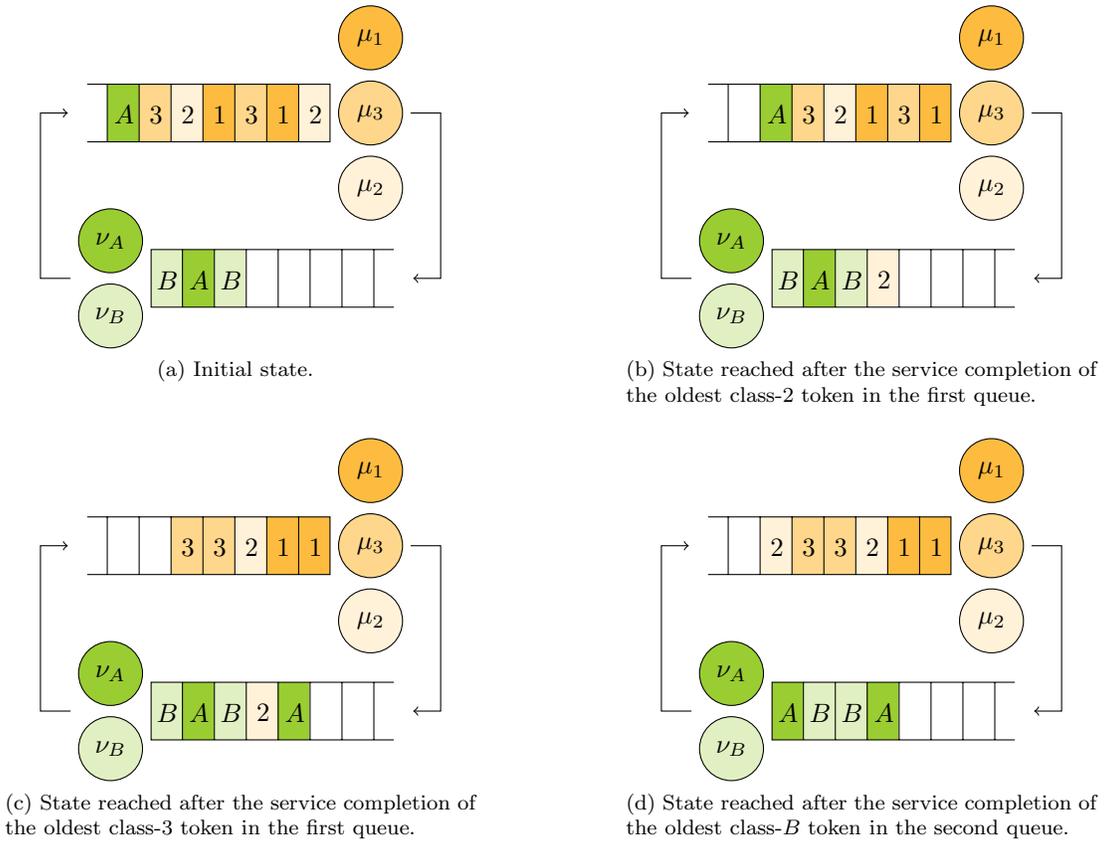
\begin{figure}[ht]
	\centering
	\subfloat[Initial state. \label{fig:app-network-1}]{%
		\begin{tikzpicture}
			\def\width{1cm}
			\def\height{2.2cm}
			
			\node[fcfs=8,
			rectangle split part fill
			={white, \darkgreen,
				\neutralorange, \lightorange, \darkorange,
				\neutralorange, \darkorange, \lightorange},
			] (queue1) {
				\nodepart{one}{\phantom{1}}
				\nodepart{two}{\phantom{1}}
				\nodepart{three}{\phantom{1}}
				\nodepart{four}{\phantom{1}}
				\nodepart{five}{\phantom{1}}
				\nodepart{six}{\phantom{1}}
				\nodepart{seven}{\phantom{1}}
				\nodepart{eight}{\phantom{1}}
			};
			\fill[white] ([xshift=-\pgflinewidth-1pt,
			yshift=-\pgflinewidth+1pt]queue1.north west)
			rectangle ([xshift=\pgflinewidth+4pt,
			yshift=\pgflinewidth-1pt]queue1.south west);
			
			\node at ($(queue1.two)+(.1cm,.09cm)$)
			{$A$};
			\node at ($(queue1.three)+(.1cm,.09cm)$)
			{$3$};
			\node at ($(queue1.four)+(.1cm,.09cm)$)
			{$2$};
			\node at ($(queue1.five)+(.1cm,.09cm)$)
			{$1$};
			\node at ($(queue1.six)+(.1cm,.09cm)$)
			{$3$};
			\node at ($(queue1.seven)+(.1cm,.09cm)$)
			{$1$};
			\node at ($(queue1.eight)+(.1cm,.09cm)$)
			{$2$};
			
			\node[server, anchor=west, fill=\darkorange]
			(mu1) at ($(queue1.east)+(.1cm,0)+(0,1cm)$)
			{$\mu_1$};
			\node[server, anchor=west, fill=\neutralorange]
			(mu3) at ($(queue1.east)+(.1cm,0)$)
			{$\mu_3$};
			\node[server, anchor=west, fill=\lightorange]
			(mu2) at ($(queue1.east)+(.1cm,0)-(0,1cm)$)
			{$\mu_2$};
			
			\node[fcfs=8,
			rectangle split part fill
			={\lightgreen, \darkgreen, \lightgreen, white},
			] (queue2) at ($(queue1)+(\width,0)-(0,\height)$) {
				\nodepart{one}{\phantom{1}}
				\nodepart{two}{\phantom{1}}
				\nodepart{three}{\phantom{1}}
				\nodepart{four}{\phantom{1}}
				\nodepart{five}{\phantom{1}}
				\nodepart{six}{\phantom{1}}
				\nodepart{seven}{\phantom{1}}
				\nodepart{eight}{\phantom{1}}
			};
			\fill[white] ([xshift=\pgflinewidth+1pt,
			yshift=-\pgflinewidth+1pt]queue2.north east)
			rectangle ([xshift=-\pgflinewidth-4pt,
			yshift=\pgflinewidth-1pt]queue2.south east);
			
			\node[server, anchor=east, fill=\darkgreen]
			(nu1) at ($(queue2.west)-(.1cm,0)+(0,.5cm)$)
			{$\nu_A$};
			\node[server, anchor=east, fill=\lightgreen]
			(nu2) at ($(queue2.west)-(.1cm,0)-(0,.5cm)$)
			{$\nu_B$};
			
			\node at ($(queue2.one)+(.1cm,.09cm)$)
			{$B$};
			\node at ($(queue2.two)+(.1cm,.09cm)$)
			{$A$};
			\node at ($(queue2.three)+(.1cm,.09cm)$)
			{$B$};
			
			\draw[->] ($(nu1.west)!.5!(nu2.west)-(.1cm,0)$)
			-- ($(nu1.west)!.5!(nu2.west)-(.5cm,0)$)
			|- ($(queue1.west)-(.1cm,0)$);
			\draw[->] ($(mu3.east)+(.1cm,0)$)
			-- ($(mu3.east)+(.5cm,0)$)
			|- ($(queue2.east)+(.1cm,0)$);
			
			\node at ($(mu1.east)+(.85cm,0)$) {};
			\node at ($(nu1.west)-(.85cm,0)$) {};
		\end{tikzpicture}%
	}
	\hskip 2cm
	\subfloat[State reached
	after the service completion
	of the oldest class-$2$ token
	in the first queue. \label{fig:app-network-2}]{%
		\begin{tikzpicture}
			\def\width{1cm}
			\def\height{2.2cm}
			
			\node[fcfs=8,
			rectangle split part fill
			={white, white, \darkgreen,
				\neutralorange, \lightorange, \darkorange,
				\neutralorange, \darkorange},
			] (queue1) {
				\nodepart{one}{\phantom{1}}
				\nodepart{two}{\phantom{1}}
				\nodepart{three}{\phantom{1}}
				\nodepart{four}{\phantom{1}}
				\nodepart{five}{\phantom{1}}
				\nodepart{six}{\phantom{1}}
				\nodepart{seven}{\phantom{1}}
				\nodepart{eight}{\phantom{1}}
			};
			\fill[white] ([xshift=-\pgflinewidth-1pt,
			yshift=-\pgflinewidth+1pt]queue1.north west)
			rectangle ([xshift=\pgflinewidth+4pt,
			yshift=\pgflinewidth-1pt]queue1.south west);
			
			\node at ($(queue1.three)+(.1cm,.09cm)$)
			{$A$};
			\node at ($(queue1.four)+(.1cm,.09cm)$)
			{$3$};
			\node at ($(queue1.five)+(.1cm,.09cm)$)
			{$2$};
			\node at ($(queue1.six)+(.1cm,.09cm)$)
			{$1$};
			\node at ($(queue1.seven)+(.1cm,.09cm)$)
			{$3$};
			\node at ($(queue1.eight)+(.1cm,.09cm)$)
			{$1$};
			
			\node[server, anchor=west, fill=\darkorange]
			(mu1) at ($(queue1.east)+(.1cm,0)+(0,1cm)$)
			{$\mu_1$};
			\node[server, anchor=west, fill=\neutralorange]
			(mu3) at ($(queue1.east)+(.1cm,0)$)
			{$\mu_3$};
			\node[server, anchor=west, fill=\lightorange]
			(mu2) at ($(queue1.east)+(.1cm,0)-(0,1cm)$)
			{$\mu_2$};
			
			\node[fcfs=8,
			rectangle split part fill
			={\lightgreen, \darkgreen, \lightgreen, \lightorange, white},
			] (queue2) at ($(queue1)+(\width,0)-(0,\height)$) {
				\nodepart{one}{\phantom{1}}
				\nodepart{two}{\phantom{1}}
				\nodepart{three}{\phantom{1}}
				\nodepart{four}{\phantom{1}}
				\nodepart{five}{\phantom{1}}
				\nodepart{six}{\phantom{1}}
				\nodepart{seven}{\phantom{1}}
				\nodepart{eight}{\phantom{1}}
			};
			\fill[white] ([xshift=\pgflinewidth+1pt,
			yshift=-\pgflinewidth+1pt]queue2.north east)
			rectangle ([xshift=-\pgflinewidth-4pt,
			yshift=\pgflinewidth-1pt]queue2.south east);
			
			\node[server, anchor=east, fill=\darkgreen]
			(nu1) at ($(queue2.west)-(.1cm,0)+(0,.5cm)$)
			{$\nu_A$};
			\node[server, anchor=east, fill=\lightgreen]
			(nu2) at ($(queue2.west)-(.1cm,0)-(0,.5cm)$)
			{$\nu_B$};
			
			\node at ($(queue2.one)+(.1cm,.09cm)$)
			{$B$};
			\node at ($(queue2.two)+(.1cm,.09cm)$)
			{$A$};
			\node at ($(queue2.three)+(.1cm,.09cm)$)
			{$B$};
			\node at ($(queue2.four)+(.1cm,.09cm)$)
			{$2$};
			
			\draw[->] ($(nu1.west)!.5!(nu2.west)-(.1cm,0)$)
			-- ($(nu1.west)!.5!(nu2.west)-(.5cm,0)$)
			|- ($(queue1.west)-(.1cm,0)$);
			\draw[->] ($(mu3.east)+(.1cm,0)$)
			-- ($(mu3.east)+(.5cm,0)$)
			|- ($(queue2.east)+(.1cm,0)$);
			
			\node at ($(mu1.east)+(.85cm,0)$) {};
			\node at ($(nu1.west)-(.85cm,0)$) {};
		\end{tikzpicture}%
	}
	\\
	\subfloat[State reached after
	the service completion
	of the oldest class-$3$ token
	in the first queue. \label{fig:app-network-3}]{%
		\begin{tikzpicture}
			\def\width{1cm}
			\def\height{2.2cm}
			
			\node[fcfs=8,
			rectangle split part fill
			={white, white, white,
				\neutralorange, \neutralorange, \lightorange,
				\darkorange, \darkorange},
			] (queue1) {
				\nodepart{one}{\phantom{1}}
				\nodepart{two}{\phantom{1}}
				\nodepart{three}{\phantom{1}}
				\nodepart{four}{\phantom{1}}
				\nodepart{five}{\phantom{1}}
				\nodepart{six}{\phantom{1}}
				\nodepart{seven}{\phantom{1}}
				\nodepart{eight}{\phantom{1}}
			};
			\fill[white] ([xshift=-\pgflinewidth-1pt,
			yshift=-\pgflinewidth+1pt]queue1.north west)
			rectangle ([xshift=\pgflinewidth+4pt,
			yshift=\pgflinewidth-1pt]queue1.south west);
			
			\node at ($(queue1.four)+(.1cm,.09cm)$)
			{$3$};
			\node at ($(queue1.five)+(.1cm,.09cm)$)
			{$3$};
			\node at ($(queue1.six)+(.1cm,.09cm)$)
			{$2$};
			\node at ($(queue1.seven)+(.1cm,.09cm)$)
			{$1$};
			\node at ($(queue1.eight)+(.1cm,.09cm)$)
			{$1$};
			
			\node[server, anchor=west, fill=\darkorange]
			(mu1) at ($(queue1.east)+(.1cm,0)+(0,1cm)$)
			{$\mu_1$};
			\node[server, anchor=west, fill=\neutralorange]
			(mu3) at ($(queue1.east)+(.1cm,0)$)
			{$\mu_3$};
			\node[server, anchor=west, fill=\lightorange]
			(mu2) at ($(queue1.east)+(.1cm,0)-(0,1cm)$)
			{$\mu_2$};
			
			\node[fcfs=8,
			rectangle split part fill
			={\lightgreen, \darkgreen, \lightgreen, \lightorange, \darkgreen, white},
			] (queue2) at ($(queue1)+(\width,0)-(0,\height)$) {
				\nodepart{one}{\phantom{1}}
				\nodepart{two}{\phantom{1}}
				\nodepart{three}{\phantom{1}}
				\nodepart{four}{\phantom{1}}
				\nodepart{five}{\phantom{1}}
				\nodepart{six}{\phantom{1}}
				\nodepart{seven}{\phantom{1}}
				\nodepart{eight}{\phantom{1}}
			};
			\fill[white] ([xshift=\pgflinewidth+1pt,
			yshift=-\pgflinewidth+1pt]queue2.north east)
			rectangle ([xshift=-\pgflinewidth-4pt,
			yshift=\pgflinewidth-1pt]queue2.south east);
			
			\node[server, anchor=east, fill=\darkgreen]
			(nu1) at ($(queue2.west)-(.1cm,0)+(0,.5cm)$)
			{$\nu_A$};
			\node[server, anchor=east, fill=\lightgreen]
			(nu2) at ($(queue2.west)-(.1cm,0)-(0,.5cm)$)
			{$\nu_B$};
			
			\node at ($(queue2.one)+(.1cm,.09cm)$)
			{$B$};
			\node at ($(queue2.two)+(.1cm,.09cm)$)
			{$A$};
			\node at ($(queue2.three)+(.1cm,.09cm)$)
			{$B$};
			\node at ($(queue2.four)+(.1cm,.09cm)$)
			{$2$};
			\node at ($(queue2.five)+(.1cm,.09cm)$)
			{$A$};
			
			\draw[->] ($(nu1.west)!.5!(nu2.west)-(.1cm,0)$)
			-- ($(nu1.west)!.5!(nu2.west)-(.5cm,0)$)
			|- ($(queue1.west)-(.1cm,0)$);
			\draw[->] ($(mu3.east)+(.1cm,0)$)
			-- ($(mu3.east)+(.5cm,0)$)
			|- ($(queue2.east)+(.1cm,0)$);
			
			\node at ($(mu1.east)+(.85cm,0)$) {};
			\node at ($(nu1.west)-(.85cm,0)$) {};
		\end{tikzpicture}%
	}
	\hskip 2cm
	\subfloat[State reached after
	the service completion
	of the oldest class-$B$ token
	in the second queue. \label{fig:app-network-4}]{%
		\begin{tikzpicture}
			\def\width{1cm}
			\def\height{2.2cm}
			
			\node[fcfs=8,
			rectangle split part fill
			={white, white, \lightorange,
				\neutralorange, \neutralorange, \lightorange,
				\darkorange, \darkorange},
			] (queue1) {
				\nodepart{one}{\phantom{1}}
				\nodepart{two}{\phantom{1}}
				\nodepart{three}{\phantom{1}}
				\nodepart{four}{\phantom{1}}
				\nodepart{five}{\phantom{1}}
				\nodepart{six}{\phantom{1}}
				\nodepart{seven}{\phantom{1}}
				\nodepart{eight}{\phantom{1}}
			};
			\fill[white] ([xshift=-\pgflinewidth-1pt,
			yshift=-\pgflinewidth+1pt]queue1.north west)
			rectangle ([xshift=\pgflinewidth+4pt,
			yshift=\pgflinewidth-1pt]queue1.south west);
			
			\node at ($(queue1.three)+(.1cm,.09cm)$)
			{$2$};
			\node at ($(queue1.four)+(.1cm,.09cm)$)
			{$3$};
			\node at ($(queue1.five)+(.1cm,.09cm)$)
			{$3$};
			\node at ($(queue1.six)+(.1cm,.09cm)$)
			{$2$};
			\node at ($(queue1.seven)+(.1cm,.09cm)$)
			{$1$};
			\node at ($(queue1.eight)+(.1cm,.09cm)$)
			{$1$};
			
			\node[server, anchor=west, fill=\darkorange]
			(mu1) at ($(queue1.east)+(.1cm,0)+(0,1cm)$)
			{$\mu_1$};
			\node[server, anchor=west, fill=\neutralorange]
			(mu3) at ($(queue1.east)+(.1cm,0)$)
			{$\mu_3$};
			\node[server, anchor=west, fill=\lightorange]
			(mu2) at ($(queue1.east)+(.1cm,0)-(0,1cm)$)
			{$\mu_2$};
			
			\node[fcfs=8,
			rectangle split part fill
			={\darkgreen, \lightgreen, \lightgreen, \darkgreen, white},
			] (queue2) at ($(queue1)+(\width,0)-(0,\height)$) {
				\nodepart{one}{\phantom{1}}
				\nodepart{two}{\phantom{1}}
				\nodepart{three}{\phantom{1}}
				\nodepart{four}{\phantom{1}}
				\nodepart{five}{\phantom{1}}
				\nodepart{six}{\phantom{1}}
				\nodepart{seven}{\phantom{1}}
				\nodepart{eight}{\phantom{1}}
			};
			\fill[white] ([xshift=\pgflinewidth+1pt,
			yshift=-\pgflinewidth+1pt]queue2.north east)
			rectangle ([xshift=-\pgflinewidth-4pt,
			yshift=\pgflinewidth-1pt]queue2.south east);
			
			\node[server, anchor=east, fill=\darkgreen]
			(nu1) at ($(queue2.west)-(.1cm,0)+(0,.5cm)$)
			{$\nu_A$};
			\node[server, anchor=east, fill=\lightgreen]
			(nu2) at ($(queue2.west)-(.1cm,0)-(0,.5cm)$)
			{$\nu_B$};
			
			\node at ($(queue2.one)+(.1cm,.09cm)$)
			{$A$};
			\node at ($(queue2.two)+(.1cm,.09cm)$)
			{$B$};
			\node at ($(queue2.three)+(.1cm,.09cm)$)
			{$B$};
			\node at ($(queue2.four)+(.1cm,.09cm)$)
			{$A$};
			
			\draw[->] ($(nu1.west)!.5!(nu2.west)-(.1cm,0)$)
			-- ($(nu1.west)!.5!(nu2.west)-(.5cm,0)$)
			|- ($(queue1.west)-(.1cm,0)$);
			\draw[->] ($(mu3.east)+(.1cm,0)$)
			-- ($(mu3.east)+(.5cm,0)$)
			|- ($(queue2.east)+(.1cm,0)$);
			
			\node at ($(mu1.east)+(.85cm,0)$) {};
			\node at ($(nu1.west)-(.85cm,0)$) {};
		\end{tikzpicture}%
	}
	\caption{Closed tandem network
		of two \gls{pands} queues
		associated with the cluster
		of \figurename~\ref{fig:app-intro-assignment},
		assuming that $\ell_k = 2$
		for each $k \in \K$
		and $\ell_s = 2$
		for each $s \in \Se$.%
	}
	\label{fig:app-network}
\end{figure}

\paragraph{Second queue}

We now provide a symmetric description
for the second queue,
which contains available tokens.
This queue is again a multi-server queue
like that of Example~\ref{ex:pands},
but the servers correspond to job types
and not to machines.
More specifically,
the set of servers is $\K$ and,
for each $k \in \K$,
the service rate of server~$k$
is equal to $\nu_k$,
the arrival rate of type-$k$ jobs
in the cluster.
Again, even though
we use the same set $\K$ to index
the set of servers in the second queue
and the set of token classes associated with job types,
we will always specify whether we are referring
to a server~$k \in \K$ or a token class $k \in \K$.
For each $k \in \K$,
the set of servers that can process
class-$k$ tokens is $\K_k = \{k\}$.
Also, for each $s \in \Se$,
the set of servers that can
process class-$s$ tokens is $\K_s$,
corresponding to
the set of job types that can seize
a token from machine~$s$ in the cluster.
In the end, for each $k \in \K$,
server~$k$ can process
the tokens that belong to type-$k$ jobs
plus the tokens that belong
to the machines in $\Se_k$.
For example, the compatibility graph
of the second queue
in the tandem network associated with
the assignment graph
of \figurename~\ref{fig:app-intro-assignment}
is shown
in \figurename~\ref{fig:app-intro-det-msq2}.
Each server processes its
compatible tokens
in \gls{fcfs} order.

Due to the placement order,
if a class-$s$ token is in the second queue,
then, for each $k \in \K_s$,
all class-$k$ tokens
are also in the second queue,
at positions that precede the position
of the class-$s$ token.
Given the compatibility graph,
this implies that a class-$s$ token
will never be \textit{in service}
in this queue.
The only way a class-$s$ token
leaves this queue is
if a token of a class $k \in \K_s$
completes service
and ejects this class-$s$ token.
In the cluster,
this means that
a type-$k$ job enters
and seizes a token from machine~$s$.

We let $d = (d_1, \ldots, d_m) \in \I^*$
denote the state of the second queue.
As observed before,
the placement order guarantees that,
if there is a token of a class $s \in \Se$
at some position $p \in \{1, \ldots, m\}$
in the second queue,
then, for each $k \in \K_s$,
each class-$k$ token is also in the second queue,
at a position $q \in \{1, \ldots, p-1\}$
that precedes that of this class-$s$ token.
Therefore, the state space of
the state of the second queue
is a strict subset $\D$ of the set of sequences
$d \in \I^*$ such that $|d| \le \ell$.
The overall service rate in this queue
is equal to the sum of the arrival rates
of the job types that can seize
at least one available token, given by
\begin{equation} \label{eq:nu}
	\nu(d)
	= \sum_{k \in \bigcup_{p=1}^m \K_{d_p}} \nu_k.
\end{equation}
For each $p \in \{1, \ldots, m\}$,
the token in position~$p$
can be seized or moved
by the incoming jobs that are compatible
with this token
but not with the older available tokens,
and these jobs arrive at rate
\begin{equation} \label{eq:delta-nu}
	\Delta\nu(d_1,\ldots,d_p)
	= \sum_{k \in \K_{d_p} \setminus \bigcup_{q=1}^{p-1} \K_{d_q}} \nu_k.
\end{equation}
The functions $\nu$ and $\Delta\nu$
play the same role for the second queue
as the functions $\mu$ and $\Delta\mu$,
given by~\eqref{eq:mu} and \eqref{eq:delta-mu},
for the first queue.
Again because of the placement order,
\eqref{eq:delta-nu} simplifies to
$\Delta\nu(d_1, \ldots, d_p) = 0$
for each $p \in \{1, \ldots, m\}$
such that $d_p \in \Se$.
The compatibility graph also
guarantees that,
for each $p \in \{1, \ldots, m\}$
such that $k = d_p \in \K$,
we have $\Delta\nu(d_1, \ldots, d_p) = \nu_k$
if $|(d_1, \ldots, d_{p-1})|_k = 0$
and $\Delta\nu(d_1, \ldots, d_p) = 0$
otherwise.

Now assume that a token
in some position $p \in \{1, \ldots, m\}$ completes service.
If $k = c_p \in \K$
denotes this token's class,
then, because of the \gls{pands} mechanism,
only one of these two transitions can occur:
\begin{enumerate}[label=(\roman*)]
	\item \label{item:trans-3} If the second queue
	contains a token of a class in $\Se_k$
	(necessarily in position at least $p+1$
	because of the placement order),
	the class-$k$ token replaces
	the first of these tokens,
	say of class $s \in \Se_k$,
	and the ejected class-$s$ token joins
	the first queue.
	In the cluster, this means that
	an incoming type-$k$ job
	seizes a token from machine~$s$
	because this was the oldest
	available token of a compatible machine.
	\item \label{itemi:trans-4} If there is no token
	of a class in $\Se_k$ in the second queue,
	the class-$k$ token leaves this queue
	and is added to the first queue.
	In the cluster, this means that
	a type-$k$ job enters
	and does not find any available token
	from a machine in $\Se_k$,
	so that this job seizes
	a class-$k$ token
	and will hold this token until it is assigned
	to the buffer of a machine in $\Se_k$.
\end{enumerate}
In both cases, a token leaves
the second queue
and joins the first,
meaning that
a new job arrives and
seizes a token in the cluster.
An example of a type-\ref{item:trans-3} transition
is shown in \figurename~\ref{fig:app-network}
for the cluster
of \figurename~\ref{fig:app-intro-assignment}.
Indeed, from \figurename~\ref{fig:app-network-3}
to \figurename~\ref{fig:app-network-4},
the oldest class-$B$ token completes service
in the second queue;
this token replaces the class-$2$
token, which joins the first queue.
In the cluster,
this means that a type-B job
enters and seizes a token of machine~2.

\begin{remark}
	The second queue is degenerate
	in the sense that,
	accounting for the placement
	and compatibility graph,
	class-$k$ tokens are the only tokens that
	can be processed by server~$k$.
	Therefore,
	if we let $\A = \{k \in \K: |d|_k > 0\}$,
	then, for each $k \in \A$,
	the service rate of
	the oldest class-$k$ token
	is~$\nu_k$,
	irrespective of the order
	of the tokens
	in state~$d$
	(provided that this state
	adheres to the placement order).
	This implies that the relative order
	of the tokens
	of the classes in $\K$ in state $d$
	modifies neither their service rates
	nor the departure rate of the tokens
	of the classes in $\Se$.
	On the contrary, in general,
	the relative order of the tokens
	of the classes in $\Se$
	modifies their departure rate.
	A similar remark
	could be made for the first queue
	by exchanging the roles
	of the sets $\K$ and $\Se$.
\end{remark}

\subsection{Generalization
	to other resource-management protocols}
\label{subsec:app-general}

The important thing
to remember from Section~\ref{subsec:app-intro}
is that we can describe the dynamics
of tokens in a machine cluster
using a closed tandem network
of two \gls{pands} queues,
so that the first queue contains
tokens held by jobs present in the cluster
and the other queue contains available tokens.
In Section~\ref{subsubsec:app-general-def},
we propose a more general framework
based on the same idea.
This framework extends
the example of Section~\ref{subsec:app-intro}
in two ways: it allows not only
for more general
compatibility constraints
between job types and machines,
but also for multiple levels
of preferences between tokens.
Sections~\ref{subsubsec:app-distributed}
and \ref{subsubsec:app-hierarchical}
give a prototypical example
for each extension.

\subsubsection{Queueing model}
\label{subsubsec:app-general-def}

Consider a closed tandem network
of two \gls{pands} queues
like that described in Section~\ref{subsec:two}.
Let $\I = \{1, \ldots, I\}$ denote
the set of token classes
and $\prec$ the placement order
of this network.
Recall that, for each $i, j \in \I$
such that $i \prec j$,
class-$i$ tokens
precede (resp.\ succeed) class-$j$ tokens
in the first (resp.\ second) queue.
The swapping graph of the queues
is simply the underlying undirected graph
of the placement graph.
For each $i \in \I$, let
$\ell_i$ denote the number
of class-$i$ tokens in the network.
We assume that both \gls{pands} queues
are multi-server queues,
like that described in Example~\ref{ex:pands}.
Their compatibility graphs will
be described in the next paragraphs.
The applications that we have
in mind again involve tokens in a cluster,
and in these applications the first queue
of the tandem network will contain
tokens held by jobs present in the system
and the second queue
will contain available tokens. 

Let us first describe
the compatibility constraints in the first queue,
as we did in Example~\ref{ex:oi}.
Let $\Se = \{1, \ldots, S\}$ denote
the set of servers in this first queue
and, for each $s \in \Se$,
$\mu_s$ the service rate of server~$s$.
For each class $i \in \I$
that is minimal
with respect to the placement order $\prec$
(that is,  there is
no class $j \in \I$
with $i \succ j$),
we let $\Se_i \subseteq \Se$ denote
the set of servers that can process
class-$i$ tokens in the first queue.
This defines a bipartite graph between
the set of minimal classes
and the set of servers.
The set of servers that can process
non-minimal classes
is defined by an ascending recursion
over the placement order.
More specifically,
for each class $i \in \I$
that is not minimal with respect to
the placement order,
the set of servers that can
process class-$i$ tokens is
$\Se_i =
\bigcup_{j \in \I: j \prec i} \Se_j$.
Going back to the example of
\figurename~\ref{fig:app-intro-det},
we have that the set of
classes is $\{1,2,3,A,B\}$ and the set of
minimal classes
is $\{1,2,3\}$;
the sets of servers associated
with these classes in the first queue are
$\Se_{1} = \{1\}$, $\Se_{2} = \{2\}$,
and $\Se_{3} = \{3\}$,
while the sets of servers associated
with the classes that are not minimal are
$\Se_{A} = \Se_{1} \cup \Se_{3} = \{1,3\}$
and
$\Se_{B} =\Se_{2} \cup \Se_{3} = \{2,3\}$.
The state of the first queue
is denoted by $c = (c_1, \ldots, c_n)$
and the overall and individual service rates
in this queue are given
by~\eqref{eq:mu} and \eqref{eq:delta-mu},
respectively.

Similarly,
we let $\K = \{1, \ldots, K\}$ denote
the set of servers in the second queue and,
for each $k \in \K$,
$\nu_k$ the service rate of server~$k$.
For each class $i \in \I$ that is maximal
with respect to the placement order~$\prec$
(that is, there is no class $j \in \I$
with $i \prec j$),
we let $\K_i \subseteq \K$ denote
the set of servers
that can process class-$i$ tokens.
This defines a bipartite graph between
the set of maximal classes
and the set of servers.
The set of servers that can process
non-maximal classes
is defined by a descending recursion
over the placement order.
More specifically,
for each class $i \in \I$
that is not maximal,
the set of servers that can process
class-$i$ tokens is
$\K_i = \bigcup_{j \in \I: i \prec j} \K_j$.
By again considering the example of
\figurename~\ref{fig:app-intro-det},
we have that the set of maximal classes
is $\{A,B\}$;
the sets of servers associated
with these classes
in the second queue are
$\K_{A} = \{A\}$ and $\K_{B} = \{B\}$,
while the sets of servers associated
with the classes that are not maximal are
$\K_{1} = \K_{A} = \{A\}$,
$\K_{2} = \K_{B} = \{B\}$,
and $\K_{3} = \K_{A} \cup \K_{B} = \{A,B\}$.
The state of the second queue
is denoted by $d = (d_1, \ldots, d_m)$
and the overall and individual service rates
in this queue are given
by~\eqref{eq:nu} and \eqref{eq:delta-nu},
respectively.

In this new framework,
the placement order describes
not only priorities between classes
but also compatibilities
between classes and servers.
Using this observation,
we will now see that
the structure of the closed tandem network
can be described more compactly by a mixed graph
(that is, a graph with both
directed and undirected edges).
The mixed graph associated with the model
of \figurename~\ref{fig:app-intro-det}
is shown in \figurename~\ref{fig:app3-app1}.
The subgraph induced in this mixed graph
by the set of classes
describes the placement order.
The subgraph induced
by the set of minimal classes
and the set of machines,
as shown at the bottom
of \figurename~\ref{fig:app3-app1},
describes the compatibilities
between the minimal classes
and the servers of the first queue.
The set of servers that can serve
a non-minimal class
is the union of the sets of servers
that can serve the ancestors of this class.
Similarly, the subgraph induced
by the sets of maximal classes
and the set of job types,
as shown at the top
of \figurename~\ref{fig:app3-app1},
describes the compatibilities
between the maximal classes
and the servers of the second queue.
The set of servers that can serve
a non-maximal class
is the union of the sets of servers
that can serve the descendants of this class.
\figurename s~\ref{fig:app3-app1-nopriority}
and \ref{fig:app3-hierarchical}
show more elaborate examples of mixed graphs
that will be studied
in Sections~\ref{subsubsec:app-distributed}
and \ref{subsubsec:app-hierarchical}.

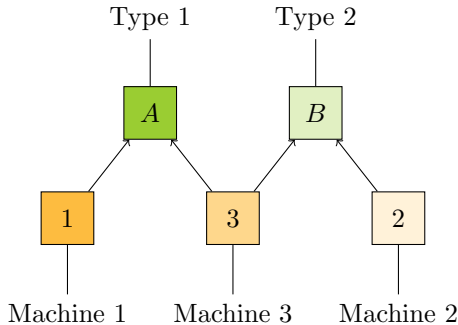
\begin{figure}[ht]
	\centering
	\begin{tikzpicture}
		\def\width{2.2cm}
		\def\height{1.4cm}
		
		\node[class, darkgreen]
		(c1) {$A$};
		\node[class, lightgreen]
		(c2) at ($(c1)+(\width,0)$)
		{$B$};
		
		\node[class, darkorange] (cA)
		at ($(c1)-(0.5*\width,0)-(0,\height)$)
		{$1$};
		\node[class, neutralorange] (cB)
		at ($(c1)!.5!(c2)-(0,\height)$)
		{$3$};
		\node[class, lightorange] (cC)
		at ($(c2)+(0.5*\width,0)-(0,\height)$)
		{$2$};
		
		\node (t1)
		at ($(c1)+(0,.9*\height)$) {Type~1};
		\node (t2)
		at ($(c2)+(0,.9*\height)$) {Type~2};
		\node (s1)
		at ($(cA)-(0,.9*\height)$) {Machine~1};
		\node (s3)
		at ($(cB)-(0,.9*\height)$) {Machine~3};
		\node (s2)
		at ($(cC)-(0,.9*\height)$) {Machine~2};
		
		\draw[-] (t1) -- (c1);
		\draw[-] (t2) -- (c2);
		\draw[<-] (c1) -- (cA);
		\draw[<-] (c1) -- (cB);
		\draw[<-] (c2) -- (cB);
		\draw[<-] (c2) -- (cC);
		\draw[-] (cA) -- (s1);
		\draw[-] (cB) -- (s3);
		\draw[-] (cC) -- (s2);
	\end{tikzpicture}
	\caption{Mixed graph associated
		with the example
		of Section~\ref{subsec:app-intro}.}
	\label{fig:app3-app1}
\end{figure}

As in Section~\ref{subsec:app-intro},
some tokens may never be in service
in a queue.
In fact, in the first queue,
the only tokens that can be in service
are those of the classes
that are minimal with respect
to the placement order.
A token of a class $i \in \I$
that is not minimal
can only leave this queue
upon the service completion of
a token of a minimal class~$j \in \I$
such that $j \prec i$.
Similarly, only tokens
of classes that are maximal
with respect to the placement order
can be in service in the second queue.
A token of a class $i \in \I$
that is not maximal
can only leave this queue
upon the service completion of
a token of a maximal class~$j \in \I$
such that $i \prec j$.

Applying the results
of Section~\ref{subsubsec:two-steady}
allows us to directly derive
a closed-form expression for
the stationary distribution
of the network state.
We adopt the notation of that section.
In particular, the state space
of the Markov process associated with
the network state $(c;d)$
is denoted by $\Sigma$
and characterized by~\eqref{eq:two-partition}.
Assuming that this Markov process
is irreducible,
it follows from
Theorem~\ref{theo:two-steady-cd} that
its stationary distribution is given by
\begin{equation*}
	\pi(c;d) =
	\frac1G \left(
	\prod_{p = 1}^n \frac1{\mu(c_1, \ldots, c_p)}
	\right) \left(
	\prod_{p = 1}^m \frac1{\nu(d_1, \ldots, d_p)}
	\right),
	\quad \forall (c; d) \in \Sigma,
\end{equation*}
where the constant~$G$ follows from normalization.

We now consider two examples
that illustrate the descriptive
power of this new framework.
Section~\ref{subsubsec:app-distributed}
gives an extension
of the introductory example
of Section~\ref{subsec:app-intro}
to a cluster where jobs can be
distributed over several machines.
Section~\ref{subsubsec:app-hierarchical}
looks at a token-based hierarchical
load-distribution protocol.
These two examples
can be considered independently.

\subsubsection{Distributed processing}
\label{subsubsec:app-distributed}

As in Section~\ref{subsec:app-intro},
we consider a cluster
that consists of a dispatcher and
a set $\Se = \{1, \ldots, S\}$ of machines.
The set of job types
is denoted by $\K = \{1, \ldots, K\}$ and,
for each $k \in \K$,
type-$k$ jobs arrive according to
an independent Poisson process
with rate $\nu_k$
and have independent and
exponentially distributed sizes
with unit mean.
An incoming job may be
assigned to the buffer(s) of
one (or more) machine(s),
left unassigned for now, or rejected,
depending on the type of this job
and on the system state.

The difference with Section~\ref{subsec:app-intro}
is that an incoming job is no longer
assigned to a single machine; instead,
it is assigned to all machines in a group.
More specifically, if a job is assigned to a group of machines, this means that this job is added to the buffer of every machine in this group, and that these machines
will subsequently be able to process
this job in parallel.
A job is said to be unassigned (or uncommitted)
if it has not been assigned to a group yet.
We let $\T = \{1, \ldots, T\}$ denote
the set of group indices and, for each $t \in \T$,
$\Se_t \subseteq \Se$ the set of machines
that belong to group~$t$
and $\K_t \subseteq \K$
the set of job types that can be
assigned to group~$t$.
With a slight abuse of notation,
we also let $\T_s \subseteq \T$ denote
the set of groups that include machine~$s$,
for each $s \in \Se$,
and $\T_k \subseteq \T$
the set of groups to which type-$k$ jobs can be assigned,
for each $k \in \K$.
This defines a tripartite
\textit{assignment} graph
between job types, groups, and machines,
as shown in
\figurename~\ref{fig:app-distributed-assignment}.
The introductory example
of Section~\ref{subsec:app-intro}
corresponds to the special case
where there is a one-to-one correspondence
between groups and machines,
that is, $T = S$ and
$\Se_t = \{t\}$ for each $t \in \T$.

\begin{figure}[ht]
	\centering
	\begin{tikzpicture}
		\def\width{2.6cm}
		\def\height{1.6cm}
		
		\node (t1) {Type A};
		\node (t2)
		at ($(t1)+(\width,0)$) {Type B};
		
		\node (p1)
		at ($(t1)-(0,\height)$) {Group 1};
		\node (p2)
		at ($(t2)-(0,\height)$) {Group 2};
		
		\node (s1)
		at ($(p1)+(-0.5*\width,-\height)$)
		{Machine 1};
		\node (s3)
		at ($(p1)!.5!(p2)-(0,\height)$)
		{Machine 3};
		\node (s2)
		at ($(p2)+(0.5*\width,-\height)$)
		{Machine 2};
		
		\draw (t1) -- (p1);
		\draw (t1) -- (p2);
		\draw (t2) -- (p2);
		\draw (p1) -- (s1);
		\draw (p1) -- (s3);
		\draw (p2) -- (s3);
		\draw (p2) -- (s2);
	\end{tikzpicture}
	\caption{A tripartite assignment graph
		between job types, groups, and machines.
		We have $\Se_1 = \{1,3\}$,
		$\Se_2 = \{2,3\}$,
		$\K_1 = \{A\}$, and $\K_2 = \{A,B\}$.
	}
	\label{fig:app-distributed-assignment}
\end{figure}
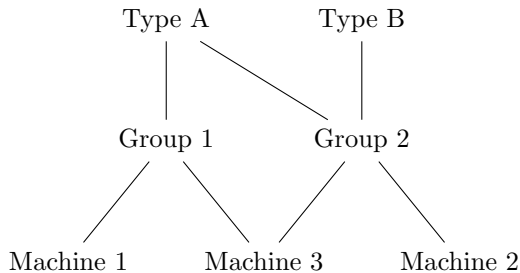

Each machine processes the jobs in its buffer
in \gls{fcfs} order, while ignoring other jobs. In particular, a job may be in service on multiple machines if it is at the head of their buffers.
For each $t \in \T$,
if a job assigned to group~$t$
is in service on a subset $\Se' \subseteq \Se_t$
of the machines of this group,
the departure rate
of this job is $\sum_{s \in \Se'} \mu_s$.

We generalize the \gls{alis} assignment rule
introduced in Section~\ref{subsubsec:alis}
as follows.
For each $t \in \T$,
the assignments of jobs to group~$t$
are regulated via
$\ell_t$ tokens, so that
each job seizes
one of these tokens
when it is assigned to group~$t$
and releases this token
upon service completion.
The dispatcher keeps
a list of available tokens,
sorted in their release order,
so that the longest available token
is at the head of this list.
An incoming type-$k$ job
can be assigned to group~$t$
if and only if
a token of this group is
available. As before,
at most $\ell_k$ type-$k$ jobs
can be left unassigned
if no token of a compatible group
is available upon their arrival.
Now, when a type-$k$ job arrives
in the system,
one of the following events occurs:
\begin{enumerate}[label=(\roman*)]
	\item if one or more tokens
	of the groups in $\T_k$ are available,
	the job seizes the one of these tokens
	that has been available the longest
	and is assigned to the corresponding group
	(so that the job is added
	to the buffers of all machines
	in the group);
	\item otherwise, if there are currently
	fewer than
	$\ell_k$ unassigned type-$k$ jobs,
	the incoming job is left unassigned
	until it can be assigned to
	one of its compatible groups;
	\item otherwise, the job is rejected.
\end{enumerate}
When a job assigned to group~$t$
completes service,
this job leaves the system immediately.
Its token is seized by
the oldest unassigned job of a type
in $\K_t$, if any,
otherwise it is added
to the dispatcher's list
of available tokens.
Furthermore, the machines
that were processing this job
immediately start processing
the next job in their buffer, if any.
This protocol can be seen
as a generalization of that
introduced in \cite{C19-1}
to a scenario where incoming jobs
are left unassigned
(instead of being rejected)
in the absence of available compatible tokens.
This is also a generalization
of the \gls{fcfs-alis} protocol,
in which each machine
processes the jobs in its buffer
in \gls{fcfs} order,
and each job is assigned to \emph{groups}
of machines using the above generalization
of the \gls{alis} assignment rule.

Following the same approach as in
Section~\ref{subsubsec:cancel-on-commit},
we can reinterpret
this generalization
of the \gls{fcfs-alis} protocol
as a generalized
redundancy scheduling protocol
that combines the
cancel-on-commit protocol
introduced in Section~\ref{subsubsec:cancel-on-commit}
and the cancel-on-complete protocol.
Indeed, in the above cluster,
everything works as if
an incoming type-$k$ job
were at first replicated over
all machines of the set
$\bigcup_{t \in \T_k} \Se_t$
and eventually committed to
a \emph{subset} of these machines,
those belong to a given group $t \in \T_k$.
In practice, this means that
a replica of an incoming type-$k$ job
is sent to every machine
within the set $\bigcup_{t \in \T_k} \Se_t$,
and, once the job is committed
to a group $t \in \T_k$,
the replicas sent to the machines
that are not in the set $\Se_t$
are canceled.
Subsequently, the remaining replicas
on the machines in $\Se_t$
are canceled whenever
one of them completes service.
In particular, several replicas
may be in service at the same time.

The dynamics of this system
can be described by the queueing model
of Section~\ref{subsubsec:app-general-def}
as follows.
We again describe the dynamics
of the cluster by looking at tokens,
so that all jobs present in the system
(and not only those assigned to a group)
held one of these tokens.
The set of token classes
is $\I = \K \sqcup \T$
and there are
$\ell_i$ class-$i$ tokens,
for each $i \in \I$.
The placement order is defined by
$t \prec k$
for each $t \in \T$ and $k \in \K_t$
(or equivalently,
$t \prec k$ for each
$k \in \K$ and $t \in \T_k$).
The minimal classes are those
associated with machine groups
and the maximal classes are those
associated with job types.
For each $t \in \T$,
the set of servers that can process
class-$t$ tokens in the first queue
is $\Se_t$,
corresponding to the set of machines
that belong to group~$t$.
For each $k \in \K$,
the set of servers that can process
class-$k$ tokens in the second queue
is $\{k\}$.
The mixed graph associated
with the example of
\figurename~\ref{fig:app-distributed-assignment}
is shown in
\figurename~\ref{fig:app3-app1-nopriority}.

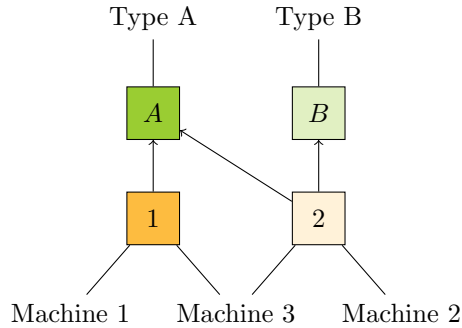
\begin{figure}[ht]
	\centering
	\begin{tikzpicture}
		\def\width{2.2cm}
		\def\height{1.4cm}
		
		\node[class, darkgreen] (c1)
		{$A$};
		\node[class, lightgreen] (c2)
		at ($(c1)+(\width,0)$) {$B$};
		\node[class, darkorange] (c3)
		at ($(c1)-(0,\height)$) {$1$};
		\node[class, lightorange] (c4)
		at ($(c2)-(0,\height)$) {$2$};
		
		\node (t1)
		at ($(c1)+(0,.9*\height)$) {Type A};
		\node (t2)
		at ($(c2)+(0,.9*\height)$) {Type B};
		\node (s1)
		at ($(c3)-(.5*\width,0)-(0,.9*\height)$)
		{Machine~1};
		\node (s3)
		at ($(c3)!.5!(c4)-(0,.9*\height)$) {Machine~3};
		\node (s2)
		at ($(c4)+(.5*\width,0)-(0,.9*\height)$)
		{Machine~2};
		
		\draw[-] (t1) -- (c1);
		\draw[-] (t2) -- (c2);
		\draw[<-] (c1) -- (c3);
		\draw[<-] (c1) -- (c4);
		\draw[<-] (c2) -- (c4);
		\draw[-] (c3) -- (s1);
		\draw[-] (c3) -- (s3);
		\draw[-] (c4) -- (s3);
		\draw[-] (c4) -- (s2);
	\end{tikzpicture}
	\caption{%
		Mixed graph associated
		with the cluster of
		\figurename~\ref{fig:app-distributed-assignment}.
	}
	\label{fig:app3-app1-nopriority}
\end{figure}

\subsubsection{Hierarchical load distribution}
\label{subsubsec:app-hierarchical}

Let~$H$ denote a positive integer.
We consider a cluster
that consists of a dispatcher
and $2^{H-1}$ machines, and we
denote by $\Se = \{1, \ldots, 2^{H-1}\}$
the set of machines.
Jobs arrive according to a Poisson process
with a positive rate~$\nu$.
Each incoming job is compatible with all machines
but will eventually be
assigned to and processed by a single machine.
Every machine has a buffer
of length~1, so that the only job assigned to this machine is also in service on this machine.
For each $s \in \Se$,
the service time of a job on machine~$s$
is exponentially distributed
with a positive rate~$\mu_s$.
The job arrivals within the system
are regulated via $2^H - 1$ tokens
numbered from 1 to $2^H - 1$.
A job that has not been assigned
to a machine yet
holds a token numbered
from 1 to $2^{H-1} - 1$,
while, for each
$s \in \{ 1, \ldots, 2^{H-1}\}$,
the job in service on machine~$s$
holds token~$2^{H-1} + s - 1$.
Initially, when the system
is empty of jobs,
all tokens are arranged in ascending order
in a list kept by the dispatcher,
with token~1 at the head of the list
and token~$2^H - 1$ at the end.
If a new job arrives
and there is at least one available token,
this job seizes the token
obtained by applying
the \gls{pands} mechanism
in the queue of available tokens,
starting from the token
at the head of the queue,
with the following swapping rule:
for each $i \in \{1, \ldots, 2^{H-1} - 1\}$,
token~$i$ can be swapped with
tokens $2i$ and $2i + 1$.
An incoming job is rejected if no token is available.
Conversely, a service completion
triggers the following chain reaction:
if token~$i$ is released by a job,
this token is seized
by the job that holds token $\lfloor i/2 \rfloor$
(so that this token is in turn released
and can be seized by another job), if any,
otherwise it is added
to the list of available tokens.

Priorities between tokens
can be represented by
a perfect binary tree of height~$H-1$
such that, for each
$i \in \{1, \ldots, 2^{H-1} - 1\}$,
the children of node~$i$
are nodes~$2i$ and $2i + 1$.
\figurename~\ref{fig:binary-tree} shows
an example with $H = 3$.
Leaf nodes correspond
to tokens held by jobs in service
on a machine.
For each $h \in \{1, \ldots, H-1\}$,
the nodes at depth~$h$
in the tree correspond to
tokens $2^{h-1}$ to $2^h - 1$.
A job holding one of these tokens
is $H-h$ steps away
from entering service on a machine.
Indeed, if a job holds a token
$i \in \{1, \ldots, 2^{H-1} - 1\}$
and a token that belongs to
the subtree rooted at node~$i$ is released,
this job will seize
either token $2i$ or token $2i + 1$,
thus getting one step closer
to entering service
on a machine.

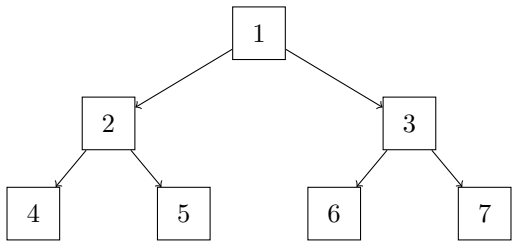
\begin{figure}[ht]
	\centering
	\begin{tikzpicture}
		\def\width{1cm}
		\def\height{1.2cm}
		
		\node[class] (c1) {1};
		
		\node[class] (c2)
		at ($(c1)-(2*\width,0)-(0,\height)$) {2};
		\node[class] (c3)
		at ($(c1)+(2*\width,0)-(0,\height)$) {3};
		
		\node[class] (c4)
		at ($(c2)+(-\width,-\height)$) {4};
		\node[class] (c5)
		at ($(c2)+(\width,-\height)$) {5};
		\node[class] (c6)
		at ($(c3)+(-\width,-\height)$) {6};
		\node[class] (c7)
		at ($(c3)+(\width,-\height)$) {7};
		
		\draw[->] (c1) -- (c2);
		\draw[->] (c1) -- (c3);
		\draw[->] (c2) -- (c4);
		\draw[->] (c2) -- (c5);
		\draw[->] (c3) -- (c6);
		\draw[->] (c3) -- (c7);
	\end{tikzpicture}
	\caption{A perfect binary tree
		of height $H - 1 = 2$.}
	\label{fig:binary-tree}
\end{figure}

The corresponding queueing model,
based on the framework
of Section~\ref{subsubsec:app-general-def},
is defined as follows.
The set of token classes is
$\I = \{1, \ldots, 2^H - 1\}$.
For each $i \in \I$,
there is a single class-$i$ token
which corresponds, in the cluster,
to token~$i$.
The placement order is defined as follows:
for each $i \in \{1, \ldots, 2^{H-1} - 1\}$,
$i \succ 2i$ and $i \succ 2i + 1$
(so that the placement graph is obtained
by reversing edges
in the perfect binary tree
defined in the previous paragraph).
In particular,
if the queue of available tokens is not empty,
the token at the head of this queue
is necessarily token~$1$.
The set of servers
in the first queue is
$\Se = \{1, \ldots, 2^{H-1}\}$.
The set of minimal token classes
is $\{2^{H-1}, 2^{H-1} + 1, \ldots, 2^H - 1\}$
and, for each $s \in \{1, \ldots, 2^{H-1}\}$,
class $2^{H-1} + s - 1$
is compatible with server~$s$.
In the second queue,
there is a single server of rate $\nu$.
The only maximal class is class~$1$
and this class is compatible with this server.
The mixed graph associated with
the perfect binary tree of
\figurename~\ref{fig:binary-tree}
is shown in \figurename~\ref{fig:app3-hierarchical}.

\begin{figure}[b]
	\centering
	\begin{tikzpicture}
		\def\width{1cm}
		\def\height{1.2cm}
		
		\node[class] (c1) {1};
		
		\node[class] (c2)
		at ($(c1)-(2*\width,0)-(0,\height)$) {2};
		\node[class] (c3)
		at ($(c1)+(2*\width,0)-(0,\height)$) {3};
		
		\node[class] (c4)
		at ($(c2)+(-\width,-\height)$) {4};
		\node[class] (c5)
		at ($(c2)+(\width,-\height)$) {5};
		\node[class] (c6)
		at ($(c3)+(-\width,-\height)$) {6};
		\node[class] (c7)
		at ($(c3)+(\width,-\height)$) {7};
		
		\node (t1)
		at ($(c1)+(0,.9*\height)$) {Type A};
		\node (s1)
		at ($(c4)-(0,.9*\height)$) {Machine~1};
		\node (s3)
		at ($(c5)-(0,.9*\height)$) {Machine~2};
		\node (s2)
		at ($(c6)-(0,.9*\height)$) {Machine~3};
		\node (s4)
		at ($(c7)-(0,.9*\height)$) {Machine~4};
		
		\draw[-] (t1) -- (c1);
		\draw[<-] (c1) -- (c2);
		\draw[<-] (c1) -- (c3);
		\draw[<-] (c2) -- (c4);
		\draw[<-] (c2) -- (c5);
		\draw[<-] (c3) -- (c6);
		\draw[<-] (c3) -- (c7);
		\draw[-] (c4) -- (s1);
		\draw[-] (c5) -- (s3);
		\draw[-] (c6) -- (s2);
		\draw[-] (c7) -- (s4);
	\end{tikzpicture}
	\caption{Mixed graph associated with
		the perfect binary tree
		of \figurename~\ref{fig:binary-tree}.}
	\label{fig:app3-hierarchical}
\end{figure}
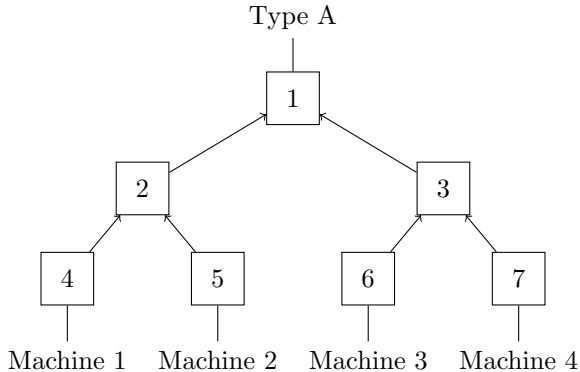

This hierarchical load-distribution strategy
could be generalized by considering
a perfect $a$-ary tree, with $a \ge 2$,
or a directed rooted tree,
so that each node represents a class of tokens
and the tokens associated with leaf nodes
give access to machines.
By combining this idea with that of
Section~\ref{subsubsec:app-distributed},
we could also consider
a directed acyclic graph
and associate
a job type with one or more nodes without ancestor
and a machine or a group of machines
with each node without descendant.
As for previous cluster models,
it is also possible to propose an alternative interpretation of this model
using redundancy scheduling.

\section{Conclusion} \label{sec:ccl}

In this paper, we introduced pass-and-swap (\gls{pands}) queues, an extension of order-independent (OI) queues in which,
upon a service completion,
customers move along the queue and swap positions with other customers, depending on compatibilities defined by a so-called swapping graph.
We showed that a stable \gls{pands} queue is quasi-reversible and that, surprisingly, its product-form stationary distribution
is independent of its swapping graph.
We then studied networks of \gls{pands} queues.
Although deriving the stationary distribution of open networks is a straightforward application of quasi-reversibility,
the case of closed networks is more intricate because the Markov process describing the network state over time is not necessarily irreducible.
For closed networks with one or two queues
and a deterministic routing process,
we observed that the \gls{pands} mechanism allows for the enforcement of priorities between classes, in the sense that a customer cannot leave a queue
before all customers of the classes with higher priority leave it. Finally, we showed that such closed networks describe the dynamics of the loss variants of several token-based load-distribution protocols, such \gls{fcfs-alis} and multiple redundancy-scheduling protocols.

This work suggests that we still do not have a complete picture of all queueing dynamics that lead to a product-form stationary distribution, which leaves open an important avenue for further study. 
Another open question is formed by the irreducibility of the Markov process underlying closed networks.
While we established irreducibility of this Markov process under the condition that, at any point in time, each customer has a positive service rate, 
the characterization of irreducibility properties of the Markov process underlying general closed networks, and their impact on the stationary distribution (along with its product-form nature), remains an open question. 
A different direction of further research entails the applications of \gls{pands} queues. In particular, in Section~\ref{sec:app}, we regarded applications based on multi-server queues as defined in Example~\ref{ex:oi}. Although applications with arbitrary customer-server compatibilities, such as load-balancing and resource-management protocols in computer systems, form the motivation for this work, we believe that \gls{pands} queues can be successfully applied to other systems involving priorities. This would require the use of more general \gls{pands} queues than just multi-server queues.

\paragraph{Acknowledgement}
We are thankful to Thomas Bonald
and Fabien Mathieu
for their useful comments
and for coining
the name ``pass-and-swap queues''.
We thank Sem Borst
for some helpful discussions
and for valuable remarks on an earlier draft of this paper.
The authors also wish to thank an anonymous associate editor for several astute remarks about the contents and exposition of the paper, and for suggesting the \emph{cancel-on-commit} redundancy protocol (including its name).
The research of Jan-Pieter Dorsman is supported by the Netherlands Organisation for Scientific Research (NWO) through Gravitation-grant NETWORKS-024.002.003.


\appendix
\section*{Appendix}

\section{Proof of Theorem~\ref{theo:pands}} \label{app:pands}

Consider a \gls{pands} queue as defined in Section~\ref{subsec:pands-def},
with a set $\I = \{1,\ldots,I\}$ of customer classes,
per-class arrival rates $\lambda_1, \ldots, \lambda_I$,
and a rate function $\mu$.
Also, for each $i \in \I$,
let $\I_i \subseteq \I$ denote
the set of customer classes
that can be swapped with class~$i$.
As announced in the sketch of proof
that followed Theorem~\ref{theo:pands},
our objective is to prove
that the balance function~$\Phi$
defined by~\eqref{eq:oi-Phic}
satisfies~\eqref{eq:pands-partial-balance-3}.

\paragraph{Rewriting \eqref{eq:pands-partial-balance-3}}

We first need to specify,
for each $c = (c_1,\ldots,c_n) \in \I^*$
and $i \in \I$,
all transitions that lead to state~$c$
by the departure of a class-$i$ customer.
To this end, we will identify
all states $d \in \I^*$
and positions $p \in \{1, \ldots, n+1\}$
such that $\delta_p(d) = (c, i)$.
The following notation will be convenient.
For each $c = (c_1,\ldots,c_n) \in \I^*$
and $d = (d_1,\ldots,d_m) \in \I^*$,
we let $c,d = (c_1,\ldots,c_n,d_1,\ldots,d_m)$
denote the state obtained by concatenation.
If~$d$ contains a single class-$i$ customer,
that is $d = (i)$,
then we simply write $c,i$ for $c,(i)$
and $i,c$ for $(i),c$.
For each sequence
$c = (c_1,\ldots,c_n) \in \I^*$
and positions $p, q \in \{1, \ldots, n\}$
with $p \leq q$,
we let $c_{p \ldots q} = (c_p,\ldots,c_q)$.
Finally, we adopt the convention that
$c_{p \ldots q} = \emptyset$ if $p > q$.

Now that all required notation is introduced,
we proceed with the identification.
Let $c = (c_1,\ldots,c_n) \in \I^*$
and $i \in \I$.
Furthermore, we set $q_0 = n+1$ and $i_0 = i$.
Moving from tail to head in state~$c$
(that is, from position~$n$ to position~$1$),
we determine the positions and classes
of the customers that may be involved in
a transition that leads to state~$c$
by the departure of a class-$i$ customer.
We now distinguish between multiple cases, based on the total number~$v$ of customers that move during the transition:

\begin{description}
	\item[Case $v = 1$]
	A single customer was involved
	in the transition,
	namely the customer of class~$i_0 = i$ that left.
	By definition of
	the \gls{pands} mechanism,
	this customer is the one that completed service
	and it could not replace any subsequent customer in the queue.
	Therefore, if state~$c$ contains any class
	that can be swapped with class~$i$,
	then the departing customer of class~$i$ was necessarily
	in a position $p \in \{q_1+1,\ldots,n+1\}$ before the transition,
	where $q_1$ is the largest integer
	$q \in \{1, \ldots, n\}$
	such that $c_q \in \I_{i_0}$.
	If state~$c$ does not contain any such customer,
	we let $q_1 = 0$.
	In both cases, before the departure,
	the queue could be any state of the form
	$d = c_{1 \ldots p-1}, i_0,
	c_{p \ldots n}$,
	where $p \in \{q_1+1,\ldots,n+1\}$.
	\item[Case $v = 2$]
	If two customers were involved in the transition,
	this means that the departing customer of class~$i$
	was ejected by a second customer
	whose service was completed.
	The \gls{pands} mechanism
	and the symmetric property
	of the swapping relation
	impose that this second customer is the one we have just identified,
	in position $q_1$, and that $q_1 \ge 1$.
	We let $i_1 = c_{q_1} \in \I_{i_0}$ denote the class of this second customer.
	By the same argument as before,
	this second customer
	could be in any position
	$p \in \{q_2 + 1, \ldots, q_1\}$
	before the transition,
	where $q_2$ is the largest integer
	$q \in \{1, \ldots, q_1 - 1\}$
	such that $c_q \in \I_{i_1}$, if any,
	and $q_2 = 0$ otherwise.
	In both cases, before the departure,
	the queue could be in any state of the form
	$d = c_{1 \ldots p-1}, i_1,
	c_{p \ldots q_1-1},
	i_0, c_{q_1+1 \ldots n}$,
	where $p \in \{q_2+1, \ldots, q_1\}$.
	\item[Case $v = 3$]
	The departing customer
	was ejected by a second customer,
	which was ejected by a third customer
	whose service was completed.
	Pursuing the previous reasoning,
	we can show that the second involved customer
	is that of class $i_1 = c_{q_1}$, in position $q_1$,
	and the third involved customer is that
	of class $i_2 = c_{q_2}$,
	in position $q_2$,
	assuming that $1 \le q_2 < q_1$.
	Before the transition,
	this third customer could be in any position $p \in \{q_3+1, \ldots, q_2\}$,
	where $q_3$ is the largest integer
	$q \in \{1, \ldots, q_2 - 1\}$
	such that $c_q \in \I_{i_2}$, if any,
	and $q_3 = 0$ otherwise.
	Before the departure,
	the queue could be in any state
	$ d = c_{1 \ldots, p-1},
	i_2, c_{p \ldots q_2-1}, i_1,
	c_{q_2+1 \ldots q_1-1},
	i_0, c_{q_1+1 \ldots n} $,
	where $p \in \{q_3+1,\ldots,q_2\}$.
\end{description}
Continuing on, we build
a decreasing sequence
$n+1 = q_0 > q_1 > q_2 >
\ldots > q_{u-1} > q_u = 0$
of positions in state~$c$
using the recursion
$q_v = \max\{q \le q_{v-1} - 1:
c_q \in \I_{i_{v-1}}\}$
for each $v \in \{1, \ldots, u-1\}$.
The recursion stops
when the set
$\{q \le q_{v-1} - 1:
c_q \in \I_{i_{v-1}}\}$ is empty,
in which case we let $u = v$ and $q_u = 0$.
This integer $u$ gives
the maximum number of customers
that can be involved in the transition
(including the departing class-$i$ customer).
We also define a sequence
$i_0 = i, i_1 = c_{q_1}, i_2 = c_{q_2},
\ldots, i_{u-1} = c_{q_{u-1}}$
of classes.
In the end, the states~$d$
that lead to state~$c$
by a departure of a class-$i$ customer are those of the form
$$
d = c_{1 \ldots p-1}, i_v, c_{p \ldots q_v-1}
, i_{v-1}, c_{q_v+1 \ldots q_{v-1}-1}
, \ldots,
c_{q_3+1 \ldots q_2-1}, i_1,
c_{q_2+1 \ldots q_1-1}, i_0,
c_{q_1+1 \ldots n}
$$
with $v \in \{0,\ldots,u-1\}$
and $p \in \{q_{v+1}+1, q_{v+1}+2,
\ldots, q_v\}$,
where~$p$ gives the position
of the customer, of class~$i_v$,
whose service was actually completed.
This implies that~\eqref{eq:pands-partial-balance-3}
can be rewritten as follows:
\begin{align}
	\nonumber
	\Phi(c)
	= \sum_{v=0}^{u-1} \sum_{p=q_{v+1}+1}^{q_v}
	&
	\begin{aligned}[t]
		\Phi(
		&c_{1 \ldots p-1} , i_v , c_{p \ldots q_v-1}
		, i_{v-1} , c_{q_v+1 \ldots q_{v-1}-1} , i_{v-2} , \\
		& \ldots ,
		c_{q_3+1 \ldots q_2-1} , i_1 ,
		c_{q_2+1 \ldots q_1-1} , i_0 ,
		c_{q_1+1 \ldots n}
		)
	\end{aligned} \\
	\label{eq:proof-partial-balance-2}
	&\times \Delta\mu(c_{1 \ldots p-1} , i_v),
\end{align}
That the balance function~$\Phi$
defined by~\eqref{eq:oi-Phic}
satisfies~\eqref{eq:proof-partial-balance-2}
is shown in the following lemma,
which concludes the proof
of Theorem~\ref{theo:pands}.

\begin{lemma} \label{lem:proof-pands}
	The function $\Phi$ defined by~\eqref{eq:oi-Phic}
	satisfies~\eqref{eq:proof-partial-balance-2}
	for each integers $n \ge 0$
	and $u \in \{1, \ldots, n+1\}$,
	state~$c = (c_1, \ldots, c_n) \in \I^*$,
	class~$i \in \I$,
	and decreasing integer sequence
	$q_0, q_1, \ldots, q_u$ with $q_0 = n+1$ and $q_u = 0$,
	where $i_0 = i, i_1 = c_{q_1}, i_2 = c_{q_2},
	\ldots, i_{u-1} = c_{q_{u-1}}$.
\end{lemma}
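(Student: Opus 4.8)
The plan is to prove the identity~\eqref{eq:proof-partial-balance-2} by induction on the queue length~$n$, mimicking the structure of the proof of Theorem~\ref{theo:oi} but keeping track of the extra combinatorial bookkeeping introduced by the swapping positions $q_1 > q_2 > \dots > q_{u-1}$. For the base case $n=0$ (hence $u=1$, $q_0=1$, $q_1=0$, $i_0=i$), the right-hand side of~\eqref{eq:proof-partial-balance-2} collapses to the single term $\Phi(i)\,\Delta\mu(i) = \Phi(i)\,\mu(i)$, which equals $\Phi(\emptyset) = \Phi(c)$ by~\eqref{eq:oi-Phic}. For the inductive step, I would fix $n \ge 1$, assume the statement for all shorter states (and all choices of $u$, $i$, and decreasing sequences), and split the double sum on the right-hand side of~\eqref{eq:proof-partial-balance-2} according to whether the service-completion position $p$ lies strictly inside $\{1,\dots,n\}$ or equals $n+1$. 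The term with $p=n+1$ can only occur in the block $v=0$ (since $q_1 \le n$), and for that term the inserted customer of class $i_0=i$ sits at the very end, so by~\eqref{eq:oi-Phic} its contribution is $\Phi(c,i)\,\Delta\mu(c,i) = \Phi(c)\,\dfrac{\mu(c,i)-\mu(c)}{\mu(c,i)}$, exactly in parallel with~\eqref{eq:oi-partial-balance-5}.

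The heart of the argument is the treatment of all terms with $p \le n$. Here I would factor out $\dfrac{1}{\mu(\cdot\,,c_n)}$ — more precisely, in each such term the last customer in the state~$d$ is $c_n$ (the position $q_1$ of $i_1$ satisfies $q_1 \le n$, and in all blocks the tail $c_{q_1+1\ldots n}$ is nonempty or the final entry is $c_n$), and since $\mu$ is order-independent, the full $n$-customer prefix-rate in $\Phi(d)$ equals $\mu$ applied to the multiset of the first $n-1$ entries together with $c_n$, which is $\mu$ of the macrostate of the shorter state obtained by deleting $c_n$. This lets me rewrite the $p \le n$ part of the sum as $\dfrac{1}{\mu(c_1,\dots,c_n,i)}$ — note $\mu$ of the $(n+1)$-multiset $\{c_1,\dots,c_n,i\}$ — times a copy of the right-hand side of~\eqref{eq:proof-partial-balance-2} for the length-$(n-1)$ state $c' = (c_1,\dots,c_{n-1})$, with the same class~$i$ and the sequence $q_0' = n$, $q_j' = q_j$ for $j \ge 1$ (which is still strictly decreasing and still satisfies $c'_{q_j'} = i_j$, because deleting $c_n$ does not disturb positions $1,\dots,n-1$). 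By the induction hypothesis that inner sum equals $\Phi(c_1,\dots,c_{n-1})$, so the $p\le n$ part equals $\dfrac{\Phi(c_1,\dots,c_{n-1})}{\mu(c_1,\dots,c_n,i)} = \Phi(c)\,\dfrac{\mu(c)}{\mu(c,i)}$, again mirroring~\eqref{eq:oi-partial-balance-4}. Adding the $p=n+1$ contribution then gives $\Phi(c)\bigl(\tfrac{\mu(c)}{\mu(c,i)} + \tfrac{\mu(c,i)-\mu(c)}{\mu(c,i)}\bigr) = \Phi(c)$, completing the induction.

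The main obstacle I anticipate is purely notational rather than conceptual: one must verify carefully that deleting the final customer $c_n$ turns the elaborate insertion pattern appearing in~\eqref{eq:proof-partial-balance-2} for $c$ into exactly the corresponding pattern for $c' = (c_1,\dots,c_{n-1})$, i.e.\ that the block indexed by $v$ and the sub-range $p \in \{q_{v+1}+1,\dots,q_v\}$ maps correctly, that the sequence $(q_j)$ and the classes $(i_j)$ are inherited without change, and that the factor $\Delta\mu(c_{1\ldots p-1},i_v)$ is untouched (it depends only on positions $< p \le n$). A secondary point requiring care is the edge case where $q_1 = n$ (so the tail $c_{q_1+1\ldots n}$ is empty); one checks directly that the factoring of $\mu(\cdot\,,c_n)$ still goes through because the deleted entry $c_n = i_1$ then plays the role of the inserted customer in the shortened state, and order-independence of $\mu$ absorbs the reshuffling. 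Once these bookkeeping checks are dispatched, the algebra is identical to the two-term cancellation at the end of the proof of Theorem~\ref{theo:oi}.
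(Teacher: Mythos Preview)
Your approach is correct and constitutes a genuinely different route from the paper's. The paper inducts on $u$ (the length of the swapping chain): it first applies the single-class identity~\eqref{eq:proof-partial-balance-3} to $c$ with class $i_0$, splits the resulting sum at $p \le q_1$ versus $p > q_1$, recognizes the $p \le q_1$ part as~\eqref{eq:proof-partial-balance-3} applied to the prefix $c_{1\ldots q_1-1}$, and then invokes the induction hypothesis for $u-1$ on $\Phi(c_{1\ldots q_1-1})$ with class $i_1$ and the shifted sequence $q_1,\ldots,q_u$. You instead induct on $n$, peel off only the last position, and reduce to $c' = c_{1\ldots n-1}$. Your scheme is a more direct generalization of the proof of Theorem~\ref{theo:oi}, whereas the paper's scheme treats the already-established identity~\eqref{eq:proof-partial-balance-3} as a black box at every step and never needs to isolate the boundary term $p=n+1$; in exchange, the paper avoids the $q_1=n$ edge case entirely.

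One point in your write-up should be sharpened. In the edge case $q_1 = n$, the last entry of each state $d$ (for $v \ge 1$) is $i_0 = i$, not $c_n$; after stripping it, the remaining $n$-term sum is precisely~\eqref{eq:proof-partial-balance-2} for $c'$ with class $i_1$, parameter $u' = u-1$, and the \emph{shifted} sequence $q_0' = n,\ q_j' = q_{j+1}$ --- not with ``the same class~$i$ and $q_j' = q_j$'' as stated earlier. You clearly sense this (``$c_n = i_1$ then plays the role of the inserted customer''), but the induction-hypothesis call should be made explicit with these adjusted parameters; once that is done your argument is complete.
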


\begin{proof}[Proof of the lemma]
	Our proof is by induction
	on the maximum number~$u \ge 1$
	of customers involved in the transition.
	More specifically,
	we show that the following statement
	holds for each positive integer~$u$:
	\begin{displayquote}
		Equation~\eqref{eq:proof-partial-balance-2} is satisfied
		for each integer~$n \ge u-1$,
		state~$c = (c_1, \ldots, c_n) \in \I^*$,
		class~$i \in \I$,
		and decreasing integer sequence
		$q_0, q_1, \ldots, q_u$ with $q_0 = n+1$ and $q_u = 0$.
	\end{displayquote}
	Before we proceed to the proof by induction,
	recall that
	$\Phi$ satisfies the following equation,
	which is a rewritten version
	of Equation~\eqref{eq:oi-partial-balance-3}
	shown in the proof of Theorem~\ref{theo:oi}:
	\begin{equation} \label{eq:proof-partial-balance-3}
		\Phi(c) = \sum_{p=1}^{n+1} \Phi(c_{1 \ldots p-1} , i , c_{p \ldots n})
		\, \Delta\mu(c_{1 \ldots p-1} , i),
		\quad \forall n \ge 0,
		\quad \forall c = (c_1, \ldots, c_n) \in \I^*,
		\quad \forall i \in \I.
	\end{equation}
	
	\paragraph{\normalfont \textit{Base step.}}
	With $u = 1$,
	\eqref{eq:proof-partial-balance-2}
	is equivalent to~\eqref{eq:proof-partial-balance-3}
	with $i = i_0$.
	As we have just mentioned,
	it was already shown that
	$\Phi$ satisfies this equation.
	
	\paragraph{\normalfont \textit{Induction step.}}
	Now let $u \ge 2$ and assume that the statement
	is valid for each
	$u' \in \{1, 2, \ldots, u-1\}$.
	Consider an integer $n \ge u - 1$,
	a state
	$c = (c_1,\ldots,c_n) \in \I^*$,
	a class $i \in \I$,
	and a decreasing integer sequence
	$q_0, q_1, \ldots, q_u$ with $q_0 = n+1$ and $q_u = 0$.
	Also let
	$i_0 = i, i_1 = c_{q_1}, i_2 = c_{q_2},
	\ldots, i_{u-1} = c_{q_{u-1}}$.
	We first apply~\eqref{eq:proof-partial-balance-3}
	to state~$c$ and class $i = i_0$
	and split the sum into two parts to obtain
	\begin{align*}
		\Phi(c)
		&=
		\begin{aligned}[t]
			&\sum_{p=1}^{q_1} \Phi(c_{1 \ldots p-1} , i_0 ,
			c_{p \ldots n})
			\, \Delta\mu(c_{1 \ldots p-1} , i_0)
			+ \sum_{p=q_1+1}^{n+1} \Phi(c_{1 \ldots p-1} , i_0 ,
			c_{p \ldots n})
			\, \Delta\mu(c_{1 \ldots p-1} , i_0).
		\end{aligned}
	\end{align*}
	Using the definition~\eqref{eq:oi-Phic} of $\Phi$
	and the fact that $\mu$ is order independent,
	we rewrite the first sum differently:
	\begin{align}
		\nonumber
		\Phi(c)
		={}&\left( \prod_{p=q_1}^{n} \frac1{\mu( c_{1 \ldots p} , i_0 )} \right)
		\sum_{p=1}^{q_1} \Phi(c_{1 \ldots p-1} , i_0 , c_{p \ldots q_1-1})
		\, \Delta\mu(c_{1 \ldots p-1} , i_0) \\
		\label{eq:proof-partial-balance-4}
		&+ \sum_{p=q_1+1}^{n+1} \Phi(c_{1 \ldots p-1} , i_0 , c_{p \ldots n})
		\, \Delta\mu(c_{1 \ldots p-1} , i_0).
	\end{align}
	But applying \eqref{eq:proof-partial-balance-3}
	to state $c_{1 \ldots q_1-1}$
	and class $i_0$ yields
	$$
	\Phi(c_{1 \ldots q_1-1})
	= \sum_{p=1}^{q_1} \Phi(c_{1 \ldots p-1},
	i_0 , c_{p \ldots q_1-1})
	\, \Delta\mu(c_{1 \ldots p-1} , i_0),
	$$
	so that~\eqref{eq:proof-partial-balance-4} can be rewritten as
	\begin{align} \label{eq:proof-partial-balance-5}
		\Phi(c)
		&=
		\begin{aligned}[t]
			\left( \prod_{p=q_1}^{n} \frac1{\mu( c_{1 \ldots p} , i_0 )} \right)
			\Phi(c_{1 \ldots q_1-1})
			+ \sum_{p=q_1+1}^{n+1} \Phi(c_{1 \ldots p-1} , i_0 , c_{p \ldots n})
			\, \Delta\mu(c_{1 \ldots p-1} , i_0).
		\end{aligned}
	\end{align}
	Now we apply the induction assumption
	to the positive integer $u' = u - 1$,
	with the integer $n' = q_1 - 1$,
	the state $c' = c_{1 \ldots q_1-1}$,
	the class~$i_1$,
	the decreasing sequence
	$q'_0 = q_1 = n' + 1$, $q'_1 = q_2$, \ldots, $q'_{u-2} = q_{u-1}$,
	$q'_{u'} = q'_{u-1} = q_u = 0$,
	and the indices $i'_0 = i_1$, $i'_1 = i_2$,
	\ldots, $i'_{u'-1} = i'_{u-2} = i_{u-1}$.
	We can verify that $n'\ge u' - 1$
	because the sequence
	$q_1$, $q_2$, \ldots, $q_u$
	is decreasing with $q_u = 0$,
	so that
	$q_1 \ge q_2 + 1 \ge q_3 + 2
	\ge \ldots \ge q_u + (u-1) = u-1$.
	For this setting,
	\eqref{eq:proof-partial-balance-2} implies that 
	\begin{align*}
		\Phi(c_{1 \ldots q_1-1})
		= \sum_{v=1}^{u-1} \sum_{p=q_{v+1}+1}^{q_v}
		&
		\begin{aligned}[t]
			\Phi(
			&c_{1 \ldots p-1} , i_v , c_{p \ldots q_v-1}
			, i_{v-1} , c_{q_v+1 \ldots q_{v-1}-1} , i_{v-2} , \\
			& \ldots ,
			c_{q_4+1 \ldots q_3-1} , i_2 ,
			c_{q_3+1 \ldots q_2-1} , i_1 ,
			c_{q_2+1 \ldots q_1-1}
			)
		\end{aligned} \\
		& \times \Delta\mu(c_{1 \ldots p-1} , i_v). \nonumber
	\end{align*}
	Note that the first sum ranges from
	$1$ to $u-1$,
	and not from $0$
	to $u' - 1 = u - 2$,
	as a result of rewriting.
	Doing the substitution in~\eqref{eq:proof-partial-balance-5} yields
	\begin{align*}
		\Phi(c)
		&=
		\begin{aligned}[t]
			&\left( \prod_{p=q_1}^{n} \frac1{\mu( c_{1 \ldots p} , i_{0} )} \right)
			\sum_{v=1}^{u-1} \sum_{p = q_{v+1}+1}^{q_v}
			\begin{aligned}[t]
				&\begin{aligned}[t]
					\Phi(
					&c_{1 \ldots p-1} , i_v , c_{p \ldots q_v-1}
					, i_{v-1} , c_{q_v+1 \ldots q_{v-1}-1} , i_{v-2} , \\
					& \ldots ,
					c_{q_4+1 \ldots q_3-1} , i_2 ,
					c_{q_3+1 \ldots q_2-1} , i_1 ,
					c_{q_2+1 \ldots q_1-1}
					)
				\end{aligned} \\
				&\times \Delta\mu(c_{1 \ldots p-1} , i_v)
			\end{aligned}
			\\
			&+ \sum_{p=q_1+1}^{n+1} \Phi(c_{1 \ldots p-1} , i_0 , c_{p \ldots n})
			\, \Delta\mu(c_{1 \ldots p-1} , i_0).
		\end{aligned}
	\end{align*}
	We again apply~\eqref{eq:oi-Phic}
	and the fact that $\mu$ is order independent
	to move the product back into the first sum,
	so that we obtain
	\begin{align*}
		\Phi(c)
		&=
		\begin{aligned}[t]
			& \sum_{v=1}^{u-1} \sum_{p = q_{v+1}+1}^{q_v}
			\begin{aligned}[t]
				&\begin{aligned}[t]
					\Phi(
					&c_{1 \ldots p-1} , i_v , c_{p \ldots q_v-1}
					, i_{v-1} , c_{q_v+1 \ldots q_{v-1}-1} , i_{v-2} , \\
					& \ldots ,
					c_{q_3+1 \ldots q_2-1} , i_1 ,
					c_{q_2+1 \ldots q_1-1} , i_0 ,
					c_{q_1+1 \ldots n}
					)
				\end{aligned} \\
				&  \times \Delta\mu(c_{1 \ldots p-1} , i_v)
			\end{aligned}
			\\
			&+ \sum_{p=q_1+1}^{n+1} \Phi(c_{1 \ldots p-1} , i_0 , c_{p \ldots n})
			\, \Delta\mu(c_{1 \ldots p-1} , i_0).
		\end{aligned}
	\end{align*}
	We conclude by observing that
	the second sum corresponds
	to the missing term $v = 0$
	in the first sum.
\end{proof}

\section{Proof of Theorem~\ref{theo:stability}} \label{app:stability}

We first prove that \eqref{eq:stability} is a necessary condition for stability by arguing that the Markov process describing the state of the queue over time cannot be ergodic in the absence of this condition. Then, we prove that this condition is sufficient, by comparing the \gls{pands} queue to a degenerate queue with pessimistic service rates.

\paragraph{Necessary condition}

Assume that there is a non-empty set $\A \subseteq \I$
such that $\limmu(\A) \le \sum_{i \in \A} \lambda_i$.
Since $\mu$ is non-decreasing,
this means that $\mu(x) \le \sum_{i \in \A} \lambda_i$
for each $x \in \N^I$
such that $\{i \in \I: x_i > 0\} \subseteq \A$.
Combining this inequality with \eqref{eq:oi-Phic} yields that, for any such $x$,
and for each $c \in \I^*$ such that $|c| = x$,
we have
\begin{equation*}
	\Phi(c)
	\ge \left( \frac1{\sum_{i \in \A} \lambda_i} \right)^{|c|_1 + \ldots + |c|_I}
	= \left( \frac1{\sum_{i \in \A} \lambda_i} \right)^{x_1 + \ldots + x_I},
\end{equation*}
which implies
\begin{align} \label{eq:Phiineq}
	\sum_{\substack{c \in \I^*: |c| = x}}
	\Phi(c) \prod_{i \in \I} {\lambda_i}^{|c|_i}
	\ge \sum_{\substack{c \in \I^*: |c| = x}}
	\; \prod_{i \in \A} 
	\left( \frac{\lambda_i}{\sum_{j \in \A} \lambda_j} \right)^{x_i}
	= \binom{x_1 + \ldots + x_I}{x_1, \ldots, x_I}
	\prod_{i \in \A}
	\left( \frac{\lambda_i}{\sum_{j \in \A} \lambda_j} \right)^{x_i}.
\end{align}
It follows that
\begin{align*}
	\sum_{c \in \I^*} \Phi(c) \prod_{i \in \I} {\lambda_i}^{|c|_i}
	&= \sum_{x \in \N^I}
	\; \sum_{\substack{c \in \I^*: |c| = x}}
	\Phi(c) \prod_{i \in \I} {\lambda_i}^{|c|_i}, \\
	&\ge \sum_{\substack{x \in \N^I: \\ \{i \in \I: x_i > 0\} \subseteq \A}}
	\; \sum_{\substack{c \in \I^*: |c| = x}}
	\Phi(c) \prod_{i \in \I} {\lambda_i}^{|c|_i}, \\
	&\ge \sum_{\substack{x \in \N^I: \\ \{i \in \I: x_i > 0\} \subseteq \A}}
	\binom{x_1 + \ldots + x_I}{x_1, \ldots, x_I}
	\prod_{i \in \A}
	\left( \frac{\lambda_i}{\sum_{j \in \A} \lambda_j} \right)^{x_i}, \\
	&= \sum_{n=0}^\infty \left( \sum_{ i\in \A} \frac{\lambda_i}{\sum_{j \in \A} \lambda_j} \right)^n.
\end{align*}
In the first inequality, we restricted the outer sum so that we can apply \eqref{eq:Phiineq}. In the final equality, we used the multinomial theorem, stating that, for each positive integers $n$ and $N$ and reals $\rho_1, \rho_2, \ldots, \rho_N$, we have
$$
(\rho_1+\ldots+\rho_N)^n = \sum_{x_1+\ldots+x_N =n} \binom{n}{x_1, \ldots, x_N}\prod_{i=1}^N \rho_i^{x_i}.
$$
Since $\sum_{i\in\A} \frac{\lambda_i}{\sum_{j \in \A} \lambda_j} = 1$, the final expression amounts to infinity, so that $\sum_{c \in \I^*} \Phi(c) \prod_{i \in \I} {\lambda_i}^{|c|_i} = \infty$. This ensures that \eqref{eq:oi-stability} is not satisfied, so that the Markov process on $\I^*$ cannot be ergodic.

\paragraph{Sufficient condition}

Assuming that \eqref{eq:stability} is satisfied,
we prove stability in two steps.
We first introduce a second \gls{pands} queue
with the same set~$\I$ of classes
and arrival rates $\lambda_1, \ldots, \lambda_I$
as the original \gls{pands} queue,
but with a rate function $\hat\mu$
such that $\hat\mu(x) \le \mu(x)$ for each $x \in \N^I$.
We will refer to this second \gls{pands} queue
as the \textit{degenerate} \gls{pands} queue,
as the service rate received by the customers of each class
only depends on the number of customers of this class.
Then we will show that the degenerate \gls{pands} queue is stable.
Since the degenerate \gls{pands} queue has more pessimistic service rates than the original \gls{pands} queue, this also implies that the original \gls{pands} queue is stable, as we will see below.

We first introduce several quantities
that will be useful to define
the degenerate \gls{pands} queue.
Since $\bar\mu$ satisfies~\eqref{eq:stability},
there exists an $m \in \N$ such that
\begin{equation*}
	\sum_{i \in \A} \lambda_i < \mu(m e_\A),
	\quad \forall \A \subseteq \I: \A \neq \emptyset.
\end{equation*}
We can also find
$\hat\lambda = (\hat\lambda_1, \ldots, \hat\lambda_I) \in \R_+^I$
such that $\lambda_i < \hat\lambda_i$
for each $i \in \I$, and
\begin{equation*}\label{eq:lahatineq}
	\sum_{i \in \A} \hat\lambda_i < \mu(m e_\A),
	\quad \forall \A \subseteq \I: \A \neq \emptyset.
\end{equation*}
For instance, we can choose
$$
\hat\lambda_i = \lambda_i
+ \frac12
\min_{\substack{\A \subseteq \I: i \in \A}}
\left(
\frac{ \mu(m e_\A) - \sum_{j \in \A} \lambda_j }{ |\A| }
\right),
\quad \forall i \in \I.
$$
Finally, we let
\begin{equation} \label{eq:proof-delta}
	\delta = \frac1I \, \min\left(
	\min_{x \in \N^I \setminus \{0\}} (\mu(x)),
	\min_{\A \subseteq \I: \A \neq \emptyset} \left(
	\mu(m e_\A) - \sum_{i \in \A} \hat\lambda_i
	\right)
	\right).
\end{equation}
The definitions of $\mu$ and $\hat\lambda$
guarantee that $\delta > 0$.
In the degenerate \gls{pands} queue,
$\delta$ will be the service rate
of the customer classes
that have fewer than $m$ present customers.
As we will see later,
choosing this value of $\delta$ ensures that
the service rate of the degenerate queue
is always smaller or equal to that of the original queue.

The degenerate \gls{pands} queue is defined as follows.
Just like the original \gls{pands} queue,
the set of customer classes is $\I = \{1, \ldots, I\}$
and the per-class arrival rates are $\lambda_1, \ldots, \lambda_I$.
But the rate function $\hat\mu$ of this new queue
is defined on $\N^I$ by
$\hat\mu(x) = \sum_{i \in \I} \hat\mu_i(x_i)$, with
\begin{equation*}
	\hat\mu_i(x_i) = \begin{cases}
		0
		&\text{if $x_i = 0$,} \\
		\min(\delta, \hat\lambda_i)
		&\text{if $x_i = 1, 2, \ldots, m-1$,} \\
		\hat\lambda_i
		&\text{if $x_i = m, m+1, \ldots$} \\
	\end{cases}
\end{equation*}
In this way, for each $i \in \I$,
the oldest class-$i$ customer is served at rate $\min(\delta, \hat\lambda_i)$
and the $m$-th oldest class-$i$ customer
is served at rate $\max(\hat\lambda_i - \delta, 0)$.
The service rate of other class-$i$ customers is zero.
It follows that, for each $x \in \N^I \setminus \{0\}$,
we have $\hat\mu(x) \le \mu(x)$.
Indeed,
\begin{itemize}
	\item if $x_i < m$ for each $i \in \I$, then
	$$
	\hat\mu(x)
	\le \sum_{\substack{i \in \I: x_i > 0}} \delta
	\le \sum_{\substack{i \in \I: x_i > 0}} \frac1I \mu(x)
	\le \mu(x),
	$$
	where the first inequality follows from the definition of $\hat\mu$
	and the second holds by~\eqref{eq:proof-delta};
	\item otherwise, with $\A = \{i \in \I: x_i \ge m\}$,
	we have $\A \neq \emptyset$ and $x \ge m e_\A$, so that
	\begin{align*}
		\hat\mu(x)
		\le \sum_{i \in \A} \hat\lambda_i
		+ \sum_{\substack{i \in \I \setminus \A: x_i > 0}} \delta
		\le \sum_{i \in \A} \hat\lambda_i
		+ \sum_{\substack{i \in \I \setminus \A: x_i > 0}}
		\frac{\mu(m e_\A) - \sum_{j \in \A} \hat\lambda_j}I
		\le \mu(m e_\A)
		\le \mu(x),
	\end{align*}
	where the first inequality follows from the definition of $\hat\mu$,
	the second holds by~\eqref{eq:proof-delta},
	and the fourth follows from the monotonicity of $\mu$.
\end{itemize}
We let $\hat\Phi$ denote the balance function
of the degenerate \gls{pands} queue,
as defined in \eqref{eq:oi-Phic}.
Theorems~\ref{theo:oi} and \ref{theo:pands} now
guarantee that the original \gls{pands} queue is stable
whenever the degenerate \gls{pands} queue is.
More particularly, by~\eqref{eq:oi-Phic},
we have for each $(c_1, \ldots, c_n) \in \I^*$:
\begin{align*}
	\Phi(c_1, \ldots, c_n)
	= \prod_{p=1}^n \frac1{\mu(c_1, \ldots, c_p)}
	\le \prod_{p=1}^n \frac1{\hat\mu(c_1, \ldots, c_p)}
	= \hat\Phi(c_1, \ldots, c_n),
\end{align*}
which implies that
\begin{align*}
	\sum_{c \in \I^*} \Phi(c) \prod_{i \in \I} {\lambda_i}^{|c|_i}
	\le \sum_{c \in \I^*} \hat\Phi(c) \prod_{i \in \I} {\lambda_i}^{|c|_i}.
\end{align*}
Therefore, according to \eqref{eq:oi-stability}, the original \gls{pands} queue is stable whenever the degenerate \gls{pands} queue is. It therefore only remains to show that the degenerate \gls{pands} queue is stable. To prove this, we first write:
Therefore, according to \eqref{eq:oi-stability}, the original \gls{pands} queue is stable whenever the degenerate \gls{pands} queue is. It therefore only remains to show that the degenerate \gls{pands} queue is stable. To prove this, we first write:
\begin{align*}
	\sum_{c \in \I^*} \hat\Phi(c)
	\prod_{i \in \I} {\lambda_i}^{|c|_i}
	= \sum_{x \in \{0,1,\ldots,m-1\}^I}
	\sum_{\substack{c \in \I^*: \\ |c| = x}} \hat\Phi(c)
	\prod_{i \in \I} {\lambda_i}^{x_i}
	+ \sum_{\substack{\A \subseteq \I: \\ \A \neq \emptyset}}
	\sum_{\substack{x \in \N^I: \\
			x_i \ge m, \forall i \in \A, \\
			x_i < m, \forall i \notin \A}}
	\sum_{\substack{c \in \I^*: \\ |c| = x}} \hat\Phi(c)
	\prod_{i \in \I} {\lambda_i}^{x_i}.
\end{align*}
The first sum on the right-hand side
is finite because it has a finite number of terms.
The second sum is also finite because,
for each non-empty set $\A \subseteq \I$, we have
\begin{align*}
	\sum_{\substack{x \in \N^I: \\
			x_i \ge m, \forall i \in \A, \\
			x_i < m, \forall i \notin \A\phantom{,}}}
	\sum_{\substack{c \in \I^*: \\ |c| = x}} \hat\Phi(c)
	\prod_{i \in \I} {\lambda_i}^{x_i}
	&= \sum_{\substack{y \in \N^I: \\ y_i = 0, \forall i \notin \A}}
	\sum_{\substack{z \in \N^I : \\
			z_i = 0, \forall i \in \A, \\
			z_i < m, \forall i \notin \A}}
	\sum_{\substack{c \in \I^*: \\ |c| = m e_\A + y + z}} \hat\Phi(c)
	\prod_{i \in \I} {\lambda_i}^{(m e_\A + y + z)_i}, \\
	&= \sum_{\substack{y \in \N^I: \\ y_i = 0, \forall i \notin \A}}
	\sum_{\substack{z \in \N^I : \\
			z_i = 0, \forall i \in \A, \\
			z_i < m, \forall i \notin \A}}
	\sum_{\substack{c \in \I^*: \\ |c| = m e_\A + z}} \hat\Phi(c)
	\prod_{i \in \A} \left( \frac1{\hat\lambda_i} \right)^{y_i}
	\prod_{i \in \I} {\lambda_i}^{(m e_\A + y + z)_i}, \\
	&= \left(
	\prod_{i \in \A}
	\sum_{y_i = 0}^{+\infty}
	\left( \frac{\lambda_i}{\hat\lambda_i} \right)^{y_i}
	\right)
	\sum_{\substack{z \in \N^I: \\
			z_i = 0, \forall i \in \A, \\
			z_i < m, \forall i \notin \A}}
	\sum_{\substack{c \in \I^*: \\ |c| = m e_\A + z}} \hat\Phi(c)
	\prod_{i \in \I} {\lambda_i}^{(m e_\A + z)_i}
	< +\infty.
\end{align*}
The first equality is obtained by substitution.
The second equality follows from the fact that,
using \eqref{eq:oi-Phic} and the definition of $\hat\mu$,
we can prove by induction over
$n = x_1 + \ldots + x_I$ that,
for each $x \in \N^I$, we have
\begin{equation*}
	\sum_{\substack{c \in \I^*: |c| = x}}
	\hat\Phi(c)
	= \prod_{i \in \I}
	\bigg(
	\frac1{\min(\delta, \hat\lambda_i)}
	\bigg)^{\min(x_i, m)}
	\left( \frac1{\hat\lambda_i} \right)^{\max(x_i - m,0)}.
\end{equation*}
The third equality is obtained by rearranging terms. The inequality follows from the fact that
$\lambda_i < \hat\lambda_i$ for each $i \in \I$, so that the product between large parentheses is finite; the rest of the expression is a sum of a finite number of terms, each of which is finite.

\section{Proofs of the propositions in Section~\ref{sec:closed}} \label{app:closed}

In this section,
we give the proofs of
Propositions~\ref{prop:one-order-closed},
\ref{prop:one-order-irreducible},
\ref{prop:two-order-closed},
and \ref{prop:two-order-irreducible}
stated in Section~\ref{sec:closed}.

\theoremstyle{plain}
\newtheorem*{prop:one-order-closed}{Proposition~\ref{prop:one-order-closed}}
\begin{prop:one-order-closed}
	If the initial state of the closed \gls{pands} queue adheres to the placement order $\prec$, then any state reached by applying the \gls{pands} mechanism also adheres to this placement order.
\end{prop:one-order-closed}

\begin{proof}
	Let $c = (c_1, \ldots, c_n)$
	denote the initial state of the queue
	and assume that $c$ adheres
	to the placement order $\prec$.
	Let $p \in \{1, \ldots, n\}$ such that $\Delta\mu(c_1, \ldots, c_p) > 0$
	and consider the transition induced by the service completion
	of the customer in position~$p$.
	In the course of this transition,
	one or more customers are moved
	from the head towards the tail of the queue,
	the last one being moved to
	the last position.
	We now argue that the state
	reached after this transition
	still adheres to $\prec$,
	after which the proposition
	follows immediately, since application of the \gls{pands} mechanism only consists of a number of such transitions.
	
	We first show that
	the customer that completes service,
	of class~$c_p$,
	does not pass over any customer
	of a class~$i$ such that $c_p \prec i$.
	If there is no integer $p' \in \{p+1, \ldots, n\}$
	such that $c_p \prec c_{p'}$,
	the conclusion is immediate.
	Now assume that there is such an integer
	and let $q$ denote the smallest integer
	in $\{p+1, \ldots, n\}$
	such that $c_p \prec c_q$.
	We will show that:
	\begin{enumerate}[label=(\roman*)]
		\item \label{proof1}
		classes $c_p$ and $c_q$
		are neighbors in the swapping graph, and
		\item \label{proof2}
		there is no $r \in \{p+1, \ldots, q-1\}$
		such that classes $c_p$ and $c_r$ are neighbors
		in the swapping graph.
	\end{enumerate}
	By definition of the \gls{pands} mechanism,
	this will imply that
	the customer that completes service
	at position~$p$ replaces
	the customer at position~$q$ in state~$c$,
	so that, after the transition,
	the prefix of length $q-1$
	of the new state is
	$(c_1, \ldots, c_{p-1}, c_{p+1}, \ldots, c_{q-1}, c_p)$.
	By definition of~$q$, this prefix
	still adheres to the placement order, and since the rest of the state does not change, the complete state will as well.
	The same reasoning can be repeated
	for each customer that is moved
	by applying the \gls{pands} mechanism.
	
	We first prove
	property~\ref{proof1} by contradiction.
	Assume that this property is not satisfied.
	By definition of the placement order,
	this implies that
	there is a class $i \in \I$
	such that $c_p \prec i \prec c_q$.
	Since state~$c$ adheres
	to the placement order,
	this implies that
	all class-$i$ customers
	are between positions~$p$ and $q$
	in state~$c$.
	In particular, there is an
	$r \in \{p+1, \ldots, q-1\}$
	such that $c_r = i$
	and, therefore, $c_p \prec c_r$,
	which contradicts the minimality of $q$.
	Therefore, property~\ref{proof1} is satisfied.
	We now prove property~\ref{proof2},
	again by contradiction.
	If this property were not satisfied,
	there would be an $r \in \{p+1, \ldots, q-1\}$
	such that classes $c_p$ and $c_r$
	are neighbors in the swapping graph.
	By definition of the placement order,
	this implies that either $c_p \prec c_r$
	or $c_r \prec c_p$.
	Since $p < r$ and state~$c$
	adheres to the placement order,
	the only possibility is that $c_p \prec c_r$,
	which again contradicts the minimality of~$q$.
	Therefore, property~\ref{proof2} is satisfied.
\end{proof}

\theoremstyle{plain}
\newtheorem*{prop:one-order-irreducible}{Proposition~\ref{prop:one-order-irreducible}}
\begin{prop:one-order-irreducible}
	Assume that $\Delta\mu(c) > 0$
	for each $c \in \I^*$.
	All states that
	adhere to the same placement order
	and correspond to the same macrostate
	form a single closed communicating class
	of the Markov process
	associated with the queue state.
\end{prop:one-order-irreducible}

\begin{proof}
	Given Proposition~\ref{prop:one-order-closed},
	it suffices to show that,
	for all states
	$c = (c_1, \ldots, c_n)$ and
	$d = (d_1, \ldots, d_n)$ that
	adhere to the same placement order~$\prec$
	and satisfy $|c| = |d| = \ell$,
	state~$d$ can be reached
	from state~$c$
	with a positive probability.
	
	If $c = d$, the conclusion is immediate.
	Now assume that $c \neq d$.
	We will construct a path of states
	$c^{0}, c^{1}, \ldots, c^{K-1}, c^{K}$,
	with $c^{0} = c$ and $c^{K} = d$,
	that the queue traverses with a positive probability,
	provided that it starts in state~$c$.
	We argue that such a path
	$c^{0}, c^{1}, \ldots, c^{K}$
	is attained by the following algorithm:
	\begin{description}
		\item[Step 1] Set $k = 0$ and $c^{0} = c$.
		\item[Step 2] Determine the smallest integer $p \in \{1, \ldots, n\}$
		such that $c^{k}_p \neq d_p$.
		\item[Step 3] Let $c^{k+1}$ denote
		the state reached when, in state $c^{k}$,
		the customer in position~$p$ completes service
		and the \gls{pands} mechanism is applied.
		\item[Step 4] Set $k = k + 1$.
		If $c^{k} = d$, then
		$K = k$ and the algorithm terminates.
		Otherwise, go to step 2.
	\end{description}
	The idea behind this algorithm is as follows.
	Step~2 identifies
	the first position in state~$c^{k}$
	at which the class of the customer
	does not coincide
	with that of the customer
	at the same position in state~$d$.
	This position is denoted by~$p$.
	Since $(c^k_1, \ldots c^k_{p-1})
	= (d_1, \ldots, d_{p-1})$,
	the customers in positions $1$ to $p - 1$
	need not have their position altered.
	Now let~$r$ denote the smallest integer
	in $\{p+1, \ldots, n\}$
	such that $c^{k}_r = d_p$.
	We will show in the next paragraph that,
	due to the service completion step,
	the customer in position~$r$ in state~$c^{k}$
	is one step closer to (or even attains)
	position~$p$ in state~$c^{k+1}$
	compared to state~$c^{k}$.
	This suffices to prove
	that the algorithm terminates.
	Step~4 makes sure that the two states
	are equal to each other,
	otherwise it initiates a new \gls{pands} transition.
	
	We now prove that,
	if $r$ denotes the smallest integer
	in $\{p+1, \ldots, n\}$
	such that $c^{k}_r = d_p$,
	then $c^{k+1}_{r-1} = c^{k}_r$.
	This is equivalent to proving that
	the customer in position~$r$
	in state~$c^{k}$
	is not ejected in the course of
	the transition described in step~3.
	To prove this,
	it is sufficient to show that,
	for each $q \in \{p, \ldots, r-1\}$,
	classes $c^k_q$ and $c^k_r$
	cannot be swapped with one another,
	that is, are not neighbors
	in the swapping graph.
	Let $q \in \{p, \ldots, r-1\}$.
	The adherence of state~$d$
	to the placement order
	implies that $d_{q'} \not\prec d_p$
	for each $q' \in \{p+1, \ldots, n\}$.
	As $(d_1, \ldots, d_{p-1})
	= (c^k_1, \ldots, c^k_{p-1})$
	and $d_p = c^k_r$,
	this implies that
	$c^{k}_{q} \nprec c^{k}_r$.
	It also follows
	from Proposition~\ref{prop:one-order-closed}
	that state $c^{k}$
	adheres to the placement order,
	so that $c^{k}_r \nprec c^{k}_q$.
	Therefore, we have
	$c^{k}_{q} \nprec c^{k}_r$
	and $c^{k}_r \nprec c^{k}_q$,
	which, by definition
	of a placement order,
	implies that
	classes $c^{k}_q$ and $c^{k}_r$
	are not neighbors in the swapping graph.
\end{proof}

\theoremstyle{plain}
\newtheorem*{prop:two-order-closed}{Proposition~\ref{prop:two-order-closed}}
\begin{prop:two-order-closed}
	If the initial network state
	adheres to the placement order $\prec$,
	then any state reached
	by applying the \gls{pands} mechanism
	to either of the two queues
	also adheres to this placement order.
\end{prop:two-order-closed}

\begin{proof}
	By symmetry,
	it suffices to prove that,
	if a network state adheres
	to the placement order~$\prec$,
	then any network state reached by
	a service completion in the first queue
	also adheres to this placement order.
	Consider a state~$(c;d)$
	that adheres to the placement order
	and let $c = (c_1, \ldots, c_n)$
	and $d = (d_1, \ldots, d_m)$.
	Let $c' = (c'_1, \ldots, c'_{n-1})$
	denote the state of the first queue
	right after a service completion in this queue
	and~$i$ the class of the customer
	that departs this queue.
	The state of the network
	right after the transition
	is $(c';d')$
	with $d' = (d_1, \ldots, d_m, i)$.
	Applying Proposition~\ref{prop:one-order-closed}
	to the first queue
	yields that state
	$(c'_1, \ldots, c'_{n-1}, i)$
	adheres to the placement order,
	from which we can derive that
	properties~\ref{order-1} and \ref{order-3}
	are satisfied by the new network state.
	Finally, the fact that state~$(c;d)$ satisfies properties~\ref{order-2} and \ref{order-3} implies that state~$d'$ satisfies property~\ref{order-2}.
\end{proof}

\theoremstyle{plain}
\newtheorem*{prop:two-order-irreducible}{Proposition~\ref{prop:two-order-irreducible}}
\begin{prop:two-order-irreducible}
	Assume that either
	$\Delta\mu(c) > 0$ for each $c \in \I^*$
	or $\Delta\nu(d) > 0$ for each $d \in \I^*$ (or both).
	All states that adhere
	to the same placement order
	and correspond to the same macrostate
	form a single closed communicating class
	of the Markov process
	associated with the network state.
\end{prop:two-order-irreducible}

\begin{proof}
	Without loss of generality,
	we assume that
	$\Delta\mu(c) > 0$
	for each $c \in \I^*$.
	The case where
	$\Delta\nu(d) > 0$
	for each $d \in \I^*$
	is solved by exchanging
	the roles of the two queues.
	Given the result of
	Proposition~\ref{prop:two-order-closed},
	it suffices to show that,
	for all states
	$(c;d)$ and $(c';d')$
	that adhere to the same placement order
	and correspond to the same macrostate,
	state $(c';d')$ can be reached
	from state $(c;d)$
	with a positive probability.
	
	Consider two states
	$(c;d)$ and $(c';d')$
	that adhere to the same
	placement order
	and satisfy $|c| + |d| = |c'| + |d'|$.
	The numbers of customers
	in states $c$, $d$, $c'$, and $d'$
	are denoted by $n$, $m$, $n'$, and $m'$,
	respectively.
	We now build a series of transitions
	that leads from state~$(c;d)$
	to state~$(c';d')$
	with a positive probability.
	
	First let $(c'';\emptyset)$
	denote the state
	reached from state~$(c;d)$ by having,
	$m$ times in a row,
	the customer at the head
	of the second queue
	complete service.
	Proposition~\ref{prop:two-order-closed}
	guarantees that
	state~$(c'';\emptyset)$
	adheres to the placement order
	so that, by property~\ref{order-1},
	state~$c''$ adheres
	to the placement order.
	Since we assumed that state~$(c';d')$
	adheres to the placement order,
	we also have that state
	$(c'_1, \ldots, c'_{n'}, d'_{m'}, \ldots, d'_1)$ adheres to the placement order.
	Therefore, it follows from
	Proposition~\ref{prop:one-order-irreducible}
	that, if the first queue
	evolved in isolation,
	as in Section~\ref{subsec:one},
	it would be possible to reach
	state
	$(c'_1, \ldots, c'_{n'}, d'_{m'}, \ldots, d'_1)$
	from state $c''$
	with positive probability.
	We can adapt the algorithm in this proposition
	to prove that, in the tandem network, state
	$(c'_1, \ldots, c'_{n'}, d'_{m'}, \ldots, d'_1;
	\emptyset)$
	can also be reached from state $(c'';\emptyset)$
	with a positive probability: it suffices to add a
	transition, after step~3,
	that consists of the service completion
	of the (only) customer
	in the second queue
	(so that this customer joins the back
	of the first queue).
	Once state
	$(c'_1, \ldots, c'_{n'}, d'_{m'}, \ldots, d'_1;
	\emptyset)$ is reached,
	it suffices to have
	the customer at the back of the first queue
	complete service
	$m'$ times in a row.
	Since a service completion
	at the final position of a queue
	does not trigger
	any \gls{pands} movement,
	the network state
	$(c';d')$ is reached,
	which concludes the proof.
\end{proof}

\section{Closed pass-and-swap queues with non-adhering initial states}\label{app:irreducibility}

As mentioned in Remark \ref{remark2}, a product-form stationary distribution can also be found for closed \gls{pands} queues in which the initial state does not adhere to a placement order.
To do so, we first associate, with each closed \gls{pands} queue, another closed \gls{pands} queue. We call this other queue the associated \emph{isomorphic queue}.
This isomorphic queue has
the same dynamics as the original queue
but its set of customer classes is different.
The initial state of this isomorphic queue does adhere to a placement order by construction, so that Propositions~\ref{prop:one-order-closed} and \ref{prop:one-order-irreducible} and Theorem~\ref{theo:one-picd} can be applied. This in its turn leads to a stationary distribution for the original closed \gls{pands} queue in the general case.

\subsection{The isomorphic queue}

We first define, for any closed \gls{pands} queue, its associated isomorphic queue. If the initial state of the original queue contains a single customer of each class, as in the example of Section~\ref{subsubsec:one-example}, its associated isomorphic queue is the queue itself.
We now describe how the isomorphic queue is constructed if the initial state of the original queue contains two or more customers of the same class.

Let $c = (c_1, \ldots, c_n)$ denote the initial state of the queue and consider a class~$i \in \I$ and two positions $p, q \in \{1, \ldots, n\}$ such that $c_p = c_q = i$ and $p < q$. We introduce an extra class $i'$ (so that $\I$ is replaced with $\I \cup \{i'\}$) that has the same characteristics as class~$i$. More specifically, we impose that $\Delta\mu(d_1, \ldots, d_m,i) = \Delta\mu(d_1, \ldots, d_m ,i')$ for each $d = (d_1, \ldots, d_m) \in \I^*$. In the swapping graph, for each $k \in \I \setminus \{i,i'\}$, we add an edge between classes~$i'$ and $k$ if and only if there is an edge between classes~$i$ and $k$.

Moving towards a setting where all customers have different classes, we alter the initial state~$c$ by changing the class of the customer in position~$q$ from $i$ to $i'$. While $c_p$ and $c_q$ are not equal anymore, the definition of class~$i'$ guarantees that the dynamics of the queue remain the same. This procedure can be repeated with newly selected class~$i$ and positions~$p$ and $q$ as long as there are at least two customers with the same class in state~$c$. The queue obtained once all customers have different classes is called the isomorphic queue. If $\bar c$ is the initial state of the isomorphic queue obtained by repeating this procedure, we say that state $\bar c$ in the isomorphic queue corresponds to state $c$ in the original queue.

\begin{example}\label{ex:isomorphic}
	We now illustrate the construction 	of an isomorphic queue by means of the closed \gls{pands} queue	depicted in \figurename~\ref{fig:exampleIsomorphic}. This queue has six customers belonging to three classes.
	There are two class-1 customers, three class-2 customers, and one class-3 customer. The initial state of the queue, shown in \figurename~\ref{fig:iso-2}, is $(c_1, c_2, c_3, c_4, c_5, c_6)	= (1,2,1,2,2,3)$.	This state does not adhere	to any placement order because customers of classes~1 and~2 are interleaved. To construct the isomorphic queue, we progressively eliminate pairs of equal customer classes. For example, since $c_1 = c_3 = 1$,	we introduce an extra class~$1'$ such that $\Delta\mu(d_1, \ldots, d_m, 1) = \Delta\mu(d_1, \ldots, d_m, 1')$ for each state $d = (d_1, \ldots, d_m) \in \I^*$, with $\I = \{1,2,3\}$. Moreover, in the swapping graph, we add edges between class~$1'$ and classes~$2$ and~$3$. Finally, we change the class	of the customer in position~3 to~$1'$. This procedure has no effect	on the future dynamics of the queue	but the customers in positions~1 and 3 are now the only members	of their respective classes. The result is not yet an isomorphic queue since, for example, the customers in positions~2 and~4 are both of class~2. We therefore iterate this procedure,	changing the class of the customer in position~4 into class~$2'$ and adding an edge between	class~$2'$ and classes 1, $1'$, and~3	in the swapping graph.
	After this action, only the customers in positions~2 and~5 belong to the same class. Changing the class of the customer	in position~5 to an extra class~$2''$, with the same characteristics as class~2, yields the isomorphic queue shown in	\figurename~\ref{fig:exampleIsomorphic}.
	State $(1, 2, 1', 2', 2'', 3)$ in the isomorphic queue now corresponds to state $(1,2,1,2,2,3)$	in the original queue.
	
	\begin{figure}[ht]
		\centering
		\subfloat[Swapping graph
		of the original queue.
		\label{fig:iso-1}]{%
			\begin{tikzpicture}
				\def\width{2.1cm}
				\def\height{1.7cm}
				
				\node[class, fill=green!60] (1) {1};
				\node[class, fill=orange!60] (2)
				at ($(1)+(\width,0)$) {2};
				\node[class, fill=yellow!60] (3)
				at ($(1)!.5!(2)-(0,\height)$) {3};
				
				\draw (1) -- (2) -- (3) -- (1);
				
				\node at ($(1)-(1.9cm,0)$) {};
				\node at ($(2)+(1.9cm,0)$) {};
				\node at ($(queue.south)-(0,1cm)$) {};
			\end{tikzpicture}
		}
		\qquad
		\subfloat[A state of the original queue.
		\label{fig:iso-2}]{%
			\begin{tikzpicture}
				\def\width{1.2cm}
				\def\height{1.8cm}
				
				\node[fcfs=6,
				rectangle split part fill
				={yellow!60, orange!60, orange!60, green!60, orange!60, green!60},
				] (queue) {
					\nodepart{one}{3}
					\nodepart{two}{2}
					\nodepart{three}{2}
					\nodepart{four}{1}
					\nodepart{five}{2}
					\nodepart{six}{1}
				};
				
				\node[server, anchor=west] (mu)
				at ($(queue.east)+(.1cm,0)$) {};
				
				\draw[->] ($(mu.east)+(.1cm,0)$)
				-- ($(mu.east)+(.3cm,0)$)
				|- ($(queue.south west)-(.3cm,.2cm)$)
				-- ($(queue.west)-(.3cm,0)$)
				-- ($(queue.west)-(.1cm,0)$);
			\end{tikzpicture}
		}
		\\
		\subfloat[Swapping graph of the isomorphic queue.
		\label{fig:iso-3}]{%
			\begin{tikzpicture}
				\def\width{2.3cm}
				\def\height{1.8cm}
				
				\node[class, fill=\darkgreen] (1) {1};
				\node[class, fill=\lightgreen] (11)
				at ($(1)+(.22cm,.85cm)$) {$1'$};
				\node[class, fill=\neutralorange] (2)
				at ($(1)+(\width,0)$) {$2'$};
				\node[class, fill=\darkorange] (22)
				at ($(1)+(\width,0)+(-.22cm,.85cm)$) {2};
				\node[class, fill=\lightorange] (222)
				at ($(1)+(\width,0)+(.22cm,-.85cm)$) {$2''$};
				\node[class, fill=yellow!60] (3)
				at ($(1)!.5!(2)-(0,\height)$) {3};
				
				\draw (1) -- (2);
				\draw (1) -- (22);
				\draw (1) -- (222);
				\draw (11) -- (2);
				\draw (11) -- (22);
				\draw (11) -- (222);
				\draw (1) -- (3);
				\draw (11) -- (3);
				\draw (2) -- (3);
				\draw (22) -- (3);
				\draw (222) -- (3);
				
				\node at ($(1)-(1.8cm,0)$) {};
				\node at ($(2)+(1.8cm,0)$) {};
			\end{tikzpicture}
		}
		\qquad
		\subfloat[State of the isomorphic queue
		corresponding to \figurename~\ref{fig:iso-2}.
		\label{fig:iso-4}]{%
			\begin{tikzpicture}
				\def\width{1.2cm}
				\def\height{1.8cm}
				
				\node[fcfs=6,
				rectangle split part fill
				={yellow!60, \lightorange, \neutralorange, \lightgreen, \darkorange, \darkgreen},
				] (queue) {
					\nodepart{one}{3}
					\nodepart{two}{\enspace}
					\nodepart{three}{\enspace}
					\nodepart{four}{\enspace}
					\nodepart{five}{2}
					\nodepart{six}{1}
				};
				
				\node
				at ($(queue.two)+(.1cm,0)$)
				{$2''$};
				\node
				at ($(queue.three)+(.1cm,0)$)
				{$2'$};
				\node
				at ($(queue.four)+(.1cm,0)$)
				{$1'$};
				
				\node[server, anchor=west] (mu)
				at ($(queue.east)+(.1cm,0)$) {};
				
				\draw[->] ($(mu.east)+(.1cm,0)$)
				-- ($(mu.east)+(.3cm,0)$)
				|- ($(queue.south west)-(.3cm,.2cm)$)
				-- ($(queue.west)-(.3cm,0)$)
				-- ($(queue.west)-(.1cm,0)$);
			\end{tikzpicture}
		}
		\caption{A closed \gls{pands} queue
			and its isomorphic queue.
			The colors and shades
			are visual aids that help distinguish classes.%
		}
		\label{fig:exampleIsomorphic}
	\end{figure}
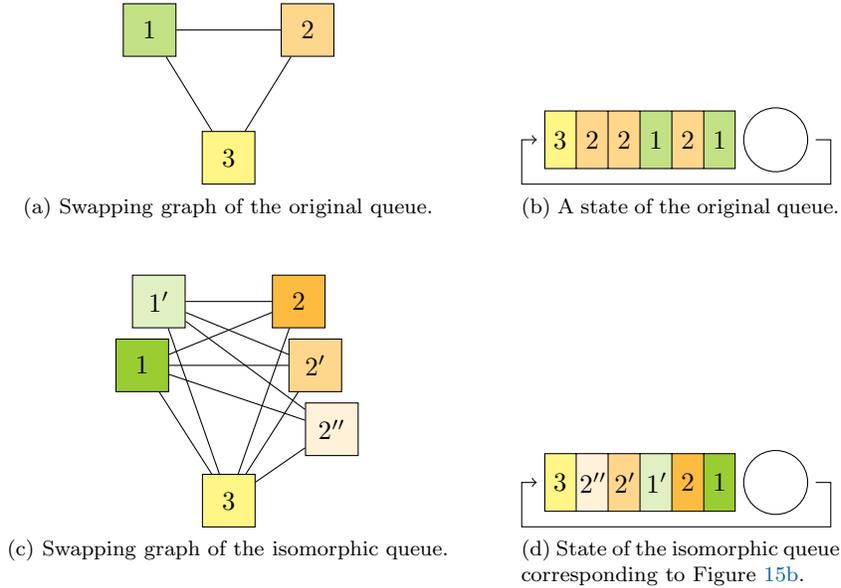
\end{example}

While the isomorphic queue has by construction the same dynamics as the original \gls{pands} queue, it has the following useful property.

\begin{lemma}\label{lem:isomorphicUnique}
	Every state of an isomorphic queue adheres to a unique placement order. 
\end{lemma}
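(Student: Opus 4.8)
The plan is to exploit the defining feature of the isomorphic queue, namely that every class is represented by exactly one customer, so that a state $\bar c = (\bar c_1, \ldots, \bar c_n)$ of the isomorphic queue is simply a linear enumeration of all of its classes, with no repetitions; write $\mathrm{pos}(i)$ for the unique position $p$ with $\bar c_p = i$.

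First I would establish the existence of a placement order to which $\bar c$ adheres. Orient each edge $\{i,j\}$ of the swapping graph of the isomorphic queue by setting $i \to j$ whenever $\mathrm{pos}(i) < \mathrm{pos}(j)$; since distinct classes occupy distinct positions, this orients every edge. This orientation is acyclic: along any directed path the positions of the successive classes are strictly increasing, so no directed cycle can exist. Hence this is a placement graph, and its reachability relation is a placement order $\prec$; moreover, by construction $i \prec j$ implies $\mathrm{pos}(i) < \mathrm{pos}(j)$. Adherence of $\bar c$ to $\prec$ then follows immediately: if $p < q$ with $\bar c_q \prec \bar c_p$, then $\mathrm{pos}(\bar c_q) < \mathrm{pos}(\bar c_p)$, that is, $q < p$, a contradiction; hence $\bar c_q \nprec \bar c_p$ for all $p < q$, which is exactly the adherence condition.

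For uniqueness I would appeal to the observation already made in Section~\ref{subsubsec:one-model} (just after Example~\ref{ex:one}): a state can adhere to at most one placement order, because if $\bar c$ adheres to $\prec$ then, for every pair of classes $i,j$ that are neighbours in the swapping graph, exactly one of $i \prec j$ and $j \prec i$ holds, and the adherence condition forces it to be the one consistent with the relative order of $i$ and $j$ in $\bar c$; this pins down the orientation of every edge of the swapping graph, hence $\prec$ itself. Combining this with the existence part gives that $\bar c$ adheres to exactly one placement order.

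This argument is essentially routine; the only point requiring any care is verifying that the position-induced orientation is a legitimate placement graph, i.e.\ that it is both acyclic (immediate from the strict monotonicity of positions along directed paths) and an orientation of the whole swapping graph (which uses that the isomorphic queue has no repeated classes and, as throughout Section~\ref{sec:closed}, a loop-free swapping graph). No genuine obstacle is anticipated.
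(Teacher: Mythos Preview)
Your proposal is correct and follows essentially the same approach as the paper: orient each edge of the swapping graph according to the relative positions of the (unique) customers of the two endpoint classes in $\bar c$, observe that this orientation is acyclic because positions increase strictly along directed paths, and deduce uniqueness from the fact that adherence forces the orientation of every edge. Your write-up is in fact slightly more explicit than the paper's (e.g.\ the use of the $\mathrm{pos}$ function and the verification of adherence), but there is no substantive difference.
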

\begin{proof}
	To prove this lemma, we show how to construct a placement graph for any state, which defines the placement order to which the state adheres, and moreover show that it is unique. Recall that a placement graph is an acyclic orientation of the swapping graph, and consists of as many vertices as there are customers in the isomorphic queue. Therefore, to construct the placement graph, the edges of the swapping graph need to be given an orientation. Recall that all customer classes appear exactly once in a state of an isomorphic queue. Therefore, we have that, for any two customer classes $i$ and $j$ that are connected by an edge in the swapping graph, the placement order $\prec$ of the placement graph to be constructed should satisfy $i \prec j$ or $j \prec i$, depending on whether or not the class-$i$ customer is nearer to the front of the queue than the class-$j$ customer. Orienting each edge accordingly yields a directed graph. Due to the transitivity of the order of customers in the state, this directed graph must be acyclic, and therefore it is a placement graph to which the state adheres. Note that, if any edge's orientation corresponding to this placement order were flipped, the corresponding pair of positions in the state would violate the new orientation. This proves the uniqueness, finalizing the proof.
\end{proof}

\subsection{Stationary distribution}

Now that the isomorphic queue has been introduced,
we can derive the stationary distribution
for a closed \gls{pands} queue
of which the initial state is not necessarily adhering.
We can do this because
the initial state of the isomorphic queue
does necessarily adhere to a placement order, say $\prec$.
Let $\bar\ell$ be the macrostate
of the isomorphic queue corresponding
to the initial macrostate $\ell$
of the closed \gls{pands} queue.
We denote by $\bar \C$ the set of states $\bar c$
in the isomorphic queue that adhere to $\prec$
and satisfy $|\bar c| = \bar\ell$.
Also let $\C$ denote
the set of states
of the original closed \gls{pands} queue
to which the states
in $\bar \C$ correspond.
Finally, for each $c \in \C$,
let $\bar\C_c\subset\bar \C$ denote
the set of states $\bar c\in\bar \C$ in the isomorphic queue
that correspond to state~$c$ in the original queue.
Note that $\bar\C_c$ may consist
of multiple elements and that,
considering all $c \in \C$,
the sets $\bar\C_c$ form a partition of $\bar \C$.
For example, a state $(1,2,2,3)$
of a closed \gls{pands} queue
may have corresponding
states $(1,2,2',3)$ and $(1,2',2,3)$
in the isomorphic queue,
both adhering to the same placement order.
Importantly, since all states in $\C$
correspond to the same macrostate $\ell$,
the sets $\bar\C_c$ for all $c \in \C$
have the same cardinality.
These definitions allow us to derive
the stationary distribution of
the Markov process associated with
the state of the original \gls{pands} queue.

\begin{theorem} \label{theo:iso-picd}
	The results of Theorem~\ref{theo:one-picd}
	remain valid if $\C$
	refers to the set of states
	of the original \gls{pands} queue
	to which the isomorphic states
	in $\bar\C$ correspond.
\end{theorem}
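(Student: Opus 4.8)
The plan is to transfer Theorem~\ref{theo:one-picd} from the isomorphic queue back to the original one. By Lemma~\ref{lem:isomorphicUnique} the initial state of the isomorphic queue adheres to a unique placement order $\prec$, and by construction the isomorphic queue has the same Markov dynamics as the original closed \gls{pands} queue. Hence Theorem~\ref{theo:one-picd} applies to the isomorphic queue (and Remark~\ref{remark1} shows that the $\Phi$-formula is a stationary distribution even without irreducibility): writing $\Phi_{\mathrm{iso}}$ for its balance function and $\mu_{\mathrm{iso}}$ for its rate function, the probability measure $\bar\pi$ on $\bar\C$ with $\bar\pi(\bar c) = \Phi_{\mathrm{iso}}(\bar c)/\sum_{\bar d \in \bar\C}\Phi_{\mathrm{iso}}(\bar d)$ is a stationary distribution of the Markov process associated with the state of the isomorphic queue. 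First I would record two elementary facts. Since each extra class $i'$ introduced in the construction has the same per-position service-rate increments as the class $i$ it copies, the rate function satisfies $\mu_{\mathrm{iso}}(\bar c_1, \ldots, \bar c_p) = \mu(c_1, \ldots, c_p)$ for every $p$ whenever $\bar c = (\bar c_1, \ldots, \bar c_n)$ corresponds to $c = (c_1, \ldots, c_n)$; by~\eqref{eq:oi-Phic} this gives $\Phi_{\mathrm{iso}}(\bar c) = \Phi(c)$ for every such pair. And, as noted just before the statement, the sets $\bar\C_c$ for $c \in \C$ partition $\bar\C$ and all have the same cardinality, which I denote by $M$.

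Next I would argue that the natural projection $\Psi$ from $\bar\C$ onto $\C$ — which replaces every copied class by the class it copies — is a strong lumping of the Markov process, and that the lumped chain is exactly the original closed \gls{pands} queue on $\C$. Indeed, the isomorphic queue simulates the original one step by step: for every isomorphic state $\bar c$ and every position $p$, the transition triggered by the service completion of the customer in position~$p$ occurs at the same rate $\mu_{\mathrm{iso}}(\bar c_1, \ldots, \bar c_p) = \mu(c_1, \ldots, c_p)$ as in the original queue, and leads to an isomorphic state whose $\Psi$-image is precisely the state reached from $\Psi(\bar c)$ by the same service completion. This uses that adjacency in the two swapping graphs agrees under $\Psi$ — in particular that distinct copies of one class are never neighbours, consistently with the absence of loops — so that the P\&S movement pattern (which positions are swapped, and hence the projected target state and its rate) depends only on the projected state. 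Consequently, any two isomorphic states with the same projection have equal aggregated transition rates into each lump, $\Psi$ is a strong lumping, and the lumped chain coincides with the original process on $\C$ (which is closed under its dynamics, being the $\Psi$-image of the closed set $\bar\C$). It is then standard that the pushforward $\pi := \Psi_*\bar\pi$, i.e. $\pi(c) = \sum_{\bar c \in \bar\C_c}\bar\pi(\bar c)$, is a stationary distribution of the original process; since it is assumed irreducible on $\C$, $\pi$ is its unique stationary distribution.

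It only remains to evaluate $\pi$. Using $\Phi_{\mathrm{iso}}(\bar c) = \Phi(c)$ for $\bar c \in \bar\C_c$, that $|\bar\C_c| = M$ for every $c \in \C$, and that the sets $\bar\C_d$ partition $\bar\C$,
\begin{equation*}
	\pi(c)
	= \sum_{\bar c \in \bar\C_c} \frac{\Phi_{\mathrm{iso}}(\bar c)}{\sum_{\bar d \in \bar\C} \Phi_{\mathrm{iso}}(\bar d)}
	= \frac{M\,\Phi(c)}{\sum_{d \in \C} M\,\Phi(d)}
	= \frac{\Phi(c)}{\sum_{d \in \C} \Phi(d)},
	\quad \forall c \in \C,
\end{equation*}
which is the expression~\eqref{eq:one-pic}. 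The main obstacle is the lumping step: although the construction already guarantees that the isomorphic queue "has the same dynamics", one must make precise that this amounts to $\Psi$ being a \emph{strong} lumping so that stationarity transfers to the pushforward; the cardinality count and the identity $\Phi_{\mathrm{iso}} = \Phi\circ\Psi$ are then routine.
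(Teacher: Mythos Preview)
Your proof is correct and follows essentially the same route as the paper: apply Theorem~\ref{theo:one-picd} to the isomorphic queue (using Lemma~\ref{lem:isomorphicUnique}), push the stationary distribution forward via the projection that forgets the primes, and simplify using $\Phi_{\mathrm{iso}}(\bar c)=\Phi(c)$ together with the fact that all fibres $\bar\C_c$ have the same cardinality. The only difference is that you make the ``same dynamics'' step explicit as a strong-lumping argument and invoke Remark~\ref{remark1} to sidestep irreducibility of the isomorphic queue, whereas the paper simply asserts that the dynamics coincide and directly writes $\pi(c)=\sum_{\bar c\in\bar\C_c}\bar\pi(\bar c)$; your version is a more careful rendering of the same idea.
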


\begin{proof}
	By Lemma~\ref{lem:isomorphicUnique}, the initial state of the isomorphic queue must adhere to a placement order that we denote by $\prec$.
	Therefore, applying Theorem~\ref{theo:one-picd} to the isomorphic queue shows that the stationary distribution of the Markov process associated with its state is given by
	\begin{equation*}
		\bar \pi(\bar c)
		= \frac{\bar\Phi(\bar c)}
		{\sum_{\bar d \in \bar \C} \bar\Phi(\bar d)}
		= \frac{\bar\Phi(\bar c)}
		{\sum_{d \in \C} \sum_{\bar d \in \bar\C_d}
			\bar\Phi(\bar d)},
		\quad \forall \bar c \in \bar\C,
	\end{equation*}
	where $\bar\Phi(\bar c) = \Phi(c)$
	for each $c \in \C$ and $\bar c \in \bar\C_c$.
	
	By construction of the isomorphic queue, the dynamics of the original \gls{pands} queue and its isomorphic queue are the same. As such, the stationary probability of the original queue residing in state~$c$ is equal to the stationary probability of the isomorphic queue residing in any state of $\bar\C_c$, leading to:
	\begin{equation*}\label{eq:surjectiveStat}
		\pi(c)
		= \sum_{\bar c\in\bar\C_c} \bar\pi(\bar c)
		= \frac{\sum_{\bar c \in \bar\C_c} \bar\Phi(\bar c)}
		{\sum_{d \in \C} \sum_{\bar d \in \bar\C_d}
			\bar\Phi(\bar d)},
		\quad \forall c \in \C.
	\end{equation*}
	Equation~\eqref{eq:one-pic} follows by recalling that
	$\bar\Phi(\bar c) = \Phi(c)$
	for each $\bar c \in \bar\C_c$
	and that all sets $\bar\C_c$
	have the same cardinality.
\end{proof}

\begin{remark}
	By Remark~\ref{remark1}, the isomorphic queue cannot have transient states. Since an isomorphic queue with identical dynamics can be constructed for any closed \gls{pands} queue, this implies that the Markov process associated with the state of any closed \gls{pands} queue, regardless of any adherence of its initial state, cannot have transient states either. As a result, Theorem~\ref{theo:iso-picd} now implies a partition of the complete state space $\I^*$ in closed communicating classes, each of which corresponds to a set $\bar \C$ defined by a combination of a macrostate $\bar\ell$ and a particular placement order $\prec$ in the isomorphic queue.
\end{remark}

\end{document}